\documentclass[12pt]{amsart}

\usepackage[letterpaper,margin=1in]{geometry}
\IfFileExists{fourier.sty}{\usepackage{fourier}}{\usepackage{utopia}}
\usepackage{amsmath, amssymb, amsthm, mathtools}
\usepackage{bm}
\IfFileExists{bbm.sty}{\usepackage{bbm}}{}
\usepackage{setspace}
\usepackage{url}
\usepackage{enumitem}
\usepackage{parskip}
\usepackage{xcolor}
\usepackage{booktabs}
\usepackage{placeins}

\usepackage[numbers,sort&compress]{natbib}

\newcommand{\cH}{\mathcal{H}}
\newcommand{\cQ}{\mathcal{Q}}
\newcommand{\cE}{\mathcal{E}}
\newcommand{\cF}{\mathcal{F}}
\newcommand{\E}{\mathrm{E}}
\newcommand{\Pp}{\mathrm{P}}
\newcommand{\Pn}{\mathbb{P}_n}
\newcommand{\Gn}{\mathbb{G}_n}
\newcommand{\Var}{\mathrm{Var}}
\newcommand{\R}{\mathbb{R}}
\IfFileExists{bbm.sty}{}{}
\newcommand{\norm}[1]{\left\lVert #1 \right\rVert}

\newcommand{\lessssim}{\mathrel{\lesssim}}
\newcommand{\conv}{\operatorname{conv}}

\theoremstyle{plain}
\newtheorem{theorem}{Theorem}
\newtheorem{proposition}{Proposition}
\newtheorem{lemma}{Lemma}
\theoremstyle{definition}
\newtheorem{assumption}{Assumption}
\newtheorem{definition}{Definition}
\theoremstyle{remark}
\newtheorem{remark}{Remark}

\title{Pragmatic DML with AI-Learned Representations}
\date{October 2026}

\author[Aradillas Fern\'andez et al.]{Andr\'es Aradillas Fern\'andez,
Victor Chernozhukov, Carlos Cinelli, Sven Klaassen, Whitney Newey,
Martin Spindler, Jan Teichert-Kluge, Suhas Vijaykumar}

\begin{document}
\setstretch{1.25}

\begin{abstract}
Text, images, and other rich covariates are increasingly compressed into AI-learned representations and then used as controls in causal analysis. We study when this approach is valid and develop a practical framework for causal inference with learned representations. For a broad class of estimands, an imperfect representation distorts the target causal parameter by the product of two representation errors: one in the outcome regression and one in the balancing weight (or Riesz representer). This yields three constructive results. First, cross-fitted double machine learning (DML) provides valid Wald inference for the representation-dependent target. When representation errors are small, the same interval covers the causal parameter, and it can even attain the semiparametric efficiency bound. Second, fold-wise representation learning (or fine-tuning) is compatible with DML inference for the causal parameter. To this end, we develop convex- and star-aggregation pipelines for learning and combining representations. Third, when representation errors are substantial, we can provide interpretable sensitivity regions and root-$n$ inference for their endpoints. In a multi-modal demand application, seven representation-specific estimates and their star aggregate all imply a negative near-unit elasticity for rank-based price response, and the result remains robust over the reported sensitivity grid.
\end{abstract}

\maketitle

\section{Introduction}

Modern datasets rarely arrive in the tidy form imagined by classical regression theory.
Covariates are often complex objects: a document describing an individual or a firm, a product image, a sequence of transactions, or a panel of high-dimensional attributes.
To use these objects in empirical analysis, researchers often  compress them into learned representations---embeddings, model predictions, or vectors of features produced by large foundation models---and then use those representations as controls.
% SV meta point: this workflow actually opens up research questions
This workflow 
% is convenient and often unavoidable, 
opens the door to new possibilities for research,
and it has become increasingly prominent as embeddings and large prediction models have improved \cite{mikolov2013efficient,pennington2014glove,devlin2019bert,mann2020language,athey2019machine}.

To formalize this workflow, write $Y$ for an outcome, $D$ for a treatment or policy variable, and $Z$ for the rich covariate object, and let $X_\phi=x(Z,\phi)$ denote its learned representation.
The subsequent analysis follows a familiar pattern: one chooses a target such as an average treatment effect (ATE), fits flexible models for the relevant nuisance objects using $(D,X_\phi)$ as controls, and reports a debiased machine learning (DML) estimate and its standard error.
Despite this familiar pattern, replacing $Z$ with $X_\phi$ changes two objects at once: the regression structure used for adjustment and the population parameter around which the DML procedure is centered.
% Our framework makes this distinction practical.

To make the discussion concrete, consider the binary-treatment ATE.
Let $g^\star(d,z)=\E[Y\mid D=d,Z=z]$ denote the full-information regression function and let $g_\phi(d,x)=\E[Y\mid D=d,X_\phi=x]$ be its representation-based counterpart.
Under unconfoundedness, the ATE satisfies $\theta^\star=\E[g^\star(1,Z)-g^\star(0,Z)]$.
The corresponding representation-based estimand is $\theta_\phi=\E[g_\phi(1,X_\phi)-g_\phi(0,X_\phi)]$.
We call $\theta_\phi$ the \emph{average predictive effect} (APE): it is the conditional average treatment contrast when the analyst conditions only on $(D,X_\phi)$.
The central question, aimed at applied work, asks when DML inference computed using $(D,X_\phi)$ can be interpreted as valid inference for the full-information target $\theta^\star$.

Put more plainly, when working with embeddings coming from a learned representation, there could be substantial information loss. Imagine compressing an image of a dog over and over again. After compressing it enough, it may no longer be recognizable as a dog. In the same way, it could be that the representation learned via some machine learning algorithm loses crucial components contributing to the true ATE, leaving the APE that is retrieved to not be valid for the purposes of inference. We therefore seek conditions under which we may call the representation \textit{information-lossless}, in a way we make precise.

A central idea we exploit is to view representation learning as a problem of omitted information.
For a large class of causal estimands, this lets us decompose the gap between the full-information and representation-based targets as a product of two representation errors: one in the outcome regression and one in the balancing weight.
Precisely, this decomposition applies to targets that are continuous linear functionals of a regression, as in \cite{chernozhukov2022riesz}; this class includes the ATE and many other causal and policy parameters.

Let $W=(D,Z)$ denote the full information and $W_\phi=(D,X_\phi)$ the information retained by the representation.
Let $g^\star$ and $g_\phi$ be the corresponding regressions, and let $\alpha^\star$ and $\alpha_\phi$ be their Riesz representers (balancing weights).
A version of the omitted-information identity in \cite{chernozhukov2024long} then gives
\begin{equation}
\theta^\star-\theta_\phi
=
\E\!\left[\bigl(g^\star-g_\phi\bigr)
          \bigl(\alpha^\star-\alpha_\phi\bigr)\right].
\label{eq:intro-ovb}
\end{equation}
Here and below, we suppress arguments: the full-information objects are evaluated at $(D,Z)$, while their representation-based counterparts are evaluated at $(D,X_\phi)$.
Thus $g^\star-g_\phi$ measures the regression information lost through the representation, and $\alpha^\star-\alpha_\phi$ measures the corresponding loss in the balancing weight.
Here, the target shifts only to the extent that the representation distorts both objects in common directions.
Moreover, Cauchy--Schwarz gives
\[
|\theta^\star-\theta_\phi|
\le
\norm{g^\star-g_\phi}_{P,2}\,
\norm{\alpha^\star-\alpha_\phi}_{P,2}.
\]

The identity \eqref{eq:intro-ovb} leads to a simple applied dichotomy.
Standard DML theory delivers asymptotically normal inference for $\theta_\phi$ under familiar conditions \cite{chernozhukov2018}.
If the representation is \emph{target-adaptive} for the estimand, meaning that $\sqrt n\,|\theta^\star-\theta_\phi|=o(1)$, then the target gap is dominated by sampling noise.
In that case, the usual DML expansion around $\theta_\phi$ automatically holds around $\theta^\star$, so a standard DML interval computed using $(D,X_\phi)$ may be reported as inference for the full-information target.
This answer is deliberately target-specific: the representation need not reconstruct $Z$ globally; it need only preserve, at the relevant rate, the outcome-regression and balancing components that jointly determine the estimand.
A further refinement concerns efficiency.
Even when target adaptivity holds, the relevant asymptotic variance is that of the \emph{short} score built on $X_\phi$, which may exceed the semiparametric efficiency bound for $\theta^\star$.
If, in addition, the representation is \emph{adaptively efficient}---meaning that both the regression $g_\phi$ and the representer $\alpha_\phi$ approach their full-information counterparts---then the short score converges to the efficient influence function and inference becomes semiparametrically efficient.
When target adaptivity and adaptive efficiency hold jointly, we say the representation is \emph{information-lossless} for the target.
The mixed-bias identity also extends the analysis beyond target adaptivity, producing sensitivity bounds for $\theta^\star$ under transparent, domain-motivated restrictions on the magnitude and alignment of representation loss, and
the same DML machinery delivers inference for its endpoints.

A second practical issue is that $\phi$ is rarely a classical parameter.
Embeddings are often produced by black-box pipelines---fine-tuning a large model, selecting prompts, or training a representation jointly with a prediction head.
Our analysis treats this entire representation step as part of the first-stage learning problem and therefore avoids a differentiability assumption or first-order expansion for $\hat\phi$.
This requires that representations be tuned fold-wise, with fixed embeddings as a special case.
Furthermore, by doing this, our results can be thought of being broadly without loss of generality of the particular method for learning the embedding.
Importantly, fold-wise tuning does not require stability or differentiability of the representation parameter $\phi$ itself.
Inference instead relies on Neyman-orthogonal scores and suitable prediction rates for the resulting nuisance learners.
Section~\ref{sec:bounds} uses the same idea to develop sensitivity bounds under learned representations.

For a reader who wants to use the paper as a guide to practice, the preceding results suggest a simple workflow.
\begin{enumerate}[leftmargin=*,label=(\roman*)]
\item Compute the usual cross-fitted DML point estimate and standard error using the representation $X_\phi$ (or a fold-wise tuned $X_{\hat\phi}$).
\item Interpret this output as inference for the representation-based target, or for the corresponding full-information target $\theta^\star$ under target adaptivity.
\item If target adaptivity is not comfortable to defend, replace the single-number interpretation with a sensitivity analysis that reports an identified interval for $\theta^\star$ together with inference for its endpoints.
\end{enumerate}
All three steps can be carried out with a fixed embedding or with fold-wise fine-tuning; the fixed-embedding case is obtained by taking the representation learner to be constant.
Section~2 formalizes the adaptive interpretation, Section~3 discusses how to design regression and Riesz learners over the common information space $\mathcal H$ so that the product-rate condition is plausible, and Section~4 develops sensitivity bounds and endpoint inference.

\subsection*{Contributions and Related Work}
Our contribution is pragmatic and constructive.
First, we derive an omitted-information (mixed-bias) identity for a broad class of Riesz-linear targets, which isolates the two essential representation errors.
Second, we establish DML inference for end-to-end foldwise pipelines and for the estimable components of the resulting sensitivity bounds without requiring a first-order expansion of the representation learner.
Finally, we develop a learner-design perspective over the induced information space $\mathcal H$ and give two complementary routes---proper learning over star-shaped approximations and improper learning via star aggregation---that deliver first-stage rates compatible with the DML product-rate requirements.
We also show that the aggregation weights may be fitted to pooled out-of-fold predictions rather than inside each training fold, at the cost of a factor $\sqrt{\log M}$ for star weights and $\sqrt{M\log n}$ for convex weights in one remainder term; this removes the need for nested sample splitting.

These results connect the orthogonal-score and cross-fitting literature \cite{chernozhukov2018,belloni2014high} with the representer calculus underlying Riesz regression \cite{chernozhukov2022riesz} and with empirical work that uses text and other learned representations in causal analysis \cite{wooddoughty2018challenges,sridhar2022causal}.
A multimodal demand application illustrates how the framework can be used to compare, aggregate, and assess the sensitivity of alternative representations.
We also pursue an influence-function correction approach in a companion paper; the present paper focuses on omitted-information identities, adaptive inference, and sensitivity analysis for learned representations.
Together, the results provide a target-aware way to choose, tune, aggregate, and evaluate embeddings while retaining familiar DML inference.

%To orient the reader, we briefly describe the structure of the paper.
%Section~2 develops the adaptive-inference message.
%Section~3 %addresses verification and learner design over the common information space $\mathcal H$, organizing the discussion into two complementary routes: proper learning over star-shaped approximations and improper learning via star aggregation.
%Section~4 develops sensitivity regions for the full-information target and derives DML inference for their endpoints in the fine-tuned setting.
%Section~\ref{sec:empirical} translates the framework into a multimodal demand illustration.
%Each representation defines a short target; the outcome regression and Riesz representer are learned separately and combined through an orthogonal score; star aggregation operates over the menu of candidate pipelines; and a deliberate long--short comparison calibrates the sensitivity analysis.
%Section~\ref{sec:conclusion} concludes.
%The appendix contains all proofs.

\subsection*{Notation} We use the following notation.
We observe an i.i.d.\ sample $O_1,\dots,O_n$ of a generic observation $O=(Y,D,Z)$ with law $P$ on a measurable space $(\mathcal O,\mathcal A)$.
For any measurable function $f:\mathcal O\to\R$ such that $Pf:=\int f(o)\,dP(o)$ is finite, we write $Pf$ for its expectation.
For $q\ge 1$, we write $\norm{f}_{P,q}:=(P|f|^q)^{1/q}$ and $\norm{f}_\infty:=\sup_{o\in\mathcal O}|f(o)|$ when the supremum exists.
We write $a_n=o_\Pp(b_n)$ if $a_n/b_n\to 0$ in probability and $a_n=O_\Pp(b_n)$ if $a_n/b_n$ is tight; all stochastic order statements are with respect to $P$.
We use $\overset{p}{\to}$ for convergence in probability and $\Rightarrow$ for weak convergence.
These conventions follow standard usage in asymptotic statistics \cite{vanderVaart1998}.

\section{Adaptive Inference with Learned Representations}

In applications, the analyst often freezes a representation and then runs cross-fitted DML using $(D,X_\phi)$ as controls.
We start by recalling the standard DML expansion for the corresponding representation-based target $\theta_\phi$.
We then turn to the full-information target $\theta^\star$ and show how it differs from $\theta_\phi$ through an omitted-information identity.
That identity motivates two conditions---target adaptivity and adaptive efficiency---under which the representation-based DML output can be interpreted as valid and efficient inference for $\theta^\star$.
Finally, we address the practically important case where the representation is tuned using the sample.
We show that as long as tuning is confined to the training folds, the same first-order DML theory applies to the population target defined by the complete tuning and estimation procedure.

\subsection{Static setting: fixed representations}
\label{sec:warmup}

\subsubsection{Setup: full covariates and learned representations}
\label{sec:setup}

Let $O=(Y,D,Z)\sim P$, where $Y\in\R$ is an outcome, $D$ is a treatment or policy variable, and $Z$ is a potentially high-dimensional object (text, images, panels, or other structured data).
We observe $n$ i.i.d.\ draws $O_1,\dots,O_n$.

Let
\[
X_\phi := x(Z,\phi)\in\R^p
\]
be a learned representation of $Z$, where $\phi$ indexes the embedding rule via a finite (possibly very high) dimensional parameter.
Throughout Sections~\ref{sec:warmup} and \ref{sec:omitted} we condition on the embedding rule and treat it as fixed for the purpose of inference.
This view is appropriate when the embedding model is frozen (for example, a pre-trained foundation model) or when it is trained on data that are independent of the sample used for estimation and inference.
While our formalization allows for a drifting sequence $\phi=\phi_n$ of parameters, we do treat the resulting map $z\mapsto x(z,\phi_n)$ as nonrandom, at least to start the discussion.
Section~\ref{sec:swap} returns to the case where the representation is tuned or learned on the sample.

It is helpful to keep track of the two information sets:
\[
W := (D,Z)
\qquad\text{and}\qquad
W_\phi := (D,X_\phi).
\]
The full-information regression is
\[
g^\star(w) := \E[Y\mid W=w],
\]
and the representation-based regression is
\[
g_\phi(w_\phi) := \E[Y\mid W_\phi=w_\phi].
\]
Since $W_\phi$ is a function of $W$, the representation-based regression is the $L^2(P)$ projection of $g^\star(W)$ onto $\sigma(W_\phi)$, i.e.\ $g_\phi(W_\phi)=\E[g^\star(W)\mid W_\phi]$.

We focus on targets that can be written as continuous linear functionals of the regression.
Concretely, let $m(\cdot,\cdot)$ be measurable and linear in its second argument, and define the \emph{full-information} (``long'') target as
\begin{equation}
\theta^\star := \E\!\left[m\!\left(O,g^\star\right)\right].
\label{eq:theta-long}
\end{equation}
After replacing $Z$ by $X_\phi$, the corresponding \emph{representation-based} (``short'') target is
\begin{equation}
\theta_\phi := \E\!\left[m\!\left(O,g_\phi\right)\right].
\label{eq:theta-short}
\end{equation}
To keep this comparison meaningful, we assume the following mild compatibility condition: whenever $h$ is a square-integrable function of $W_\phi$, the random variable $m(O,h)$ is $\sigma(W_\phi)$-measurable.
In leading causal examples this holds because $m$ depends on $W$ only through the values of $h$ at observable arguments.\footnote{For example, for the ATE we have $m((y,d,z),h)=h(1,z)-h(0,z)$, and if $h$ is a function of $(d,X_\phi)$ then $m((Y,D,Z),h)$ is a function of $(D,X_\phi)$ as well.}

The class \eqref{eq:theta-long} includes many common causal and policy parameters, as previously discussed in \cite{chernozhukov2022riesz} and other references.
For instance, if $D\in\{0,1\}$ and we target the average treatment effect, then the full-information target is the ATE,
$\theta^\star=\E[g^\star(1,Z)-g^\star(0,Z)]$,
whereas the representation-based target is the corresponding average predictive effect (APE),
$\theta_\phi=\E[g_\phi(1,X_\phi)-g_\phi(0,X_\phi)]$.
These two agree (for inference) precisely in the target-adaptive regime discussed below.
Weighted average treatment effects, average potential outcomes, and many policy effects defined as averages of $g^\star(d,z)$ against a known weight are also covered.

\subsubsection{Riesz representers and an omitted-information identity}
\label{sec:identity}

To compare $\theta^\star$ and $\theta_\phi$, it is convenient to express each linear functional as an inner product.
In $L^2$ language, this inner product is provided by the Riesz representer.
Suppose the map $h\mapsto \E[m(O,h)]$ is a continuous linear functional on the Hilbert space $L^2(P_W)$.
Then there exists a unique $\alpha^\star\in L^2(P_W)$ such that
\begin{equation}
\E[m(O,h)] = \E[h(W)\alpha^\star(W)]
\qquad\text{for all } h\in L^2(P_W).
\label{eq:rr-long}
\end{equation}
We call $\alpha^\star$ the \emph{Riesz representer} (or balancing weight) for the full-information functional.
Likewise, under the analogous continuity condition on $L^2(P_{W_\phi})$, there exists a unique $\alpha_\phi\in L^2(P_{W_\phi})$ such that
\begin{equation}
\E[m(O,h)] = \E[h(W_\phi)\alpha_\phi(W_\phi)]
\qquad\text{for all } h \in L^2(P_{W_\phi}).
\label{eq:rr-short}
\end{equation}
This representer-based view is standard in modern treatments of debiasing for generic linear functionals; see, for example, \cite{chernozhukov2022riesz}.

In many familiar causal problems, $\alpha^\star$ and $\alpha_\phi$ reduce to recognizable weights.
For the binary-treatment ATE (full information) and its representation-based analog (APE),
\[
\alpha^\star(D,Z)=\frac{D}{e(Z)}-\frac{1-D}{1-e(Z)},
\qquad
\alpha_\phi(D,X_\phi)=\frac{D}{e_\phi(X_\phi)}-\frac{1-D}{1-e_\phi(X_\phi)},
\]
where $e(z)=\Pp(D=1\mid Z=z)$ and $e_\phi(x)=\Pp(D=1\mid X_\phi=x)$.

The next result is the key identity.
It says that the difference between the long and short targets is driven by the interaction of two approximation errors: the error from compressing $Z$ into $X_\phi$ in the outcome regression, and the corresponding error in the balancing weight.
More precisely, Proposition~\ref{prop:ovb} is the learned-representation specialization of Theorem~2 in \cite{chernozhukov2024long}.\footnote{
Their long and short objects become
\[
(\theta,\theta_s,g,g_s,\alpha,\alpha_s)
=
(\theta^\star,\theta_\phi,g^\star,g_\phi,\alpha^\star,\alpha_\phi),
\]
with the short information set $\sigma(W_\phi)$ nested in the long information set $\sigma(W)$.
Under this identification, \eqref{eq:ovb-identity} reads $\theta-\theta_s=\E[(g-g_s)(\alpha-\alpha_s)]$.}

\begin{proposition}[Omitted-information identity]
\label{prop:ovb}
Assume \eqref{eq:rr-long}--\eqref{eq:rr-short} hold and $Y\in L^2(P)$.
Write $g^\star=g^\star(W)$, $g_\phi=g_\phi(W_\phi)$, $\alpha^\star=\alpha^\star(W)$, and $\alpha_\phi=\alpha_\phi(W_\phi)$.
Then
\begin{equation}
\theta^\star-\theta_\phi
=
\E\!\left[(g^\star-g_\phi)(\alpha^\star-\alpha_\phi)\right].
\label{eq:ovb-identity}
\end{equation}
Consequently,
\begin{equation}
\bigl|\theta^\star-\theta_\phi\bigr|
\le
\norm{g^\star-g_\phi}_{P,2}\;
\norm{\alpha^\star-\alpha_\phi}_{P,2}.
\label{eq:cs-bound}
\end{equation}
\end{proposition}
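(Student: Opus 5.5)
The plan is to expand the right-hand side of \eqref{eq:ovb-identity} into four cross moments and evaluate each one using the two Riesz identities \eqref{eq:rr-long}--\eqref{eq:rr-short} together with the projection property $g_\phi(W_\phi)=\E[g^\star(W)\mid W_\phi]$ from Section~\ref{sec:setup}. All four products are integrable: $Y\in L^2(P)$ gives $g^\star\in L^2(P_W)$ by Jensen, hence $g_\phi\in L^2(P_{W_\phi})$, and the representers are square integrable by construction. Each product of two of these functions is therefore integrable by Cauchy--Schwarz, so the expansion is legitimate.

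We write
\[
\E[(g^\star-g_\phi)(\alpha^\star-\alpha_\phi)]
=\E[g^\star\alpha^\star]-\E[g_\phi\alpha^\star]-\E[g^\star\alpha_\phi]+\E[g_\phi\alpha_\phi]
\]
and evaluate the terms in turn.
\begin{itemize}
\item By \eqref{eq:rr-long} with $h=g^\star$, the first term equals $\E[m(O,g^\star)]=\theta^\star$.
\item A function of $W_\phi$ is in particular a function of $W$, since $W_\phi=(D,x(Z,\phi))$. So $g_\phi\in L^2(P_W)$, and \eqref{eq:rr-long} with $h=g_\phi$ gives $\E[g_\phi\alpha^\star]=\E[m(O,g_\phi)]=\theta_\phi$.
\item By \eqref{eq:rr-short} with $h=g_\phi$, the last term is also $\theta_\phi$.
\item For the cross term $\E[g^\star\alpha_\phi]$, condition on $W_\phi$. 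Because $\alpha_\phi$ is $\sigma(W_\phi)$-measurable, the tower property gives $\E[g^\star\alpha_\phi]=\E[\E[g^\star\mid W_\phi]\alpha_\phi]=\E[g_\phi\alpha_\phi]=\theta_\phi$, again using \eqref{eq:rr-short}.
\end{itemize}
Summing, the right-hand side is $\theta^\star-\theta_\phi-\theta_\phi+\theta_\phi=\theta^\star-\theta_\phi$, which is \eqref{eq:ovb-identity}. The bound \eqref{eq:cs-bound} then follows immediately from Cauchy--Schwarz in $L^2(P)$.

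The only step needing care is the second bullet, which applies the long representer to a short function. This requires reading $h\mapsto g_\phi(W_\phi)$ as an element of $L^2(P_W)$ via composition with the measurable map $w\mapsto(d,x(z,\phi))$. It also requires that $m(O,g_\phi)$ means the same object in both \eqref{eq:rr-long} and \eqref{eq:theta-short}. The compatibility condition stated in Section~\ref{sec:setup} guarantees this: $m(O,h)$ is $\sigma(W_\phi)$-measurable for short $h$, so the two functionals agree on $L^2(P_{W_\phi})\subset L^2(P_W)$.

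We note as a side remark that $\alpha_\phi=\E[\alpha^\star\mid W_\phi]$. This holds because $\E[h\alpha^\star]=\E[h\alpha_\phi]$ for all short $h$, and it gives an alternative route to the same cancellation. Apart from these two points, the argument is routine bookkeeping.
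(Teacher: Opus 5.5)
Your proof is correct and uses the same ingredients as the paper's: the long and short Riesz identities, the tower property $g_\phi=\E[g^\star\mid W_\phi]$, and the compatibility-based reading of the short functional as the restriction of the long one to functions of $W_\phi$. The only difference is bookkeeping. The paper writes $\theta^\star-\theta_\phi$ as the mixed product plus two cross terms that vanish by orthogonality, deriving $\alpha_\phi=\E[\alpha^\star\mid W_\phi]$ along the way; you instead evaluate each of the four cross moments directly as $\theta^\star$ or $\theta_\phi$.
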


Proposition~\ref{prop:ovb} gives a target-specific measure of representation quality.
A representation preserves the target whenever it preserves either the relevant outcome regression or the relevant balancing weight, and the quantitative bound rewards partial preservation of both.
Thus a rich object $Z$ may be compressed substantially without changing the estimand: only the product of the two representation errors matters.

\subsubsection{Debiased scores and $K$-fold cross-fitted DML}
\label{sec:dml}

We now define the debiased score and the associated $K$-fold cross-fitted DML estimator for the short target $\theta_\phi$.
This makes the downstream inference problem concrete: in practice, the analyst computes a single point estimate and a standard error from a cross-fitted score, and the question is how to interpret that output.

For a fixed embedding rule $\phi$, define the score
\begin{equation}
\psi_\phi(O;g,\alpha)
:=
m(O,g)+\alpha(W_\phi)\{Y-g(W_\phi)\}.
\label{eq:score-short}
\end{equation}
At the truth $(g,\alpha)=(g_\phi,\alpha_\phi)$ we have $\E[\psi_\phi(O;g_\phi,\alpha_\phi)]=\theta_\phi$, and the score is Neyman-orthogonal with respect to perturbations of $(g,\alpha)$; see \cite{chernozhukov2018,chernozhukov2022riesz}. Here $\alpha_\phi$ is the Riesz representer on $L^2(P_{W_\phi})$ introduced in Section~\ref{sec:identity}.

Orthogonality has an especially concrete implication for the score \eqref{eq:score-short}.
Because $m$ is linear in its second argument and $g_\phi$ is the conditional mean of $Y$ given $W_\phi$, the population drift of the score away from its target can be written exactly as a \emph{mixed bias} term.
For every square-integrable pair $(g,\alpha)$ that is measurable with $W_\phi$,
\begin{equation}
\E\bigl[\psi_\phi(O;g,\alpha)\bigr]-\theta_\phi
=
-\E\!\left[(g-g_\phi)(\alpha-\alpha_\phi)\right].
\label{eq:mixed-bias}
\end{equation}
Moreover, a single Cauchy--Schwarz step yields
\[
\bigl|\E[\psi_\phi(O;g,\alpha)]-\theta_\phi\bigr|
\le
\norm{g-g_\phi}_{P,2}\,\norm{\alpha-\alpha_\phi}_{P,2},
\]
so mean score drift is negligible at $\sqrt{n}$ scale under the usual product-rate condition.
This is the same algebraic pattern as in Proposition~\ref{prop:ovb}: in both cases, the relevant discrepancy is an interaction of two first-stage approximation errors.

For comparison, the full-information (``long'') score for $\theta^\star$ is
\begin{equation}
\psi^\star(O;g,\alpha)
:=
m(O,g)+\alpha(W)\{Y-g(W)\}.
\label{eq:score-long}
\end{equation}
At the truth $(g,\alpha)=(g^\star,\alpha^\star)$, the centered score $\psi^\star(O;g^\star,\alpha^\star)-\theta^\star$ is the efficient influence function for $\theta^\star$ in the nonparametric model \cite{bickel1993,vanderVaart1998,newey1994}. The representer $\alpha^\star$ is defined in Section~\ref{sec:identity}.

Fix an integer $K\ge 2$.
Let $\{I_k\}_{k=1}^K$ be a partition of $\{1,\dots,n\}$ into $K$ folds of (approximately) equal size, and write $I_k^c:=\{1,\dots,n\}\setminus I_k$. For each fold $k$, use only the training sample $\{O_i:i\in I_k^c\}$ (and the corresponding representations $W_{\phi,i}=(D_i,X_{\phi,i})$) to construct nuisance estimators
\[
\hat g_{\phi}^{(-k)},
\qquad
\hat\alpha_{\phi}^{(-k)}.
\]
The superscript $(-k)$ indicates that the fold-$k$ observations are excluded from the training step.

For each observation $i\in I_k$, evaluate the score using the fold-$k$ nuisance estimates:
\[
\hat\psi_{\phi,i}:=\psi_\phi\!\left(O_i;\hat g_{\phi}^{(-k)},\hat\alpha_{\phi}^{(-k)}\right)
\qquad\text{for } i\in I_k.
\]
The $K$-fold cross-fitted DML estimator of $\theta_\phi$ is the fold-averaged score,
\begin{equation}
\hat\theta_\phi
:=
\frac{1}{n}\sum_{k=1}^K\sum_{i\in I_k} \hat\psi_{\phi,i}.
\label{eq:dml-est}
\end{equation}
A natural standard error is based on the empirical variance of the cross-fitted scores:
\begin{equation}
\hat\sigma_\phi^2
:=
\frac{1}{n}\sum_{k=1}^K\sum_{i\in I_k}\bigl(\hat\psi_{\phi,i}-\hat\theta_\phi\bigr)^2.
\label{eq:var-est}
\end{equation}

To interpret \eqref{eq:dml-est} as a root-$n$ estimator, we need two ingredients: (i) the score must be orthogonal, and (ii) the nuisance learners must converge fast enough in $L^2(P)$ so that the product of their errors is $o_\Pp(n^{-1/2})$.
The next assumption records a standard set of sufficient conditions.

\begin{assumption}[DML regularity for the short target]
\label{ass:dml}
Fix $K\ge 2$ and consider a (possibly drifting) sequence $\phi=\phi_n$.
Let $\hat g_{\phi}^{(-k)}$ and $\hat\alpha_{\phi}^{(-k)}$ be the cross-fitted nuisance estimators defined above.

\begin{enumerate}[label=(\roman*)]
\item \emph{Moments and bounded linearity.}
There exists $q>4$ and a constant $C_m<\infty$ such that $Y\in L^q(P)$, $\sup_n \norm{\alpha_{\phi_n}}_{P,q}<\infty$, and
\begin{equation}
\norm{m(O,h)}_{P,2}\le C_m \norm{h}_{P,2}
\qquad\text{for all } h \in L^2(P_{W_{\phi_n}}).
\label{eq:m-L2-bounded}
\end{equation}
(When $\phi_n\equiv\phi$ is fixed, the supremum over $n$ is unnecessary.)

\item \emph{Nuisance consistency and moment control.}
Uniformly over folds,
\[
\max_{1\le k\le K}\norm{\hat g_{\phi_n}^{(-k)}-g_{\phi_n}}_{P,2}=o_\Pp(1),
\qquad
\max_{1\le k\le K}\norm{\hat\alpha_{\phi_n}^{(-k)}-\alpha_{\phi_n}}_{P,2}=o_\Pp(1),
\]
and the nuisance estimates have uniformly bounded $q$th moments:
\[
\max_{1\le k\le K}\norm{\hat g_{\phi_n}^{(-k)}}_{P,q}
+
\max_{1\le k\le K}\norm{\hat\alpha_{\phi_n}^{(-k)}}_{P,q}
=
O_\Pp(1).
\]

\item \emph{Rate condition.} Uniformly over folds,
\begin{equation}
\max_{1\le k\le K}\norm{\hat g_{\phi_n}^{(-k)}-g_{\phi_n}}_{P,2}\;
\max_{1\le k\le K}\norm{\hat\alpha_{\phi_n}^{(-k)}-\alpha_{\phi_n}}_{P,2}
=
o_\Pp(n^{-1/2}).
\label{eq:dml-rate}
\end{equation}

\item \emph{Nondegeneracy.}
Let $\sigma_{\phi_n}^2:=\Var\!\left(\psi_{\phi_n}(O;g_{\phi_n},\alpha_{\phi_n})\right)$.
Assume $0<\inf_n\sigma_{\phi_n}\le \sup_n\sigma_{\phi_n}<\infty$.

\item \emph{Extra moment when drifting.} For a possibly drifting sequence $\phi=\phi_n$, assume moreover that there exists $\delta>0$ such that
\[\sup_n\norm{\psi_{\phi_n}(O;g_{\phi_n},\alpha_{\phi_n})-\theta_{\phi_n}}_{P,2+\delta}<\infty.\]
When $\phi_n\equiv\phi$ is fixed, this additional uniform moment condition is unnecessary.
\end{enumerate}
\end{assumption}

Under Assumption~\ref{ass:dml}, the DML estimator admits the usual asymptotically linear representation for the representation-based target $\theta_\phi$.

The next theorem specializes the standard DML expansion to our Riesz score \eqref{eq:score-short}.
We allow for a (possibly drifting) sequence of representations $\phi=\phi_n$ (treated as nonrandom for the sampling distribution).
A self-contained proof is given in Appendix~\ref{app:proofs-sec2}; see \cite{chernozhukov2018} for the general DML framework and \cite{chernozhukov2022riesz} for the Riesz-functional setup.

\begin{theorem}[DML inference for the representation-based target $\theta_\phi$]
\label{thm:dml-expansion}
Suppose Assumption~\ref{ass:dml} holds for a (possibly drifting) sequence $\phi=\phi_n$.
Then
\begin{equation}
\sqrt{n}\bigl(\hat\theta_{\phi_n}-\theta_{\phi_n}\bigr)
=
\frac{1}{\sqrt{n}}\sum_{i=1}^n
\Bigl(\psi_{\phi_n}(O_i;g_{\phi_n},\alpha_{\phi_n})-\theta_{\phi_n}\Bigr)
+o_\Pp(1).
\label{eq:dml-expansion}
\end{equation}
Let $\sigma_{\phi_n}^2:=\Var(\psi_{\phi_n}(O;g_{\phi_n},\alpha_{\phi_n}))$.
Then
\[
\sigma^{-1}_{\phi_n}\sqrt{n}(\hat\theta_{\phi_n}-\theta_{\phi_n})
\Rightarrow
N(0,1).
\]
If $\phi_n\equiv \phi$ is fixed, then $\sqrt{n}(\hat\theta_{\phi}-\theta_{\phi})\Rightarrow N(0,\sigma_\phi^2)$.
Moreover, the cross-fitted variance estimator $\hat\sigma_{\phi_n}^2$ defined in \eqref{eq:var-est} is consistent for $\sigma_{\phi_n}^2$.
\end{theorem}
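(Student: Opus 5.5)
The plan is the standard fold-wise DML decomposition, with the mixed-bias identity \eqref{eq:mixed-bias} controlling the bias term. Write $\psi_i:=\psi_{\phi_n}(O_i;g_{\phi_n},\alpha_{\phi_n})$. For each fold $k$, with $\hat g:=\hat g^{(-k)}_{\phi_n}$, $\hat\alpha:=\hat\alpha^{(-k)}_{\phi_n}$, decompose
\[
\frac{1}{n}\sum_{i\in I_k}\bigl(\hat\psi_{\phi_n,i}-\theta_{\phi_n}\bigr)
=\frac{1}{n}\sum_{i\in I_k}(\psi_i-\theta_{\phi_n})
+\frac{|I_k|}{n}\bigl(P\psi_{\phi_n}(\cdot;\hat g,\hat\alpha)-\theta_{\phi_n}\bigr)
+\frac{1}{n}\sum_{i\in I_k}\bigl(R_i-PR\bigr),
\]
where $R:=\psi_{\phi_n}(O;\hat g,\hat\alpha)-\psi_{\phi_n}(O;g_{\phi_n},\alpha_{\phi_n})$. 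Here $P$ integrates over a fresh $O$ with the training fold held fixed. Summing over $k$ and multiplying by $\sqrt n$ gives \eqref{eq:dml-expansion}, provided the last two terms are $o_\Pp(n^{-1/2})$.

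\emph{Bias term.} Because $\hat g,\hat\alpha$ are functions of $W_{\phi_n}$ and are fixed given $I_k^c$, the identity \eqref{eq:mixed-bias} applies conditionally. Cauchy--Schwarz then bounds the middle term by $\|\hat g-g_{\phi_n}\|_{P,2}\|\hat\alpha-\alpha_{\phi_n}\|_{P,2}$, which is $o_\Pp(n^{-1/2})$ by \eqref{eq:dml-rate}.

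\emph{Empirical-process term.} Conditionally on $I_k^c$, the $R_i$ for $i\in I_k$ are i.i.d., so
\[
\E\Bigl[\Bigl(\tfrac{1}{n}\sum_{i\in I_k}(R_i-PR)\Bigr)^2\Bigm| I_k^c\Bigr]\le n^{-1}\|R\|_{P,2}^2 .
\]
It therefore suffices to show $\|R\|_{P,2}=o_\Pp(1)$; a conditional Chebyshev argument then passes this to the unconditional statement. Expand
\[
R=m(O,\hat g-g_{\phi_n})+(\hat\alpha-\alpha_{\phi_n})(Y-g_{\phi_n})-\hat\alpha(\hat g-g_{\phi_n}).
\]
The first piece is at most $C_m\|\hat g-g_{\phi_n}\|_{P,2}$ by \eqref{eq:m-L2-bounded}. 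The second and third are products, and this is the main obstacle: $L^2$ consistency alone does not control them. I would use Hölder together with interpolation. For example, $\|(\hat\alpha-\alpha)(Y-g)\|_{P,2}\le\|\hat\alpha-\alpha\|_{P,r}\|Y-g\|_{P,q}$ with $1/r+1/q=1/2$, so $r=2q/(q-2)<q$ since $q>4$. Interpolating $\|\cdot\|_{P,r}$ between $\|\cdot\|_{P,2}=o_\Pp(1)$ and $\|\cdot\|_{P,q}=O_\Pp(1)$, which comes from the $q$th-moment bounds in Assumption~\ref{ass:dml}(i)--(ii), gives $o_\Pp(1)$. Note $\|g_{\phi_n}\|_{P,q}\le\|Y\|_{P,q}$ by Jensen. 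The term $\hat\alpha(\hat g-g_{\phi_n})$ is handled the same way, and $q>4$ is exactly what makes these exponents work.

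\emph{Normality and variance estimation.} For asymptotic normality, the leading term is a triangular array of i.i.d.\ mean-zero variables with variance $\sigma_{\phi_n}^2$ bounded away from $0$ and $\infty$ (Assumption~\ref{ass:dml}(iv)). Their uniform $(2+\delta)$ moments (Assumption~\ref{ass:dml}(v)) give the Lyapunov condition, and Lindeberg--Feller then yields $\sigma_{\phi_n}^{-1}\sqrt n(\hat\theta_{\phi_n}-\theta_{\phi_n})\Rightarrow N(0,1)$; for fixed $\phi$ the ordinary CLT suffices. For $\hat\sigma^2_{\phi_n}$, write $\hat\psi_i-\hat\theta=(\psi_i-\theta_{\phi_n})+(\hat\psi_i-\psi_i)-(\hat\theta-\theta_{\phi_n})$. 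By the conditional Markov argument, $n^{-1}\sum_i(\hat\psi_i-\psi_i)^2=O_\Pp(\max_k\|R\|_{P,2}^2)=o_\Pp(1)$, and $\hat\theta-\theta_{\phi_n}=O_\Pp(n^{-1/2})$. The cross terms vanish by Cauchy--Schwarz, and $n^{-1}\sum_i(\psi_i-\theta_{\phi_n})^2-\sigma^2_{\phi_n}=o_\Pp(1)$ by a weak law for triangular arrays, which the uniform $2+\delta$ moment allows (for example via truncation or von Bahr--Esseen). Hence $\hat\sigma^2_{\phi_n}-\sigma^2_{\phi_n}\to0$ in probability, and so $\hat\sigma_{\phi_n}/\sigma_{\phi_n}\overset{p}{\to}1$.
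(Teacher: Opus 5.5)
Your proposal is correct and follows essentially the same route as the paper's proof. It splits the estimator fold-wise into a bias term, handled by the mixed-bias identity and Cauchy--Schwarz, and an empirical-process term, handled by conditional Chebyshev, H\"older, and $L^2$--$L^q$ interpolation; it then uses Lyapunov's CLT and a conditional-Markov argument for the variance estimator. The differences are cosmetic: you use a three-term expansion of the score difference with the H\"older pairing $\bigl(2q/(q-2),\,q\bigr)$ instead of the paper's four-term expansion with $L^4$--$L^4$ bounds, and in the variance step you compare centered rather than uncentered second moments.
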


Theorem~\ref{thm:dml-expansion} supports the familiar inferential operations for the representation-specific population truth.
For any $\gamma\in(0,1)$, let $z_{1-\gamma/2}$ denote the corresponding standard-normal quantile and form
\begin{equation}
\mathrm{CI}_{\phi_n}(1-\gamma)
:=
\left[
\hat\theta_{\phi_n}
\mathbin{\pm}
z_{1-\gamma/2}\frac{\hat\sigma_{\phi_n}}{\sqrt n}
\right].
\label{eq:ci-short}
\end{equation}
Then
$\Pp\{\theta_{\phi_n}\in\mathrm{CI}_{\phi_n}(1-\gamma)\}=1-\gamma+o(1)$.
The same studentized statistic yields asymptotically valid one- and two-sided tests of hypotheses about $\theta_{\phi_n}$ and the usual local-power calculations.
This result identifies and estimates the precise effect defined by the information retained in $X_{\phi_n}$, without yet imposing a link to the full-information target.

\subsection{Omitted information bias, target adaptivity, and adaptive efficiency}
\label{sec:omitted}

Proposition~\ref{prop:ovb} makes the \emph{target gap} $\theta^\star-\theta_{\phi_n}$ explicit as an interaction of two approximation errors. For inference, the key question is whether that gap is negligible relative to sampling noise, and the next definition records the minimal condition that makes this precise.

\begin{definition}[Target adaptivity]
\label{def:target-adaptivity}
Let $\{\phi_n\}_{n\ge 1}$ be a (possibly drifting) sequence of embedding rules, and write $\theta_{\phi_n}$ for the induced representation-based target.
We say that $\{X_{\phi_n}\}_{n\ge 1}$ is \emph{target-adaptive} for $\theta^\star$ if
\begin{equation}
\sqrt{n}\bigl(\theta^\star-\theta_{\phi_n}\bigr)=o_\Pp(1).
\label{eq:target-adaptivity}
\end{equation}
\end{definition}

Since $\theta^\star-\theta_{\phi_n}=\E[(g^\star-g_{\phi_n})(\alpha^\star-\alpha_{\phi_n})]$, a simple sufficient condition is that the two approximation errors are small enough in $L^2(P)$ that their product is negligible at $\sqrt{n}$ scale:
\begin{equation}
\sqrt{n}\,\norm{g^\star-g_{\phi_n}}_{P,2}\,\norm{\alpha^\star-\alpha_{\phi_n}}_{P,2}=o_\Pp(1).
\label{eq:target-adaptivity-suff}
\end{equation}

Theorem~\ref{thm:dml-expansion} delivers inference for the representation-based target $\theta_{\phi_n}$.
When the representation is target-adaptive in the sense of Definition~\ref{def:target-adaptivity}, the same output can be read as inference for the full-information target.
Indeed,
\[
\sqrt{n}(\hat\theta_{\phi_n}-\theta^\star)
=
\sqrt{n}(\hat\theta_{\phi_n}-\theta_{\phi_n})
+\sqrt{n}(\theta_{\phi_n}-\theta^\star),
\]
and the second term is $o_\Pp(1)$ by \eqref{eq:target-adaptivity}.
Combining this with \eqref{eq:dml-expansion} yields
\[
\sqrt{n}\bigl(\hat\theta_{\phi_n}-\theta^\star\bigr)
=
\frac{1}{\sqrt{n}}\sum_{i=1}^n
\Bigl(\psi_{\phi_n}(O_i;g_{\phi_n},\alpha_{\phi_n})-\theta_{\phi_n}\Bigr)
+o_\Pp(1),
\]
so the usual Wald intervals computed from the cross-fitted score are valid for $\theta^\star$.

Target adaptivity provides the first inferential enhancement of our results, as it changes the referent of the interval from $\theta_{\phi_n}$ to $\theta^\star$.
The precision of that interval is governed by a second property.
To recover the full-information efficiency bound as well as the full-information target, the short score must approach the efficient long score.

\begin{definition}[Adaptive efficiency]
\label{def:adaptive-efficiency}
We say that $\{X_{\phi_n}\}_{n\ge 1}$ is \emph{adaptively efficient} for $\theta^\star$ if
\begin{equation}
\norm{g^\star-g_{\phi_n}}_{P,2}\to 0,
\qquad
\norm{\alpha^\star-\alpha_{\phi_n}}_{P,2}\to 0,
\qquad
\sup_n\norm{\alpha_{\phi_n}}_{P,2}<\infty.
\label{eq:adaptive-efficiency}
\end{equation}
\end{definition}

When both target adaptivity \eqref{eq:target-adaptivity} and adaptive efficiency \eqref{eq:adaptive-efficiency} hold, we say the representation is \emph{information-lossless} for the target. Target adaptivity makes the target gap negligible at $\sqrt n$ scale. Under the operator bound and score-product conditions in Lemma~\ref{lem:score-conv}, adaptive efficiency also ensures that the short score approaches the efficient full-information score.

Under these additional conditions, the two properties answer different questions: target adaptivity determines what parameter is covered, while adaptive efficiency determines whether that coverage uses the smallest attainable first-order variance.

In many applications the representation rule is itself obtained by fine-tuning on the training folds.
Section~\ref{sec:swap} explains how to incorporate this step directly into the nuisance learners, so that the same DML logic applies to the population target defined by the end-to-end pipeline.

\begin{lemma}[Short score convergence]
\label{lem:score-conv}
Assume $Y\in L^2(P)$ and \eqref{eq:rr-long}--\eqref{eq:rr-short} hold.
Assume furthermore that $m$ is $L^2$-bounded in the sense that there exists $C_m<\infty$ such that
$\norm{m(O,h)}_{P,2}\le C_m \norm{h}_{P,2}$ for all $h\in L^2(P_W)$.
Suppose adaptive efficiency \eqref{eq:adaptive-efficiency} holds, and, in addition,
\begin{equation}
\norm{(\alpha^\star-\alpha_{\phi_n})(Y-g^\star)}_{P,2}=o(1),
\qquad
\norm{\alpha_{\phi_n}(g_{\phi_n}-g^\star)}_{P,2}=o(1).
\label{eq:adaptive-efficiency-score-products}
\end{equation}
Define the centered long score
$
\varphi^\star(O):=\psi^\star(O;g^\star,\alpha^\star)-\theta^\star
$
and the centered short score
$
\varphi_{\phi_n}(O):=\psi_{\phi_n}(O;g_{\phi_n},\alpha_{\phi_n})-\theta_{\phi_n}.
$
Then
\begin{equation}
\norm{\varphi_{\phi_n}-\varphi^\star}_{P,2}=o(1).
\label{eq:score-conv}
\end{equation}
In particular, $\sigma_{\phi_n}^2\to \sigma_\star^2$, where $\sigma_\star^2:=\Var(\psi^\star(O;g^\star,\alpha^\star))$.
\end{lemma}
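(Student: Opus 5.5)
Write $g_n:=g_{\phi_n}$ and $\alpha_n:=\alpha_{\phi_n}$. The plan is to expand $\varphi_{\phi_n}-\varphi^\star$ into a sum of terms and bound each one in $L^2(P)$ by a hypothesis of the lemma. Using linearity of $m$ in its second argument,
\[
\varphi_{\phi_n}-\varphi^\star
= m(O,g_n-g^\star) + \alpha_n(Y-g_n)-\alpha^\star(Y-g^\star) - (\theta_{\phi_n}-\theta^\star).
\]
For the middle terms I add and subtract $\alpha_n(Y-g^\star)$:
\[
\alpha_n(Y-g_n)-\alpha^\star(Y-g^\star)=(\alpha_n-\alpha^\star)(Y-g^\star)+\alpha_n(g^\star-g_n).
\]
This leaves four terms, and the triangle inequality in $L^2(P)$ reduces \eqref{eq:score-conv} to showing that each of them is $o(1)$.

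The four bounds are as follows.
\begin{enumerate}[label=(\alph*)]
\item Since $g_n-g^\star\in L^2(P_W)$, the $L^2$-boundedness of $m$ gives $\norm{m(O,g_n-g^\star)}_{P,2}\le C_m\norm{g_n-g^\star}_{P,2}\to0$ by \eqref{eq:adaptive-efficiency}.
\item The term $(\alpha_n-\alpha^\star)(Y-g^\star)$ is $o(1)$ by the first condition in \eqref{eq:adaptive-efficiency-score-products}.
\item The term $\alpha_n(g^\star-g_n)$ is $o(1)$ by the second condition in \eqref{eq:adaptive-efficiency-score-products}.
\item The constant $\theta_{\phi_n}-\theta^\star$ has $L^2$ norm equal to its absolute value. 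By Proposition~\ref{prop:ovb} and Cauchy--Schwarz, $|\theta^\star-\theta_{\phi_n}|\le\norm{g^\star-g_n}_{P,2}\norm{\alpha^\star-\alpha_n}_{P,2}\to0$.
\end{enumerate}
The only point needing care is that the product conditions must be assumed. Products of two $L^2$-convergent sequences need not converge in $L^2$, which is why \eqref{eq:adaptive-efficiency-score-products} is imposed separately. Given that assumption, the argument is bookkeeping.

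For the variance claim, $\varphi_{\phi_n}$ and $\varphi^\star$ are both centered, so $\sigma_{\phi_n}=\norm{\varphi_{\phi_n}}_{P,2}$ and $\sigma_\star=\norm{\varphi^\star}_{P,2}$. The reverse triangle inequality then gives $|\sigma_{\phi_n}-\sigma_\star|\le\norm{\varphi_{\phi_n}-\varphi^\star}_{P,2}\to0$. Because $\sigma_\star<\infty$, squaring yields $\sigma^2_{\phi_n}\to\sigma^2_\star$.

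Finiteness of $\sigma_\star$ needs a check: $\varphi^\star\in L^2$ requires $\alpha^\star(Y-g^\star)\in L^2$. I expect to get this from the decomposition itself. Term (b) is finite for large $n$, so $\alpha^\star(Y-g^\star)$ differs from $\alpha_n(Y-g^\star)$ by something in $L^2$. I will argue that $\alpha_n(Y-g^\star)\in L^2$, or else simply note that \eqref{eq:adaptive-efficiency-score-products} implicitly presumes the relevant products are square-integrable. This is the main technical subtlety, and it is mild.
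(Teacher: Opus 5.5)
Your argument is correct and is essentially the paper's proof. It uses the same add-and-subtract of $\alpha_{\phi_n}(Y-g^\star)$, the same term-by-term use of the $L^2$-boundedness of $m$ and the two product conditions, and the same passage to variances. The only cosmetic difference is that the paper folds the constant $\theta^\star-\theta_{\phi_n}=P\,m(O,g^\star-g_{\phi_n})$ into a centered $m$-term bounded by $2C_m\norm{g^\star-g_{\phi_n}}_{P,2}$, rather than invoking Proposition~\ref{prop:ovb}.

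On your finiteness worry, the first option you propose cannot work: the lemma's hypotheses alone do not give $\alpha_{\phi_n}(Y-g^\star)\in L^2$. They only make the difference $\varphi_{\phi_n}-\varphi^\star$ square-integrable, so $\sigma_\star<\infty$ holds exactly when $\sigma_{\phi_n}<\infty$ for large $n$. The paper leaves this implicit, and in Theorem~\ref{thm:strong-adaptive} finiteness is supplied by Assumption~\ref{ass:dml}(iv).
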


When we combine target adaptivity with short-score convergence, the representation-based estimator behaves as if it were built from the full-information efficient influence function.

\begin{theorem}[Adaptive efficiency under information-losslessness]
\label{thm:strong-adaptive}
Suppose Assumption~\ref{ass:dml} holds for a (possibly drifting) sequence $\phi=\phi_n$.
Assume also that there exists $C_m<\infty$ such that
\[
\norm{m(O,h)}_{P,2}\le C_m\norm{h}_{P,2}
\qquad\text{for all }h\in L^2(P_W).
\]
This bound controls the score contribution of the omitted regression component $g^\star-g_{\phi_n}$, which need not belong to a short information space.
If, in addition, target adaptivity \eqref{eq:target-adaptivity}, the adaptive score products \eqref{eq:adaptive-efficiency-score-products}, and adaptive efficiency \eqref{eq:adaptive-efficiency} hold, then
\begin{equation}
\sqrt{n}\bigl(\hat\theta_{\phi_n}-\theta^\star\bigr)
=
\frac{1}{\sqrt{n}}\sum_{i=1}^n
\Bigl(\psi^\star(O_i;g^\star,\alpha^\star)-\theta^\star\Bigr)
+o_\Pp(1).
\label{eq:efficient-expansion}
\end{equation}
Moreover,
$\sigma_\star^{-1}\sqrt{n}\bigl(\hat\theta_{\phi_n}-\theta^\star\bigr)\Rightarrow N(0,1)$,
and the cross-fitted variance estimator $\hat\sigma_{\phi_n}^2$ is consistent for $\sigma_\star^2$.
In particular, $\hat\theta_{\phi_n}$ attains the semiparametric efficiency bound for $\theta^\star$.
\end{theorem}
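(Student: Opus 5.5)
The plan chains three earlier results: Theorem~\ref{thm:dml-expansion}, target adaptivity, and Lemma~\ref{lem:score-conv}. First, Theorem~\ref{thm:dml-expansion} gives
\[
\sqrt n(\hat\theta_{\phi_n}-\theta_{\phi_n})=n^{-1/2}\sum_{i=1}^n\varphi_{\phi_n}(O_i)+o_\Pp(1).
\]
Target adaptivity \eqref{eq:target-adaptivity} lets us replace $\theta_{\phi_n}$ by $\theta^\star$ on the left at cost $o_\Pp(1)$; this is the recentring displayed after Definition~\ref{def:target-adaptivity}. It then remains to replace the short influence function $\varphi_{\phi_n}$ by the long one $\varphi^\star$ in the empirical average. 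For this, write
\[
n^{-1/2}\sum_{i=1}^n\bigl(\varphi_{\phi_n}(O_i)-\varphi^\star(O_i)\bigr).
\]
Both $\varphi_{\phi_n}$ and $\varphi^\star$ are mean zero, so the difference is mean zero. Its variance is at most $\norm{\varphi_{\phi_n}-\varphi^\star}_{P,2}^2$, and the summands are i.i.d.\ because $\phi_n$ is nonrandom. Chebyshev's inequality therefore shows the term is $o_\Pp(1)$ as soon as $\norm{\varphi_{\phi_n}-\varphi^\star}_{P,2}\to0$. That convergence is exactly \eqref{eq:score-conv} of Lemma~\ref{lem:score-conv}. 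The lemma's hypotheses are the stated $L^2$-boundedness of $m$ on $L^2(P_W)$, adaptive efficiency \eqref{eq:adaptive-efficiency}, and the score products \eqref{eq:adaptive-efficiency-score-products}, all assumed here. Together these steps give \eqref{eq:efficient-expansion}.

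For the lemma itself, should it need checking, I would expand
\[
\varphi_{\phi_n}-\varphi^\star = m(O,g_{\phi_n}-g^\star) + (\alpha_{\phi_n}-\alpha^\star)(Y-g^\star) + \alpha_{\phi_n}(g^\star-g_{\phi_n}) - (\theta_{\phi_n}-\theta^\star).
\]
The first term is bounded by $C_m\norm{g_{\phi_n}-g^\star}_{P,2}\to0$. The next two terms are exactly the products in \eqref{eq:adaptive-efficiency-score-products}. The constant is controlled by Proposition~\ref{prop:ovb} together with \eqref{eq:adaptive-efficiency}. This is the one place where the long-space bound on $m$ is needed, because $g^\star-g_{\phi_n}$ is not $W_{\phi_n}$-measurable.

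For the limiting distribution, $\sigma_\star^2=\Var(\psi^\star(O;g^\star,\alpha^\star))$ does not depend on $n$ and is positive. Positivity follows because $\sigma_{\phi_n}^2\to\sigma_\star^2$ by the lemma and $\inf_n\sigma_{\phi_n}>0$ by Assumption~\ref{ass:dml}(iv). The classical Lindeberg--L\'evy CLT applied to the fixed i.i.d.\ sequence $\varphi^\star(O_i)$ then gives $\sigma_\star^{-1}\sqrt n(\hat\theta_{\phi_n}-\theta^\star)\Rightarrow N(0,1)$, using \eqref{eq:efficient-expansion} and Slutsky's lemma.

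Variance consistency is handled the same way. Theorem~\ref{thm:dml-expansion} gives $\hat\sigma^2_{\phi_n}-\sigma^2_{\phi_n}\overset{p}{\to}0$, and the lemma gives $\sigma^2_{\phi_n}\to\sigma^2_\star$, so $\hat\sigma^2_{\phi_n}\overset{p}{\to}\sigma^2_\star$.

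Efficiency follows because the estimator is asymptotically linear with influence function $\varphi^\star$, which is the efficient influence function for $\theta^\star$ in the nonparametric model. Asymptotic linearity with the efficient influence function implies regularity and attainment of the bound, by standard results in \cite{vanderVaart1998}.

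The main obstacle is the drifting nature of $\phi_n$. The short expansion involves a triangular array, so passing to the fixed-law CLT requires the $L^2$ coupling from Lemma~\ref{lem:score-conv}, not just convergence of variances. The Chebyshev argument above handles this cleanly only because each $\varphi_{\phi_n}-\varphi^\star$ is exactly centered. I would therefore verify carefully that $\E\varphi_{\phi_n}=0$ holds exactly, using $\E[\psi_{\phi_n}(O;g_{\phi_n},\alpha_{\phi_n})]=\theta_{\phi_n}$, and not merely up to drift.
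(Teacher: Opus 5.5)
Your proposal is correct and follows the paper's proof essentially step for step. You recenter the Theorem~\ref{thm:dml-expansion} expansion via target adaptivity, use the $L^2(P)$ score convergence of Lemma~\ref{lem:score-conv} with a second-moment (Chebyshev) bound to replace $\varphi_{\phi_n}$ by $\varphi^\star$, apply the fixed-law CLT, and obtain variance consistency from $\hat\sigma^2_{\phi_n}-\sigma^2_{\phi_n}=o_\Pp(1)$ together with $\sigma^2_{\phi_n}\to\sigma^2_\star$. Your two small refinements, bounding the constant $\theta_{\phi_n}-\theta^\star$ separately rather than centering the $m$-term as the paper's lemma does, and explicitly deducing $\sigma_\star>0$ from Assumption~\ref{ass:dml}(iv), are both valid.
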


Theorem~\ref{thm:strong-adaptive} permits direct inference about the full-information truth.
In particular, the same interval in \eqref{eq:ci-short} now satisfies
\[
\Pp\{\theta^\star\in\mathrm{CI}_{\phi_n}(1-\gamma)\}
=1-\gamma+o(1),
\]
and $\hat\sigma_{\phi_n}\overset{p}{\to}\sigma_\star$; equivalently, the reported standard error is $\sigma_\star/\sqrt n+o_\Pp(n^{-1/2})$.
This distinguishes the theorem from Theorem~\ref{thm:dml-expansion} in two steps.
Theorem~\ref{thm:dml-expansion} alone covers the representation-specific truth $\theta_{\phi_n}$; adding target adaptivity makes that interval valid for $\theta^\star$; adding adaptive efficiency, as in Theorem~\ref{thm:strong-adaptive}, also makes the estimator first-order equivalent to one constructed from the efficient full-information influence function.

\subsection{Inference with fine-tuned representations}
\label{sec:fine-tuned}

In many empirical workflows the representation rule is not literally fixed.
Analysts often fine-tune an embedding model, select a prompt, or otherwise adapt the representation to the downstream task.
Cross-fitting supplies a direct route to inference in this setting.
On each training fold, the analyst may run the complete tuning procedure using $(Y,D,Z)$ and then carry only the fitted nuisance functions to the held-out fold.
Conditional on the training sample, those functions are fixed when the score is evaluated, preserving the independence structure used by DML.

The resulting theory is deliberately output-based.
The tuning map $\hat\phi$ may be high-dimensional, nonsmooth, and fold-specific; first-order inference is governed by the prediction errors of the fitted regression and Riesz representer rather than by an expansion of $\hat\phi$ itself.
We therefore treat \emph{the entire first-stage pipeline}---representation learning together with the two nuisance heads---as a learning procedure that outputs fitted functions.
The joint-class view below identifies the population objects learned by this procedure and places the standard DML product-rate condition directly on its two outputs.

\subsubsection{A joint-class view of representation learning}
\label{sec:swap}

Let $\Phi$ be a collection of representation rules.
For $\phi\in\Phi$, write $X_\phi=x(Z,\phi)$ and $W_\phi=(D,X_\phi)$.
Let $\Gamma_g$ be a class of real-valued maps on the range of $W_\phi$ that we use to fit the outcome regression, and let $\Gamma_\alpha$ be an analogous class used to fit the Riesz representer.
These primitive classes induce composed function classes on the original covariate space:
\[
\mathcal G
:=
\Bigl\{\,w\mapsto \gamma\!\bigl(W_\phi(w)\bigr): \gamma\in\Gamma_g,\ \phi\in\Phi\,\Bigr\},
\qquad
\mathcal A
:=
\Bigl\{\,w\mapsto a\!\bigl(W_\phi(w)\bigr): a\in\Gamma_\alpha,\ \phi\in\Phi\,\Bigr\}.
\]
It is often convenient to work on a linear space that contains all functions the pipeline can express (up to linear combination and $L^2(P)$ limits).
Accordingly, define the closed linear span
\[
\mathcal H
:=
\overline{\mathrm{span}}\bigl(\mathcal G\cup \mathcal A\bigr)
\subseteq L^2(P_W),
\]
where the closure is taken in $L^2(P)$.
We interpret $\mathcal H$ as the \emph{population information structure} of the end-to-end pipeline.

The next definition formalizes the population objects that this pipeline targets.
These objects play the same conceptual role as $(g_\phi,\alpha_\phi)$ for a fixed representation, but they allow the representation to be selected or fine-tuned as part of the learning rule.

\begin{definition}[Population pipeline targets]
\label{def:pipeline-target}
Let $g^\star(w)=\E[Y\mid W=w]$ and let $\alpha^\star$ be the full-information Riesz representer from \eqref{eq:rr-long}.
Define
\[
g_0 := \arg\min_{g\in\mathcal H}\E\!\left[(Y-g(W))^2\right],
\qquad
\alpha_0 := \arg\min_{\alpha\in\mathcal H}\E\!\left[\alpha(W)^2-2m(O,\alpha)\right].
\]
The corresponding \emph{pipeline target} is
\[
\theta_0 := \E\!\left[m(O,g_0)\right].
\]
\end{definition}

This definition is motivated by the two quadratic objectives that are standard in DML.
The first is ordinary least squares risk, whose minimizer is the $L^2(P)$ projection of $g^\star$ onto $\mathcal H$.
The second is the Riesz loss from \cite{chernozhukov2022riesz}, whose minimizer is the $L^2(P)$ projection of $\alpha^\star$ onto the same space.
Working with a common space $\mathcal H$ ensures that the resulting score remains orthogonal and admits the same mixed-bias algebra that underlies double robustness.

\begin{proposition}[Mixed-bias identities for the pipeline target]
\label{prop:pipeline-mixed-bias}
Assume the compatibility condition in Section~\ref{sec:setup} and \eqref{eq:rr-long}.
Let $(g_0,\alpha_0,\theta_0)$ be as in Definition~\ref{def:pipeline-target}.
Then the following hold.

\smallskip

\noindent (i) (\emph{Restricted Riesz representation.})
For every $h\in\mathcal H$,
\[
\E[m(O,h)] = \E\!\left[h(W)\,\alpha_0(W)\right].
\]

\smallskip

\noindent (ii) (\emph{Mixed-bias property.})
For any $g,\alpha\in\mathcal H$,
\begin{equation}
\E\!\left[\psi^\star(O;g,\alpha)\right]-\theta_0
=
-\,\E\!\left[(g-g_0)(\alpha-\alpha_0)\right].
\label{eq:mixed-bias-pipeline}
\end{equation}

\smallskip

\noindent (iii) (\emph{Target gap.})
The gap between the full-information and the pipeline targets satisfies
\begin{equation}
\theta^\star-\theta_0
=
\E\!\left[(g^\star-g_0)(\alpha^\star-\alpha_0)\right],
\qquad
|\theta^\star-\theta_0|
\le
\norm{g^\star-g_0}_{P,2}\,\norm{\alpha^\star-\alpha_0}_{P,2}.
\label{eq:ovb-pipeline}
\end{equation}
\end{proposition}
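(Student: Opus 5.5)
The plan is to work entirely in the Hilbert space $L^2(P_W)$ and reduce all three parts to two orthogonality relations defining $g_0$ and $\alpha_0$ as projections onto the closed subspace $\mathcal H$. First, from Definition~\ref{def:pipeline-target}, I will establish that $g_0$ is the orthogonal projection of $g^\star$ onto $\mathcal H$. Since $\E[(Y-g)^2]=\E[(Y-g^\star)^2]+\norm{g^\star-g}_{P,2}^2$ for $g\in L^2(P_W)$, the minimizer exists uniquely because $\mathcal H$ is closed, and it satisfies $\E[(Y-g_0)h]=\E[(g^\star-g_0)h]=0$ for all $h\in\mathcal H$. Second, I will use \eqref{eq:rr-long} to rewrite the Riesz loss for $\alpha\in\mathcal H\subseteq L^2(P_W)$ as $\E[\alpha^2-2\alpha\alpha^\star]=\norm{\alpha-\alpha^\star}_{P,2}^2-\norm{\alpha^\star}_{P,2}^2$. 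Hence $\alpha_0$ is the projection of $\alpha^\star$ onto $\mathcal H$, and $\E[(\alpha^\star-\alpha_0)h]=0$ for all $h\in\mathcal H$.

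Part (i) then follows immediately: for $h\in\mathcal H$, $\E[m(O,h)]=\E[h\alpha^\star]=\E[h\alpha_0]+\E[h(\alpha^\star-\alpha_0)]=\E[h\alpha_0]$. For part (ii), take $g,\alpha\in\mathcal H$. By linearity of $m$ and part (i), $\E[m(O,g)]-\theta_0=\E[m(O,g-g_0)]=\E[(g-g_0)\alpha_0]$. Conditioning on $W$ and using $\alpha\in\mathcal H$ with the projection property of $g_0$ gives $\E[\alpha(Y-g)]=\E[\alpha(g^\star-g)]=\E[\alpha(g_0-g)]$. Adding these,
\[
\E[\psi^\star(O;g,\alpha)]-\theta_0=\E[(g-g_0)\alpha_0]-\E[(g-g_0)\alpha]=-\E[(g-g_0)(\alpha-\alpha_0)],
\]
which is \eqref{eq:mixed-bias-pipeline}.

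For part (iii), I will expand $\E[(g^\star-g_0)(\alpha^\star-\alpha_0)]=\E[(g^\star-g_0)\alpha^\star]-\E[(g^\star-g_0)\alpha_0]$. The second term vanishes because $\alpha_0\in\mathcal H$ and $g^\star-g_0\perp\mathcal H$. The first term equals $\E[m(O,g^\star)]-\E[m(O,g_0)]=\theta^\star-\theta_0$ by \eqref{eq:rr-long} and linearity, which requires $g^\star\in L^2(P_W)$. The inequality then follows from Cauchy--Schwarz, exactly as in Proposition~\ref{prop:ovb}.

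The only delicate point is well-posedness. Existence and uniqueness of the two argmins rest on $\mathcal H$ being a closed subspace together with the completing-the-square identities. Moreover, $Y\in L^2(P)$ is needed so that $g^\star\in L^2(P_W)$ and the regression risk is finite. The compatibility condition is not really used once everything is phrased on $L^2(P_W)$; it only ensures that $m(O,h)$ is meaningful. I expect the main obstacle to be rewriting the Riesz loss through \eqref{eq:rr-long}, which requires $\mathcal H\subseteq L^2(P_W)$, and this holds by construction.
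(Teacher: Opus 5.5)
Your proposal is correct and follows essentially the same route as the paper: identify $\alpha_0$ as the projection of $\alpha^\star$ onto $\mathcal H$ by completing the square in the Riesz loss via \eqref{eq:rr-long}, and $g_0$ as the projection of $g^\star$ via orthogonality of $Y-g^\star$; then derive (i), (ii) and (iii) from these two orthogonality relations and Cauchy--Schwarz. Your side remarks are accurate and consistent with the paper's argument: $Y\in L^2(P)$ is implicitly needed, and the compatibility condition plays no real role once everything is phrased on $L^2(P_W)$.
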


Part (ii) is the key operational fact: the population drift of the orthogonal score is a \emph{mixed bias} term.
In particular, if either nuisance is correct (or nearly correct) in $L^2(P)$, the bias is second order.
Part (iii) shows that the same mixed-bias structure also governs how far the pipeline target $\theta_0$ can lie from the full-information target $\theta^\star$.

\subsection{Subordinated tuning and information loss}
The joint-class construction above answers a population question: what regression, representer, and target are induced by an end-to-end pipeline?
Subordinated tuning answers a distinct design question: what information-loss criterion is optimized when a representation is selected by prediction or Riesz risk?
The joint-class analysis accommodates a broad, possibly nonsmooth pipeline, whereas the profiling argument below explains the geometry of two especially natural tuning objectives.

To make that geometry explicit, we ``profile out'' the nuisance learners and view the embedding index as a tuning parameter that controls how much information is discarded when $W=(D,Z)$ is replaced by $W_\phi=(D,X_\phi)$.
Concretely, imagine a two-stage (subordinated) choice: for each $\phi$, take the population best predictor of $Y$ given $W_\phi$, and then choose $\phi$ to optimize the resulting profiled risk.
An analogous profiling construction applies to the Riesz representer by optimizing the population Riesz loss over $\phi$.
The key point is geometric: because both losses are quadratic, profiling removes the nuisance functions and leaves (up to constants) exactly the $L^2(P)$ information-loss quantities that enter Proposition~\ref{prop:ovb} and Proposition~\ref{prop:pipeline-mixed-bias}.

\textit{Outcome-side tuning targets $\norm{\Delta_g(\phi)}_{P,2}^2$.}
Let $g^\star(W)=\E[Y\mid W]$ and, for each $\phi$, let $g_\phi(W_\phi)=\E[Y\mid W_\phi]$.
Define the regression discrepancy
\[
\Delta_g(\phi):=g^\star(W)-g_\phi(W_\phi).
\]
Since $g_\phi(W_\phi)$ is the $L^2(P)$ projection of $Y$ onto $\sigma(W_\phi)$, the Pythagorean identity for orthogonal projections yields
\begin{equation}
\E\!\left[(Y-g_\phi(W_\phi))^2\right]
=
\E\!\left[(Y-g^\star(W))^2\right]
+\E\!\left[\Delta_g(\phi)^2\right].
\label{eq:profile-reg}
\end{equation}
The first term does not depend on $\phi$, so minimizing population prediction risk over $\phi$ is equivalent to minimizing
$\E[\Delta_g(\phi)^2]=\norm{\Delta_g(\phi)}_{P,2}^2=\norm{g^\star-g_\phi}_{P,2}^2$.
In this sense, subordinated outcome-side tuning chooses the representation that best preserves the full-information conditional mean of $Y$ (given $D$) in $L^2(P)$.

\textit{Representer-side tuning targets $\norm{\Delta_\alpha(\phi)}_{P,2}^2$.}
Let $\alpha^\star(W)$ be the full-information Riesz representer from \eqref{eq:rr-long}, and let $\alpha_\phi(W_\phi)$ be the representation-based representer from \eqref{eq:rr-short}.
In the unrestricted setup of Proposition~\ref{prop:ovb}, we have $\alpha_\phi(W_\phi)=\E[\alpha^\star(W)\mid W_\phi]$, so it is again an $L^2(P)$ projection.
Define
\[
\Delta_\alpha(\phi):=\alpha^\star(W)-\alpha_\phi(W_\phi).
\]
Consider the population Riesz loss for a square-integrable candidate weight $a(W_\phi)$,
\[
\mathcal L_\phi(a):=\E\!\left[a(W_\phi)^2\right]-2\,\E\!\left[m(O,a)\right].
\]
The minimizer of $\mathcal L_\phi(\cdot)$ over $L^2(P_{W_\phi})$ is $a=\alpha_\phi$, and the representer normal equation implies
$\inf_a \mathcal L_\phi(a)=\mathcal L_\phi(\alpha_\phi)=-\E[\alpha_\phi(W_\phi)^2]$.
Moreover, the projection identity gives
\begin{equation}
\norm{\Delta_\alpha(\phi)}_{P,2}^2
=
\E\!\left[\alpha^\star(W)^2\right]-\E\!\left[\alpha_\phi(W_\phi)^2\right].
\label{eq:profile-riesz}
\end{equation}
Therefore, minimizing the profiled Riesz loss over $\phi$ is equivalent to minimizing the representer information loss
$\norm{\Delta_\alpha(\phi)}_{P,2}^2=\norm{\alpha^\star-\alpha_\phi}_{P,2}^2$.

Finally, since Proposition~\ref{prop:ovb} can be rewritten as $\theta^\star-\theta_\phi=\E[\Delta_g(\phi)\Delta_\alpha(\phi)]$, these profiled objectives directly control the two components whose interaction governs both the target gap and the sensitivity scalings in Section~\ref{sec:bounds}.
In the joint-class view developed above, the same geometry holds with $(g_0,\alpha_0)$ in place of $(g_\phi,\alpha_\phi)$ because both are $L^2(P)$ projections onto the linear space $\mathcal H$.

\subsection{DML inference after foldwise tuning}
\label{sec:pipeline-inference}

The profiling calculation characterizes what particular population tuning criteria preserve.
Inference requires a different and more general statement: given any foldwise end-to-end learner, can we conduct inference on the population target induced by its function class?
The answer below does not require the implemented pipeline to solve the subordinated profiling problems, nor does it require the regression and representer learners to select the same intermediate representation.
It requires only that their final fitted functions converge to the two projections on the common space $\mathcal H$.

We estimate $\theta_0$ exactly as in the warm-up, except that the first-stage learners are allowed to tune or fine-tune the representation internally.
Split the sample into folds $I_1,\dots,I_K$.
On each training fold $I_k^c$, run the end-to-end pipeline to obtain fitted nuisances
$\hat g^{(-k)}$ and $\hat\alpha^{(-k)}$, intended to approximate the population projections $g_0$ and $\alpha_0$,
where both estimators may depend on a fold-specific tuned representation $\hat\phi^{(-k)}$.
We then evaluate the long score \eqref{eq:score-long} on the hold-out fold using these fitted functions:
\[
\hat\psi_i
:=
\psi^\star\!\left(O_i;\hat g^{(-k(i))},\hat\alpha^{(-k(i))}\right),
\qquad
\hat\theta_0 := \frac{1}{n}\sum_{i=1}^n \hat\psi_i,
\qquad
\hat\sigma_0^2 := \frac{1}{n}\sum_{i=1}^n(\hat\psi_i-\hat\theta_0)^2.
\]
The point of cross-fitting is that, conditional on the training folds, the functions $\hat g^{(-k)}$ and $\hat\alpha^{(-k)}$ are fixed, so the empirical process arguments used in Theorem~\ref{thm:dml-expansion} still apply.

\begin{assumption}[DML regularity for an end-to-end pipeline]
\label{ass:pipeline-dml}
Let $\mathcal H$ and $(g_0,\alpha_0,\theta_0)$ be as in Definition~\ref{def:pipeline-target}.
Assume that, for some $q>4$, $Y,g_0,\alpha_0\in L^q(P)$, and that $\Var(\psi^\star(O;g_0,\alpha_0))$ is bounded away from zero and infinity.
Assume that $m$ is $L^2$-bounded on $\mathcal H$ in the sense of Lemma~\ref{lem:score-conv}.

For each fold $k$, require that both $\hat g^{(-k)}$ and $\hat\alpha^{(-k)}$ belong to $\mathcal H$ and are constructed using only $I_k^c$. Also assume
\[
\max_{1\le k\le K}\norm{\hat g^{(-k)}}_{P,q}
+\max_{1\le k\le K}\norm{\hat\alpha^{(-k)}}_{P,q}=O_\Pp(1).
\]
These moment bounds prevent rare extreme predictions from dominating the score and allow $L^2(P)$ consistency to imply $L^4(P)$ consistency. The moment requirement on $g_0$ is explicit because projection onto a general linear space need not preserve the moments of $Y$.
For each fold $k$, the cross-fitted nuisance learners satisfy
\[
\begin{aligned}
& \norm{\hat g^{(-k)}-g_0}_{P,2} = o_\Pp(1), \quad
\norm{\hat\alpha^{(-k)}-\alpha_0}_{P,2} = o_\Pp(1),\\
& \sqrt{n}\,\norm{\hat g^{(-k)}-g_0}_{P,2}\,\norm{\hat\alpha^{(-k)}-\alpha_0}_{P,2}
= o_\Pp(1),
\end{aligned}
\]
uniformly over $k\in\{1,\dots,K\}$.
\end{assumption}

\begin{theorem}[DML inference for the population pipeline target]
\label{thm:pipeline-dml}
Suppose Assumption~\ref{ass:pipeline-dml} holds.
Then
\begin{equation}
\sqrt{n}\bigl(\hat\theta_0-\theta_0\bigr)
=
\frac{1}{\sqrt{n}}\sum_{i=1}^n
\Bigl(\psi^\star(O_i;g_0,\alpha_0)-\theta_0\Bigr)
+o_\Pp(1).
\label{eq:pipeline-expansion}
\end{equation}
Moreover,
$\sigma_0^{-1}\sqrt{n}\bigl(\hat\theta_0-\theta_0\bigr)\Rightarrow N(0,1)$,
where $\sigma_0^2:=\Var(\psi^\star(O;g_0,\alpha_0))$, and the cross-fitted variance estimator $\hat\sigma_0^2$ is consistent for $\sigma_0^2$.
\end{theorem}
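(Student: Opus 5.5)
The plan follows the standard cross-fitted DML argument, with Proposition~\ref{prop:pipeline-mixed-bias}(ii) supplying the bias algebra in place of \eqref{eq:mixed-bias}. Write $\psi_0:=\psi^\star(\cdot;g_0,\alpha_0)$ and, on fold $k$, $\hat\psi^{(k)}:=\psi^\star(\cdot;\hat g^{(-k)},\hat\alpha^{(-k)})$. Let $\mathbb P_{n,k}$ and $\mathbb G_{n,k}$ denote the empirical measure and empirical process on $I_k$. The error then splits exactly as
\[
\sqrt n(\hat\theta_0-\theta_0)-\frac{1}{\sqrt n}\sum_{i=1}^n(\psi_0(O_i)-\theta_0)
=\sum_{k=1}^K\sqrt{\tfrac{|I_k|}{n}}\,\mathbb G_{n,k}\bigl(\hat\psi^{(k)}-\psi_0\bigr)
+\sum_{k=1}^K\frac{|I_k|}{\sqrt n}\bigl(P\hat\psi^{(k)}-\theta_0\bigr),
\]
where $P\hat\psi^{(k)}$ is the expectation with the trained functions held fixed. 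Since $K$ is fixed, it suffices to show that each term is $o_\Pp(1)$.

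\emph{Bias term.} Both $\hat g^{(-k)}$ and $\hat\alpha^{(-k)}$ lie in $\mathcal H$ and are fixed conditional on $I_k^c$. Proposition~\ref{prop:pipeline-mixed-bias}(ii) therefore gives $P\hat\psi^{(k)}-\theta_0=-\E[(\hat g^{(-k)}-g_0)(\hat\alpha^{(-k)}-\alpha_0)]$. By Cauchy--Schwarz and the product-rate condition this is $o_\Pp(n^{-1/2})$. This is the only place the common-space structure $\mathcal H$ is used, and it is what replaces any expansion of $\hat\phi^{(-k)}$.

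\emph{Empirical-process term.} Conditional on $I_k^c$, the function $\hat\psi^{(k)}-\psi_0$ is fixed and $I_k$ is i.i.d.\ and independent of it. Hence $\E[\mathbb G_{n,k}(\hat\psi^{(k)}-\psi_0)^2\mid I_k^c]\le\norm{\hat\psi^{(k)}-\psi_0}_{P,2}^2$, and Chebyshev's inequality reduces the claim to $\norm{\hat\psi^{(k)}-\psi_0}_{P,2}=o_\Pp(1)$. Writing $\hat g=\hat g^{(-k)}$ and $\hat\alpha=\hat\alpha^{(-k)}$, we expand
\[
\hat\psi^{(k)}-\psi_0=m(O,\hat g-g_0)+(\hat\alpha-\alpha_0)(Y-g_0)-\hat\alpha(\hat g-g_0).
\]
The first piece is at most $C_m\norm{\hat g-g_0}_{P,2}$ by $L^2$-boundedness on $\mathcal H$.

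\emph{Main obstacle: the two product terms.} We only have $L^2$ consistency, so these pieces need Hölder interpolation using the $L^q$ bounds with $q>4$. For $f$ bounded in $L^q$, $\norm f_{P,4}\le\norm f_{P,2}^{\lambda}\norm f_{P,q}^{1-\lambda}$ for some $\lambda\in(0,1)$. Hence $\norm{\hat g-g_0}_{P,4}$ and $\norm{\hat\alpha-\alpha_0}_{P,4}$ are $o_\Pp(1)$, since $g_0,\alpha_0\in L^q$ and the estimators are $O_\Pp(1)$ in $L^q$. Then $\norm{(\hat\alpha-\alpha_0)(Y-g_0)}_{P,2}\le\norm{\hat\alpha-\alpha_0}_{P,4}\norm{Y-g_0}_{P,4}=o_\Pp(1)$, using $Y,g_0\in L^q\subset L^4$. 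Similarly $\norm{\hat\alpha(\hat g-g_0)}_{P,2}\le\norm{\hat\alpha}_{P,4}\norm{\hat g-g_0}_{P,4}=o_\Pp(1)$. This gives \eqref{eq:pipeline-expansion}.

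\emph{Asymptotic normality.} The summands $\psi_0(O_i)-\theta_0$ are i.i.d.\ with variance $\sigma_0^2$, bounded away from zero and infinity, and they have finite $(2+\delta)$ moments because $\psi_0\in L^{q/2}$ with $q/2>2$. Lindeberg--Feller therefore gives the $N(0,1)$ limit after scaling by $\sigma_0^{-1}$, and Slutsky handles the $o_\Pp(1)$ remainder.

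\emph{Variance consistency.} Bound $\frac1n\sum_i(\hat\psi_i-\psi_0(O_i))^2$: conditionally on $I_k^c$, its fold-$k$ part has expectation $\norm{\hat\psi^{(k)}-\psi_0}_{P,2}^2=o_\Pp(1)$, so it is $o_\Pp(1)$ by Markov's inequality. Next, $\frac1n\sum_i\psi_0(O_i)^2\to P\psi_0^2$ by the law of large numbers, and $\hat\theta_0\to\theta_0$. Combining these three facts through the Cauchy--Schwarz inequality gives $\hat\sigma_0^2-\sigma_0^2=o_\Pp(1)$.
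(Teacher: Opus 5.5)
Your proposal is correct and follows essentially the same route as the paper, whose proof of Theorem~\ref{thm:pipeline-dml} reuses the decomposition from Theorem~\ref{thm:dml-expansion}: Proposition~\ref{prop:pipeline-mixed-bias}(ii) handles the bias term, conditional Chebyshev with $L^2$--$L^q$ interpolation and H\"older handles the empirical-process term, then the classical CLT, and the same Cauchy--Schwarz argument gives consistency of $\hat\sigma_0^2$. One small slip does not affect the argument: the assumptions place $m(O,g_0)$ only in $L^2(P)$ (through $L^2$-boundedness of $m$), so your claim $\psi_0\in L^{q/2}$ is not justified; however, no $(2+\delta)$ moment is needed, because the summands $\psi_0(O_i)-\theta_0$ are i.i.d.\ with a fixed law and finite variance, so the classical CLT suffices, as in the paper.
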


Theorem~\ref{thm:pipeline-dml} extends Theorem~\ref{thm:dml-expansion} from a fixed or drifting representation to a fold-specific learned pipeline.
For
\[
\mathrm{CI}_{0}(1-\gamma)
:=
\left[
\hat\theta_0
\mathbin{\pm}
z_{1-\gamma/2}\frac{\hat\sigma_0}{\sqrt n}
\right],
\]
we have $\Pp\{\theta_0\in\mathrm{CI}_{0}(1-\gamma)\}=1-\gamma+o(1)$.
Thus the tuned pipeline defines a precise population estimand $\theta_0$, and the usual DML point estimate, test statistic, and confidence interval are valid for that estimand without a stability or differentiability condition on the tuning parameter itself.

Proposition~\ref{prop:pipeline-mixed-bias} then supplies the bridge from the pipeline truth to the full-information truth.
In particular, the sufficient condition
\[
\sqrt{n}\,\norm{g^\star-g_0}_{P,2}\,\norm{\alpha^\star-\alpha_0}_{P,2}=o(1).
\]
makes the same interval valid for $\theta^\star$.
More generally, Section~\ref{sec:bounds} uses the exact target-gap identity to enlarge inference from the point $\theta_0$ to a sensitivity region for $\theta^\star$.

We now turn to a complementary question: how to construct regression and Riesz learners whose errors satisfy the $L^2(P)$ product-rate condition, once the representation-and-head pipeline is viewed as inducing a joint prediction space~$\mathcal H$.

\section{Learning the pipeline targets over $\mathcal H$}
\label{sec:learn-H}

Assumption~\ref{ass:pipeline-dml} is phrased directly in terms of the $L^2(P)$ errors of the fitted nuisances $(\hat g^{(-k)},\hat\alpha^{(-k)})$, and that is exactly the level at which a practitioner can usually reason about a modern pipeline.
In an end-to-end workflow, an embedding rule, a fine-tuning choice, and a head architecture are merely intermediate artifacts; what enters the orthogonal score is the resulting pair of predictions.
This section explains how one can justify the product-rate requirement in Assumption~\ref{ass:pipeline-dml} by reducing it to familiar learning-theoretic bounds for two quadratic empirical objectives, one for the regression and one for the Riesz representer.

Because we cross-fit, all learning is foldwise.
For each fold $k$, we train on the complement $I_k^c$ and evaluate on the held-out fold $I_k$.
To keep formulas readable, we adopt the standard cross-fitting convention that, whenever we write an empirical objective $\Pn(\cdot)$ used to \emph{train} $\hat g^{(-k)}$ or $\hat\alpha^{(-k)}$, it is understood as the empirical measure over the part of $I_k^c$ used for the objective (with effective size $n_{\mathrm{tr}}\asymp n$). For a learned candidate menu, this part is independent of the data used to fit the candidates.
Thus, when a bound below displays $1/n$ or $\sqrt{1/n}$, it should be interpreted as $1/n_{\mathrm{tr}}$ or $\sqrt{1/n_{\mathrm{tr}}}$.
For fixed $K$, this distinction affects only constants, and it keeps the discussion aligned with the population condition \eqref{eq:product-rate-template}.

Throughout this section, we write $a\lessssim b$ to mean that $a\le Cb$ for a numerical constant $C$ that does not depend on $n$, $M$, or the confidence level $\delta$ (though it may depend on fixed envelope constants that we display explicitly).

\subsection{From quadratic objectives to $L^2(P)$ rates}
\label{sec:learn-H-quadratic}

The defining feature of the pipeline targets $(g_0,\alpha_0)$ is that both are projections onto the \emph{same} closed linear space $\mathcal H$ (Definition~\ref{def:pipeline-target}).
This shared geometry is not a technicality: it is what makes it possible to control both nuisances with a single kind of argument and then combine them via a simple product condition.
Once we work on $\mathcal H$, both learning problems become quadratic minimization problems, and their excess risks coincide with squared $L^2(P)$ errors.

To make this explicit, define the population regression risk
\[
\mathcal R(g):=\E\!\left[(Y-g(W))^2\right],
\qquad g\in L^2(P_W),
\]
and the population Riesz loss (as in \cite{chernozhukov2022riesz})
\[
\mathcal L(\alpha):=\E\!\left[\alpha(W)^2-2m(O,\alpha)\right],
\qquad \alpha\in L^2(P_W).
\]
By Definition~\ref{def:pipeline-target}, $g_0$ minimizes $\mathcal R(g)$ over $\mathcal H$ and $\alpha_0$ minimizes $\mathcal L(\alpha)$ over $\mathcal H$.

\begin{lemma}[Excess-risk identities on $\mathcal H$]
\label{lem:excess-risk-identities}
Let $(g_0,\alpha_0)$ be the pipeline projections from Definition~\ref{def:pipeline-target}.
Then, for every $g\in\mathcal H$,
\begin{equation}
\mathcal R(g)-\mathcal R(g_0)=\norm{g-g_0}_{P,2}^2,
\label{eq:excess-risk-reg}
\end{equation}
and for every $\alpha\in\mathcal H$,
\begin{equation}
\mathcal L(\alpha)-\mathcal L(\alpha_0)=\norm{\alpha-\alpha_0}_{P,2}^2.
\label{eq:excess-risk-riesz}
\end{equation}
\end{lemma}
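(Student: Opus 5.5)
The plan is to expand each quadratic objective around its minimizer over $\mathcal H$ and show that the cross term vanishes by the first-order (normal) equation characterizing the projection. Both identities are instances of the same Hilbert-space fact: if $Q(f)=\norm{f}_{P,2}^2-2\ell(f)+c$ with $\ell$ linear, and $f_0$ minimizes $Q$ over a closed linear subspace, then $Q(f)-Q(f_0)=\norm{f-f_0}_{P,2}^2$ on that subspace.

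\textbf{Regression identity.} For $g\in\mathcal H$, write $Y-g=(Y-g_0)-(g-g_0)$ and expand:
\[
\mathcal R(g)=\mathcal R(g_0)-2\E[(Y-g_0(W))(g-g_0)(W)]+\norm{g-g_0}_{P,2}^2 .
\]
Since $\mathcal H$ is linear, $g_0+t(g-g_0)\in\mathcal H$ for all real $t$. The map $t\mapsto\mathcal R(g_0+t(g-g_0))$ is a quadratic in $t$ minimized at $t=0$, so its derivative there vanishes. This gives the normal equation $\E[(Y-g_0)h]=0$ for every $h\in\mathcal H$. Applying it with $h=g-g_0$ kills the cross term. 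This requires $Y\in L^2(P)$ so that all terms are finite, which the paper assumes throughout.

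\textbf{Riesz identity.} For $\alpha\in\mathcal H$, set $h=\alpha-\alpha_0\in\mathcal H$. Linearity of $m$ in its second argument gives
\[
\mathcal L(\alpha)=\mathcal L(\alpha_0)+2\bigl(\E[\alpha_0 h]-\E[m(O,h)]\bigr)+\norm{h}_{P,2}^2 .
\]
The middle term vanishes by Proposition~\ref{prop:pipeline-mixed-bias}(i), the restricted Riesz representation $\E[m(O,h)]=\E[h\alpha_0]$ for $h\in\mathcal H$. That representation can also be re-derived directly from the same variational argument along the line $\alpha_0+th$. Here $\E[m(O,h)]$ is finite because the functional is continuous on $L^2(P_W)$ by \eqref{eq:rr-long}.

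\textbf{Main obstacle.} The only delicate point is that the minimizers $g_0$ and $\alpha_0$ exist and lie in $\mathcal H$, since the variational argument needs them there. This holds because $\mathcal H$ is closed. Then $g_0$ is the orthogonal projection of $g^\star$ onto $\mathcal H$, because $\mathcal R(g)=\E[(Y-g^\star)^2]+\norm{g^\star-g}_{P,2}^2$. Likewise, by \eqref{eq:rr-long}, $\mathcal L(\alpha)=\norm{\alpha-\alpha^\star}_{P,2}^2-\norm{\alpha^\star}_{P,2}^2$, so $\alpha_0$ is the projection of $\alpha^\star$. This also gives an alternative one-line proof of both identities via the Pythagorean theorem: for $f_0=\Pi_{\mathcal H}f^\star$ and $f\in\mathcal H$, we have $\norm{f-f^\star}^2-\norm{f_0-f^\star}^2=\norm{f-f_0}^2$.
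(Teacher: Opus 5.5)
Your proposal is correct and follows essentially the same route as the paper. Both arguments expand each quadratic objective around its minimizer over $\mathcal H$ and kill the cross term, using the normal equation $\E[(Y-g_0)h]=0$ for $h\in\mathcal H$ in the regression case and the restricted Riesz identity of Proposition~\ref{prop:pipeline-mixed-bias}(i) with $h=\alpha-\alpha_0$ in the Riesz case; your line-derivative derivation of the normal equation and your Pythagorean remark simply make explicit the projection facts the paper invokes.
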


The lemma says that, on $\mathcal H$, there is no gap between “risk control” and “prediction error control”: they are the same statement.
This is the bridge from learning theory to the high-level requirements in Assumption~\ref{ass:pipeline-dml}.
A proof is given in Appendix~\ref{app:proofs-sec3}; it is an application of orthogonality for $L^2(P)$ projections and the restricted Riesz representation in Proposition~\ref{prop:pipeline-mixed-bias}(i).

A convenient template is to assume foldwise oracle inequalities of the form
\begin{align}
\mathcal R(\hat g^{(-k)})-\mathcal R(g_{0,n})
&= O_\Pp(r_{g,n}^2),
\label{eq:oracle-reg}\\
\mathcal L(\hat\alpha^{(-k)})-\mathcal L(\alpha_{0,n})
&= O_\Pp(r_{\alpha,n}^2),
\label{eq:oracle-riesz}
\end{align}
where $g_{0,n}$ and $\alpha_{0,n}$ are population minimizers over approximation spaces $\mathcal H_{g,n}\subseteq\mathcal H$ and $\mathcal H_{\alpha,n}\subseteq\mathcal H$, and $r_{g,n},r_{\alpha,n}$ are estimation (complexity) terms.
Writing $a_{g,n}:=\norm{g_{0,n}-g_0}_{P,2}$ and $a_{\alpha,n}:=\norm{\alpha_{0,n}-\alpha_0}_{P,2}$ for approximation errors, Lemma~\ref{lem:excess-risk-identities} yields
\[
\norm{\hat g^{(-k)}-g_0}_{P,2}=O_\Pp(r_{g,n}+a_{g,n}),
\qquad
\norm{\hat\alpha^{(-k)}-\alpha_0}_{P,2}=O_\Pp(r_{\alpha,n}+a_{\alpha,n}),
\]
uniformly over folds.
Given individual nuisance consistency and the remaining regularity conditions in Assumption~\ref{ass:pipeline-dml}, its product-rate requirement is verified by the design constraint
\begin{equation}
\sqrt{n}\,(r_{g,n}+a_{g,n})\,(r_{\alpha,n}+a_{\alpha,n})\to 0.
\label{eq:product-rate-template}
\end{equation}
The structure is the same as in standard DML with fixed covariates: each nuisance error of order $o_\Pp(n^{-1/4})$ suffices, and faster learning of one nuisance can compensate for slower learning of the other. The product condition makes the remaining bias smaller than the sampling noise.
What is new in the pipeline setting is that the approximation and complexity terms are governed by the induced information space $\mathcal H$.

The remainder of the section develops two complementary implementations of this template.
The first is a \emph{proper-learning route}: the analyst specifies a controlled star-shaped or convex approximation space inside $\mathcal H$ and minimizes each quadratic loss over that space.
This route is attractive when the pipeline admits an explicit sieve or convex aggregation class whose approximation and critical-radius terms can be evaluated.
The second is an \emph{improper-learning route}: the analyst begins with a finite, possibly nonconvex menu of fitted pipelines and applies star aggregation.
This route is designed for modular empirical workflows in which prompts, embeddings, and heads arrive as discrete candidates, and its complexity depends logarithmically on the menu size.
Both routes lead to the same destination---foldwise $L^2(P)$ rates that can be inserted into \eqref{eq:product-rate-template}---but they organize the first-stage search differently.

\subsection{Proper learning on controlled approximations to $\mathcal H$}
\label{sec:route-i}

When we can write down controlled approximation spaces inside $\mathcal H$, a natural approach is to learn each nuisance by (regularized) ERM over such a space on each training fold.
The practical advantage is transparency: once $\mathcal H_{g,n}$ and $\mathcal H_{\alpha,n}$ are chosen, we can read off $r_{g,n}$ and $r_{\alpha,n}$ from standard complexity calculations, and then check \eqref{eq:product-rate-template}.

For the regression nuisance, we can take $\hat g^{(-k)}$ to be a (regularized) empirical minimizer of the square loss over $\mathcal H_{g,n}$,
\[
\hat g^{(-k)}\in\arg\min_{g\in\mathcal H_{g,n}} \Pn\!\left[(Y-g(W))^2\right],
\]
and for the representer nuisance we analogously minimize the empirical Riesz loss over $\mathcal H_{\alpha,n}$,
\[
\hat\alpha^{(-k)}\in\arg\min_{\alpha\in\mathcal H_{\alpha,n}} \Pn\!\left[\alpha(W)^2-2m(O,\alpha)\right].
\]
For square loss on a star-shaped class, the sharp way to express $r_{g,n}$ is through a localized fixed point, often called the \emph{critical radius}.
The next proposition records a representative form; it is standard and we cite it rather than reprove it.

\begin{proposition}[Square-loss ERM on a star-shaped class]
\label{prop:crit-radius-reg}
Let $\mathcal F\subseteq L^2(P_W)$ be fixed, or constructed independently of the sample used in $\Pn$, and star-shaped around
\[
f_0\in\arg\min_{f\in\mathcal F}\E[(Y-f(W))^2].
\]
Thus $f_0$ is the population best predictor in the class; star-shapedness and this optimality supply the quadratic margin. For an independently constructed random class, the conclusion is conditional on that class.
Assume bounded envelopes: $|Y|\le B_Y$ and $\sup_{f\in\mathcal F}|f(W)|\le B_F$ almost surely.
Let $\hat f\in\arg\min_{f\in\mathcal F}\Pn[(Y-f(W))^2]$.
Define the localized Rademacher functional
\[
\mathcal R_n(r):=\E\Bigl[\sup_{\substack{f\in\mathcal F\\ \norm{f-f_0}_{P,2}\le r}}
\frac{1}{n}\sum_{i=1}^n\varepsilon_i\{f(W_i)-f_0(W_i)\}\Bigr],
\]
and the corresponding critical radius $r_n:=\inf\{r>0:\ \mathcal R_n(r)\le r^2/8\}$.
Then there is a universal constant $C=C(B_Y,B_F)$ such that, for any $\delta\in(0,1)$, with probability at least $1-\delta$,
\begin{equation}
\norm{\hat f-f_0}_{P,2}\ \le\ C\Bigl(r_n+\sqrt{\tfrac{\log(1/\delta)}{n}}\Bigr).
\label{eq:crit-radius-concl}
\end{equation}
\end{proposition}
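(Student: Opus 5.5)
The plan is the standard localization argument for least squares over a star-shaped class; the paper says the result is cited rather than reproved, so I only sketch it. Write $\ell_f(o)=(y-f(w))^2-(y-f_0(w))^2$ for the excess loss. The first step is a \emph{quadratic margin} from star-shapedness and optimality of $f_0$. For every $f\in\mathcal F$ the segment $f_0+t(f-f_0)$, $t\in[0,1]$, lies in $\mathcal F$. Minimality of $f_0$ along this segment gives the one-sided first-order condition $\E[(Y-f_0)(f-f_0)]\le 0$. Expanding the square then yields
\[
P\ell_f=\norm{f-f_0}_{P,2}^2-2\E[(Y-f_0)(f-f_0)]\ge \norm{f-f_0}_{P,2}^2 .
\]
Since $P_n\ell_{\hat f}\le 0$ by definition of ERM, we get $\norm{\hat f-f_0}_{P,2}^2\le (P-P_n)\ell_{\hat f}$. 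The task is therefore to control the centered empirical process of $\ell_f$ locally.

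The second step is a uniform local deviation bound. Write $\ell_f=(f-f_0)\,(f+f_0-2Y)$. The second factor is bounded by $2B_F+2B_Y$, so $|\ell_f|\le 4B_F(B_F+B_Y)$ and $\norm{\ell_f}_{P,2}\le 2(B_F+B_Y)\norm{f-f_0}_{P,2}$. On the shell $\{\norm{f-f_0}_{P,2}\le r\}$ I apply symmetrization and then Talagrand--Ledoux contraction; the map $u\mapsto u(u+2f_0-2Y)$ is $2(2B_F+B_Y)$-Lipschitz on the bounded range. This bounds the expected supremum of $(P_n-P)\ell_f$ by $C\mathcal R_n(r)$. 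Talagrand's (Bousquet's) concentration inequality then gives, with probability $1-\delta$,
\[
\sup_{\norm{f-f_0}\le r}|(P-P_n)\ell_f|\le C\Bigl(\mathcal R_n(r)+r\sqrt{\log(1/\delta)/n}+\log(1/\delta)/n\Bigr).
\]

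The third step, which I expect to be the main obstacle, is to pass from fixed radii to the data-dependent radius $\norm{\hat f-f_0}$. Star-shapedness makes $r\mapsto\mathcal R_n(r)/r$ nonincreasing. Hence for $r\ge r_n$ we have $\mathcal R_n(r)\le r\,r_n/8$. I would use a peeling argument over dyadic shells $r\in[2^{j}\rho,2^{j+1}\rho]$ with $\rho=r_n+\sqrt{\log(1/\delta)/n}$. Alternatively, I would use the rescaling trick: if $\norm{f-f_0}=s>\rho$, then $g=f_0+(\rho/s)(f-f_0)\in\mathcal F$, and the linear part of $\ell_f$ scales. The term quadratic in $f-f_0$ needs care here. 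To handle it, I would split $\ell_f=(f-f_0)^2-2(Y-f_0)(f-f_0)$. The linear term scales exactly under the rescaling. The quadratic term is controlled by a lower-isometry / small-ball argument: by contraction, $\norm{f-f_0}_{P_n,2}^2\ge \tfrac12\norm{f-f_0}_{P,2}^2$ uniformly above the critical radius. The log-union over $O(\log n)$ shells is absorbed into $\log(1/\delta)$ up to constants, or avoided entirely via the ratio-type bound.

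Combining the three steps: on the high-probability event, $s:=\norm{\hat f-f_0}_{P,2}>C\rho$ would give
\[
s^2\le C\bigl(s\,r_n+s\sqrt{\log(1/\delta)/n}+\log(1/\delta)/n\bigr),
\]
which is a contradiction for a suitable $C=C(B_Y,B_F)$. This yields \eqref{eq:crit-radius-concl}. For an independently constructed random class, the whole argument runs conditionally on the class, since the sample in $\Pn$ remains i.i.d.\ given it.
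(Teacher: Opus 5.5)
Your argument is correct, but it is the classical local-Rademacher proof, not the route the paper points to. The paper gives no proof and cites offset complexity \cite{liang2015offset}. There, the basic inequality is turned into a bound by a single supremum of the multiplier process minus a quadratic offset in $f-f_0$, so localization comes from the offset rather than from a fixed point of $\mathcal R_n$; this is the mechanism of Lemma~\ref{lemma:star-margin} and \eqref{eq:offset-EP} in Appendix~\ref{app:proofs-sec3}. Your route reaches the critical radius of the population-localized $\mathcal R_n$ in the statement directly, using textbook tools: the margin from star-shapedness, contraction, Bousquet's inequality, and monotonicity of $\mathcal R_n(r)/r$. The offset route extends verbatim to improper star aggregation on nonconvex menus, where ERM has no margin. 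Relating its offset complexity back to $r_n$, however, uses the same star-shaped monotonicity you exploit.

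One correction to Step~3. A uniform union bound over $O(\log n)$ shells adds $\log\log n$ under the square root. This is not absorbed into $\log(1/\delta)$ up to constants, and it is not dominated by $r_n$ when, say, $r_n\asymp\sqrt{d/n}$ with $d$ fixed. There are two fixes.
\begin{enumerate}
\item Weight the shells, $\delta_j=\delta 2^{-j-1}$, so the extra $\sqrt{j/n}$ is beaten by the geometric growth of $2^j\rho$.
\item Commit to your rescaling route at the single radius $\rho$. Over $\norm{g-f_0}_{P,2}\le\rho$, control the multiplier process $(\Pn-P)[(Y-f_0)(g-f_0)]$, which is exactly linear along rays from $f_0$. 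Also control the deviation of $(g-f_0)^2$, whose ratio to $\norm{g-f_0}_{P,2}^2$ is scale-invariant along rays.
\end{enumerate}
For the second fix, let $s=\norm{\hat f-f_0}_{P,2}>\rho$. The basic inequality $\norm{\hat f-f_0}_{\Pn,2}^2\le 2\Pn[(Y-f_0)(\hat f-f_0)]$ and the margin then give
\[
\tfrac{s^2}{2}\le 2\,\tfrac{s}{\rho}\sup_{\norm{g-f_0}_{P,2}\le\rho}(\Pn-P)[(Y-f_0)(g-f_0)]\le \tfrac{s\rho}{4},
\]
once $\rho\asymp r_n+\sqrt{\log(2/\delta)/n}$ is large enough. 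This is a contradiction. The two events cost $\log(2/\delta)$ instead of $\log(1/\delta)$, which is harmless for $\delta\le 1/2$.
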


Results of this kind can be proved cleanly using offset complexity and localization; see \cite{liang2015offset} and the references therein.
In practice, the point is that once we know a bound on $r_n$ for the induced approximation space, \eqref{eq:crit-radius-concl} converts it directly into an $L^2(P)$ learning rate.

For the representer nuisance, an analogous learning-theoretic statement is available via recent work on Riesz regression.
Theorem~3.2 of 
\cite{chernozhukov2021automatic} 
provides high-probability rates for empirical minimizers of the Riesz loss under bounded-class assumptions (rather than sub-Gaussian tails), which is well matched to the kinds of envelope conditions that are natural when the primitive candidates are produced by a finite pipeline menu.
To keep the discussion concrete, we record the rate form in our notation.

\begin{proposition}[Riesz regression rates for bounded representer classes]
\label{prop:riesz-theorem21}
Let $\mathcal A\subseteq\mathcal H$ be a representer class fixed or constructed independently of the sample used in $\Pn$; in the latter case the statement is conditional on that class. Let
\[
\hat\alpha\in\arg\min_{\alpha\in\mathcal A}\Pn\!\left[\alpha(W)^2-2m(O,\alpha)\right].
\]
Let $\alpha_{0,n}\in\arg\min_{\alpha\in\mathcal A}\mathcal L(\alpha)$ be the population minimizer over $\mathcal A$.
Assume the bounded-class conditions of Theorem~3.2 of \cite{chernozhukov2021automatic}.
Two components are worth emphasizing.
First, $m$ is mean-square continuous on $\mathrm{span}(\mathcal A)$: for some $C_m>0$, $\E[m(O,h)^2]\le C_m\,\norm{h}_{P,2}^2$ for all $h$ in the relevant span.
Second, the deviation sets appearing in the proof (star-shaped hulls of $\mathcal A-\alpha_{0,n}$ and of $m\circ\mathcal A-m\circ\alpha_{0,n}$) admit bounded envelopes.
Let $\delta_n$ denote an upper bound on the critical radius in that theorem, and let $\zeta\in(0,1)$.
Then, with probability at least $1-\zeta$,
\begin{equation}
\norm{\hat\alpha-\alpha_0}_{P,2}^2
\ \lessssim\
\norm{\alpha_{0,n}-\alpha_0}_{P,2}^2
\;+\;
C_m\,\delta_n^2
\;+\;
C_m\,\frac{\log(1/\zeta)}{n}.
\label{eq:riesz-theorem21}
\end{equation}
\end{proposition}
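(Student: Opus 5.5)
The plan is to split the error through the class-restricted minimizer $\alpha_{0,n}$ and handle the two pieces separately: the approximation piece by projection geometry on $\mathcal H$, the estimation piece by the cited Riesz regression theorem. Since $\hat\alpha,\alpha_{0,n}\in\mathcal A\subseteq\mathcal H$, the elementary inequality $(a+b)^2\le 2a^2+2b^2$ gives
\[
\norm{\hat\alpha-\alpha_0}_{P,2}^2\le 2\norm{\hat\alpha-\alpha_{0,n}}_{P,2}^2+2\norm{\alpha_{0,n}-\alpha_0}_{P,2}^2 .
\]
The second term is exactly the approximation term in \eqref{eq:riesz-theorem21}, so it remains to bound the first by $C_m\delta_n^2+C_m\log(1/\zeta)/n$ with probability at least $1-\zeta$. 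Lemma~\ref{lem:excess-risk-identities} supplies the geometric fact that on $\mathcal H$ the Riesz excess loss relative to $\alpha_0$ equals $\norm{\alpha-\alpha_0}_{P,2}^2$. Consequently
\[
\mathcal L(\alpha)-\mathcal L(\alpha_{0,n})=\norm{\alpha-\alpha_0}_{P,2}^2-\norm{\alpha_{0,n}-\alpha_0}_{P,2}^2
\]
for all $\alpha\in\mathcal A$. This identity lets us move freely between excess-risk and distance statements.

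For the estimation piece, we condition on the class $\mathcal A$ when it is built independently, so that it is fixed relative to the sample in $\Pn$. We then invoke Theorem~3.2 of \cite{chernozhukov2021automatic} directly. That theorem controls $\norm{\hat\alpha-\alpha_{0,n}}_{P,2}$ for the empirical Riesz minimizer over a class, with the population comparator being the class minimizer. Its ingredients are the ones we assume: mean-square continuity $\E[m(O,h)^2]\le C_m\norm{h}_{P,2}^2$ on the span, and bounded envelopes for the star hulls of $\mathcal A-\alpha_{0,n}$ and $m\circ\mathcal A-m\circ\alpha_{0,n}$. With these, it gives $\norm{\hat\alpha-\alpha_{0,n}}_{P,2}^2\lessssim C_m\delta_n^2+C_m\log(1/\zeta)/n$ with probability at least $1-\zeta$. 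If that theorem is instead stated with a misspecification term, we simply absorb it into $\norm{\alpha_{0,n}-\alpha_0}_{P,2}^2$.

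The main obstacle is that $\mathcal A$ need not be convex or star-shaped around $\alpha_{0,n}$. Without that shape, the standard margin inequality $\mathcal L(\alpha)-\mathcal L(\alpha_{0,n})\ge\norm{\alpha-\alpha_{0,n}}_{P,2}^2$ can fail. The cited theorem addresses this by localizing over star hulls, and our envelope assumptions are stated on exactly those hulls, so the plan is to verify that its hypotheses match ours, not to reprove it. If a self-contained route were needed, I would first use basic-inequality localization. Writing $\ell_\alpha=\alpha^2-2m(O,\alpha)$, the basic inequality $\Pn(\ell_{\hat\alpha}-\ell_{\alpha_{0,n}})\le0$ combined with the population excess-risk identity above bounds the excess loss by a centered empirical process over the star hull. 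The quadratic part $\alpha^2-\alpha_{0,n}^2=(\alpha-\alpha_{0,n})(\alpha+\alpha_{0,n})$ is handled by contraction using the bounded envelope. The linear part $m(O,\alpha-\alpha_{0,n})$ has variance at most $C_m\norm{\alpha-\alpha_{0,n}}_{P,2}^2$, which is where the factor $C_m$ enters. Talagrand's inequality at the critical radius $\delta_n$ then gives the $\log(1/\zeta)/n$ deviation. Finally, the constant factors are absorbed into $\lessssim$.
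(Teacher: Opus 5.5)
Citing Theorem~3.2 of \cite{chernozhukov2021automatic} is exactly what the paper does; it gives no argument beyond restating that theorem in the present notation. Your route through $\alpha_{0,n}$, however, contains a step that fails. You claim the theorem yields $\norm{\hat\alpha-\alpha_{0,n}}_{P,2}^2\lesssim C_m\delta_n^2+C_m\log(1/\zeta)/n$ with no approximation term, for a class you explicitly allow to be non-star-shaped around $\alpha_{0,n}$. No valid result can deliver this.

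Take $\mathcal A=\{\alpha_0+h,\ \alpha_0-h\}$ with $0\neq h\in\cH$ and $\alpha_0$, $h$ bounded. By Proposition~\ref{prop:pipeline-mixed-bias}(i), $\ell_{\alpha_0+h}-\ell_{\alpha_0-h}=4\{\alpha_0h-m(O,h)\}$ has mean zero, so the two candidates tie in population. Whichever one is $\alpha_{0,n}$, the CLT (with generically positive variance) makes $\hat\alpha$ the other one with probability tending to $1/2$. On that event $\norm{\hat\alpha-\alpha_{0,n}}_{P,2}=2\norm{h}_{P,2}$ stays fixed, while both star hulls are segments with critical radius $O(n^{-1/2})$.

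Localizing over star hulls controls the \emph{complexity} of the empirical process; it does not create curvature at $\alpha_{0,n}$. The only curvature available is at $\alpha_0$, the minimizer over the linear space $\cH$. Your own identity $\mathcal L(\alpha)-\mathcal L(\alpha_{0,n})=\norm{\alpha-\alpha_0}_{P,2}^2-\norm{\alpha_{0,n}-\alpha_0}_{P,2}^2$ turns excess risk into a difference of squared distances to $\alpha_0$, not into $\norm{\alpha-\alpha_{0,n}}_{P,2}^2$. That is why \eqref{eq:riesz-theorem21} measures error from $\alpha_0$ and pays the approximation term, and why the triangle-inequality split cannot be completed.

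The repair uses only ingredients you already list: drop the split and bound $x:=\norm{\hat\alpha-\alpha_0}_{P,2}$ directly. Write $a:=\norm{\alpha_{0,n}-\alpha_0}_{P,2}$. The basic inequality $\Pn(\ell_{\hat\alpha}-\ell_{\alpha_{0,n}})\le 0$ and your identity give $x^2-a^2\le(P-\Pn)(\ell_{\hat\alpha}-\ell_{\alpha_{0,n}})$. Localization over the two star hulls, where mean-square continuity brings in $C_m$, bounds the right side with probability $1-\zeta$ by a multiple of $\sqrt{C_m}\,\bar\delta\,\norm{\hat\alpha-\alpha_{0,n}}_{P,2}+C_m\bar\delta^2$. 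Here $\bar\delta:=\delta_n+\sqrt{\log(1/\zeta)/n}$, with envelope constants suppressed. Inserting $\norm{\hat\alpha-\alpha_{0,n}}_{P,2}\le x+a$ and applying AM--GM gives $x^2\lesssim a^2+C_m\bar\delta^2$, which is \eqref{eq:riesz-theorem21} with a constant larger than one on the approximation term.

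Relatedly, your fallback of ``absorbing'' a misspecification term works only if the theorem is applied with $\cH$ as the ambient Hilbert space, where $\alpha_0$ is the representer by Proposition~\ref{prop:pipeline-mixed-bias}(i). Suppose instead the bound reads $\norm{\hat\alpha-\alpha^\star}_{P,2}^2\le C\norm{\alpha_{0,n}-\alpha^\star}_{P,2}^2+\cdots$ with $C>1$. After Pythagoras it leaves a term $(C-1)\norm{\alpha_0-\alpha^\star}_{P,2}^2$, which does not vanish.
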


Propositions~\ref{prop:crit-radius-reg} and \ref{prop:riesz-theorem21} clarify what one needs to check in order to validate \eqref{eq:oracle-reg}--\eqref{eq:oracle-riesz}: one needs a complexity bound (a critical radius) for the regression class and for the representer class, together with approximation control for the chosen sieves.

A convenient worked example is the finite-dictionary setting, which is the one most directly aligned with “try a handful of representations and heads and keep what works.”
Suppose an independent part of the training fold yields a finite menu of bounded candidate predictors $\mathcal F=\{f_1,\dots,f_M\}\subseteq\mathcal H$. The aggregation objective uses the remaining part, so its observations are independent of the fitted menu.
For a radius $B>0$, consider the signed $\ell_1$-hull
\[
\mathcal F_B
:=
\left\{f=\sum_{j=1}^M b_j f_j:\ \sum_{j=1}^M |b_j|\le B\right\}.
\]
This is a convex (hence star-shaped) approximation to $\overline{\mathrm{span}}(\mathcal F)$ that is controlled by a single tuning parameter $B$.
A localized complexity calculation (recorded in Appendix~\ref{app:proofs-sec3}) yields the schematic critical-radius bound
\begin{equation}
r_n\ \lessssim\
\min\Biggl\{\sqrt{\frac{M}{n}},\ \sqrt{B\max_{1\le j\le M}\norm{f_j}_\infty}\Bigl(\frac{\log(2M)}{n}\Bigr)^{1/4}\Biggr\}.
\label{eq:l1-radius}
\end{equation}
Thus, under the boundedness assumptions of Proposition~\ref{prop:crit-radius-reg}, ERM over $\mathcal F_B$ yields
\[
\norm{\hat f-f_{0,B}}_{P,2}=O_\Pp(r_n),
\]
where $f_{0,B}$ is the $L^2(P)$ projection onto $\mathcal F_B$.
For representer learning, apply the same $\ell_1$ geometry separately to the bounded dictionaries $a_j$ and $m(O,a_j)$; mean-square continuity enters the Riesz learning theorem. Appendix~\ref{app:proofs-sec3} records this reduction.
The practical takeaway is that the product condition \eqref{eq:product-rate-template} becomes a transparent constraint linking $(M,B)$ choices for the regression and the representer learners.

Proper convex aggregation borrows strength across candidates by fitting a genuinely mixed predictor over an $\ell_1$ hull or another controlled convex class.
Its estimation term grows with the effective size of that class: in \eqref{eq:l1-radius}, the first regime behaves like $\sqrt{M/n}$, so attaining an $n^{-1/4}$ nuisance rate typically calls for $M=o(\sqrt n)$, up to approximation terms and fixed constants.
The next subsection gives the complementary construction for much larger, nonconvex menus.

\subsection{Improper learning by star aggregation}
\label{sec:route-ii}

Many applied pipelines naturally produce a \emph{menu} of candidates rather than a single convex class.
Varying prompts or checkpoints, searching a grid of hyperparameters, or swapping heads yields a nonconvex collection of fitted predictors in $\mathcal H$ on each training fold.
Star aggregation converts this discrete menu into a predictor that remains in the same linear information space while achieving the optimal $\log(M)/n$ model-selection-aggregation remainder; see \cite{audibert2007progressive}.
It can also approximate the population projections $(g_0,\alpha_0)$ more closely than a rule constrained to output a single candidate.

Star aggregation is exactly this kind of cheap improper refinement.
We start from the empirical winner and then perform a one-dimensional refit along each segment that connects the winner to a competitor.
Because every segment lies in $\mathcal H$ whenever the endpoints do, this refinement does not enlarge the information space made available by the pipeline; it merely exploits the curvature of the quadratic objective.
The payoff is that, even when $M$ is very large, the star aggregate enjoys high-probability oracle inequalities with a $\log(M)/n$ remainder, as we show below for both the regression and Riesz losses.

First, we make explicit what we mean when we refer to \textit{star estimators}.

\begin{definition}[Star estimator]
\label{def:star-estimator}
Let $\mathcal C\subseteq\mathcal H$ be a candidate set and let
$Q_n(c):=\Pn[\ell(O,c)]$ be an empirical objective.
First choose an empirical-best center
\[
\tilde c\in\arg\min_{c\in\mathcal C}Q_n(c).
\]
Its star hull is
\[
\mathrm{star}(\mathcal C,\tilde c)
:=
\{\lambda\tilde c+(1-\lambda)c:
c\in\mathcal C,\ \lambda\in[0,1]\}.
\]
Any empirical minimizer
\[
\hat c\in
\arg\min_{c\in\mathrm{star}(\mathcal C,\tilde c)}Q_n(c)
\]
is called a \emph{star estimator} over $\mathcal C$.
\end{definition}

For the regression nuisance, take $\mathcal C=\mathcal F$ and use squared loss.
Let $\mathcal F\subseteq\mathcal H$ be a candidate class on a given training fold.
We first compute an empirical risk minimizer
\[
\tilde f\in\arg\min_{f\in\mathcal F}\Pn\!\left[(Y-f(W))^2\right],
\]
and then re-optimize over the star hull around this center,
\[
\hat f\in\arg\min_{f\in\mathrm{star}(\mathcal F,\tilde f)}\Pn\!\left[(Y-f(W))^2\right],
\qquad
\mathrm{star}(\mathcal F,\tilde f)
:=
\{\lambda\tilde f+(1-\lambda)f:\ f\in\mathcal F,\ \lambda\in[0,1]\}.
\]
The same construction applies to the representer objective: for $\mathcal A\subseteq\mathcal H$, compute
\[
\tilde\alpha\in\arg\min_{\alpha\in\mathcal A}\Pn\!\left[\alpha(W)^2-2m(O,\alpha)\right],
\qquad
\hat\alpha\in\arg\min_{\alpha\in\mathrm{star}(\mathcal A,\tilde\alpha)}\Pn\!\left[\alpha(W)^2-2m(O,\alpha)\right].
\]
The key operational fact is that the second step is a one-dimensional quadratic minimization along each segment, so it has a closed form.
If we fix $\tilde f$ and a candidate $f\in\mathcal F$ and write $f_\lambda:=\lambda\tilde f+(1-\lambda)f$, then the empirical square loss is quadratic in $\lambda$ and the unconstrained minimizer is
\begin{equation}
\lambda^\ast(f)
=
\frac{\Pn[(Y-f(W))(\tilde f(W)-f(W))]}{\Pn[(\tilde f(W)-f(W))^2]},
\label{eq:lambda-reg}
\end{equation}
projected onto $[0,1]$ (with the convention that any $\lambda$ is optimal if the denominator is zero).
Thus, for a finite list $\mathcal F=\{f_1,\dots,f_M\}$, star aggregation can be computed by scanning $j=1,\dots,M$, evaluating \eqref{eq:lambda-reg}, and picking the best segment fit.
For the empirical Riesz loss, if we fix $\tilde\alpha$ and $\alpha\in\mathcal A$ and set $\alpha_\lambda:=\lambda\tilde\alpha+(1-\lambda)\alpha$, then linearity of $m$ again yields a quadratic in $\lambda$, whose unconstrained minimizer is
\begin{equation}
\lambda^\ast(\alpha)
=
-\frac{\Pn\!\left[\alpha(W)\,(\tilde\alpha-\alpha)(W)-m(O,\tilde\alpha-\alpha)\right]}{\Pn[(\tilde\alpha(W)-\alpha(W))^2]},
\label{eq:lambda-riesz}
\end{equation}
projected onto $[0,1]$.

The finite-menu regime has a direct economic interpretation.
In a multimodal demand analysis, for example, an analyst may fit outcome regressions and Riesz representers using text embeddings, image embeddings, tabular product attributes, and their combinations, perhaps crossed with a small set of heads or regularization choices.
These pipelines form a discrete, nonconvex menu because changing a foundation-model checkpoint or feature modality changes the fitted function rather than a coefficient in a common parametric class.
Such menus are common in applied work: researchers typically compare a finite collection of economically defensible specifications, while the regression and representer objectives may favor different members of that collection.
Star aggregation retains this modular workflow, combines candidates when the data support a mixture, and provides a rate that remains useful as the menu grows.

For finite menus, Definition~\ref{def:star-estimator} yields
explicit $\log(M)/n$ oracle inequalities for a menu fixed independently of the aggregation sample. No additional split between selecting the star center and fitting its weights is required.  The next two results establish the guarantees
for the outcome and Riesz objectives used by DML.

\begin{proposition}[Star aggregation oracle inequality: square loss]
\label{prop:star-finite-reg}
Let $(W_i,Y_i)_{i=1}^n$ be i.i.d.\ with $|Y|\le B_Y$ a.s.,
and let $\mathcal F=\{f_1,\dots,f_M\}\subseteq\mathcal H$
with $\max_{j}\norm{f_j}_\infty\le B_F$, fixed or constructed independently of these observations. In the latter case, the probability statement is conditional on the menu.
Let $\hat f$ be the star estimator in Definition~\ref{def:star-estimator}
for the empirical squared loss
$R_n(f):=\Pn[(Y-f(W))^2]$.
Then there exists a universal constant $C>0$ such that for every
$\delta\in(0,1)$, with probability at least $1-\delta$,
\begin{equation}
\label{eq:star-finite-reg-oracle}
\norm{\hat f-g_0}_{P,2}^2
\ \le\
\min_{1\le j\le M}\norm{f_j-g_0}_{P,2}^2
\;+\;
C\,(B_Y+B_F)^2\,\frac{\log(M/\delta)}{n},
\end{equation}
where $g_0$ is the $L^2(P)$ projection of $Y$ onto $\mathcal H$.
\end{proposition}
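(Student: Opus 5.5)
The plan is to reduce the claim to an excess-risk oracle inequality for the star estimator, and then convert excess risk into $L^2(P)$ distance using the geometry of $\mathcal H$. Every candidate $f_j$ lies in $\mathcal H$, and $\mathcal H$ is a linear space, so the whole star hull $\mathrm{star}(\mathcal F,\tilde f)$ lies in $\mathcal H$. Lemma~\ref{lem:excess-risk-identities} then gives $\mathcal R(f)-\mathcal R(g_0)=\norm{f-g_0}_{P,2}^2$ for $\hat f$ and for every $f_j$. Subtracting these identities, \eqref{eq:star-finite-reg-oracle} is equivalent to
\[
\mathcal R(\hat f)-\min_{1\le j\le M}\mathcal R(f_j)\ \le\ C(B_Y+B_F)^2\,\frac{\log(M/\delta)}{n}
\]
with probability at least $1-\delta$. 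This is Audibert's model-selection-aggregation bound, here without an additive approximation term. The misspecification relative to $g_0$ is absorbed exactly by the Pythagorean identity. We work conditionally on the menu, which is legitimate because the menu is independent of the aggregation sample.

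To prove the excess-risk bound I would follow the offset-complexity route of \cite{liang2015offset}. Let $f^\dagger\in\arg\min_j\mathcal R(f_j)$.

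\textbf{Step 1 (deterministic geometric inequality).} The first step is the geometric inequality for star estimators. Because $\hat f$ minimizes the empirical quadratic loss over $\mathrm{star}(\mathcal F,\tilde f)$, and $\tilde f$ minimizes it over $\mathcal F$, one obtains for some numerical $c>0$ the bound
\[
R_n(f^\dagger)-R_n(\hat f)\ \ge\ c\,\Pn\bigl[(\hat f-f^\dagger)^2\bigr].
\]
The argument splits into two cases. If $\hat f$ lies on the segment toward $f^\dagger$, first-order optimality on that segment gives the inequality directly. Otherwise, one compares $R_n$ along the segment from $\tilde f$ to $f^\dagger$ and uses the three-point angle argument of Audibert and Liang--Rakhlin--Sridharan.

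\textbf{Step 2 (decomposition into an offset process).} Expanding the square loss, write
\[
\mathcal R(\hat f)-\mathcal R(f^\dagger)=\bigl[(P-\Pn)\ell_{\hat f,f^\dagger}\bigr]+\bigl[R_n(\hat f)-R_n(f^\dagger)\bigr],
\qquad
\ell_{f,f^\dagger}:=(Y-f)^2-(Y-f^\dagger)^2 .
\]
Inserting Step 1 bounds the right-hand side by
\[
\sup_{h}\Bigl\{(P-\Pn)\ell_{h,f^\dagger}-c\,\Pn\bigl[(h-f^\dagger)^2\bigr]\Bigr\},
\]
where $h$ ranges over the union of the $M^2$ segments $[f_i,f_j]$ (the center is random, so we range over all possible centers). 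This is an offset empirical process over a set that is a finite union of one-dimensional pieces.

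\textbf{Step 3 (tail bound for the offset process).} The last step bounds this supremum with probability $1-\delta$ by $C(B_Y+B_F)^2\log(M/\delta)/n$. For a single fixed $h$, the increment $\ell_{h,f^\dagger}$ is bounded by $4(B_Y+B_F)^2$ in absolute value. Its variance is at most $4(B_Y+B_F)^2\norm{h-f^\dagger}_{P,2}^2$. Bernstein's inequality then shows that the linear term minus a constant times $\norm{h-f^\dagger}_{P,2}^2$ is $O\bigl((B_Y+B_F)^2\log(1/\delta)/n\bigr)$. To swap the population quadratic for the empirical one $\Pn[(h-f^\dagger)^2]$, I would use a lower-tail Bernstein (or Bousquet) bound for bounded squares, which holds up to a further $\log(1/\delta)/n$ term.

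\textbf{Main obstacle.} I expect the hardest step to be handling the continuum of $\lambda$ in Step 3 with only a $\log M$, rather than $\log n$, price. On each segment $h_\lambda-f^\dagger$ is affine in $\lambda$. Hence both the loss increment and the quadratic penalty are polynomials of degree at most two in $\lambda$, with coefficients that are averages of finitely many bounded functions. I would therefore apply Bernstein to these $O(1)$ coefficient averages per segment pair, a union bound over $M^2$ pairs that costs $2\log M$, and then maximize the resulting deterministic quadratic in $\lambda$ explicitly, where the negative offset dominates. Should this prove delicate, the fallback is a chaining/offset-Rademacher bound for the union of $M^2$ one-dimensional classes, as in \cite{liang2015offset}, whose offset complexity is $O(\log M/n)$.
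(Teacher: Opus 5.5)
\paragraph{Where your proof agrees with the paper.} Your reduction and Steps~1--2 match the paper's. Like you, it conditions on the menu and fixes the population-best candidate as comparator before applying any probability bound. It cites the star margin $R_n(f^*)-R_n(\hat f)\ge\tfrac19\Pn(\hat f-f^*)^2$, bounds $P\cE_{\hat f}$ by an offset process over $\conv_2(\cF)$ (the union of the $M^2$ segments), and finishes with Lemma~\ref{lem:excess-risk-identities}. The only cosmetic difference is that the paper folds the empirical penalty into the process, $\cQ_f=\cE_f+\tfrac19(f-f^*)^2$, and works against the population penalty. You instead keep $\Pn(h-f^\dagger)^2$ and lower-bound it separately.

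\paragraph{Where it differs.} The real divergence is Step~3. The paper abstracts the segments through (A1)--(A4): envelope, second moment, $L^2$ increments, and sup-norm Lipschitz dependence on $\lambda$. It then uses mixed-tail chaining on dyadic grids, Bousquet's inequality, and peeling over shells; your fallback is essentially this. Your primary route uses instead that the loss increment and the penalty are exact quadratics in $\lambda$ on each segment. The continuum therefore reduces to $O(1)$ averages per segment, so Bernstein plus a union bound over $M^2$ segments suffices, with localization supplied by AM--GM. This is more elementary (no chaining, concentration of suprema, or peeling) and still covers the Riesz loss because $m$ is linear. The paper's machinery is more general: it needs only Lipschitz, not polynomial, dependence along segments.

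\paragraph{One step needs more care.} Without it, your claim that ``the negative offset dominates'' is false.
\begin{itemize}
\item \emph{The endpoint expansion fails.} Suppose you expand each segment from an endpoint, $h_\lambda-f^\dagger=u+\lambda v$ with $u=f_i-f^\dagger$ and $v=f_j-f_i$. The Bernstein variance proxies are then $\norm{u}_{P,2}$ and $\norm{v}_{P,2}$, which $\norm{u+\lambda v}_{P,2}$ does not control. For example, if $f^\dagger=(f_i+f_j)/2$, the process is identically zero at $\lambda=1/2$. Yet your coefficientwise bound there is of order $(B_Y+B_F)\norm{v}_{P,2}\sqrt{\log(M/\delta)/n}$, which is the slow rate.
\item \emph{Recenter at the closest point.} Expand instead around the $L^2(P)$-closest point $\lambda_c\in[0,1]$ of the segment to $f^\dagger$; it depends only on the menu, so the union bound is legitimate. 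Set $w:=h_{\lambda_c}-f^\dagger$ and $t:=\lambda-\lambda_c\in[-1,1]$. The first-order condition of the constrained projection gives $t\langle w,v\rangle_P\ge0$, hence $\norm{w+tv}_{P,2}^2\ge\norm{w}_{P,2}^2+t^2\norm{v}_{P,2}^2$. Also $\norm{w}_\infty,\norm{v}_\infty\le 2B_F$.
\item \emph{Apply Bernstein to the coefficients of $1,t,t^2$.} For the loss these are $w^2-2w\xi$, $v(w-\xi)$ and $v^2$, with $\xi=Y-f^\dagger$; for the penalty they are $w^2$, $wv$ and $v^2$. Using $t^2\le|t|$, the deviations are at most $C(B_Y+B_F)(\norm{w}_{P,2}+|t|\norm{v}_{P,2})\sqrt{x/n}+C(B_Y+B_F)^2x/n$.
\item \emph{Conclude.} AM--GM absorbs this into a small multiple of $\norm{h_\lambda-f^\dagger}_{P,2}^2$ plus $C(B_Y+B_F)^2x/n$. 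Taking $x$ a constant multiple of $\log(M/\delta)$ for the union bound yields \eqref{eq:star-finite-reg-oracle}.
\end{itemize}
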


\begin{proposition}[Star aggregation oracle inequality: Riesz loss]
\label{prop:star-finite-riesz}
Let $O_1,\dots,O_n$ be i.i.d.\ and let
$\mathcal A=\{a_1,\dots,a_M\}\subseteq\mathcal H$
with $\max_{j}\norm{a_j}_\infty\le B_A$, fixed or constructed independently of these observations. In the latter case, the probability statement is conditional on the menu.
Assume that $m(O,h)$ is linear in $h$ with:
\begin{enumerate}[label=(\roman*),nosep]
\item \emph{mean-square continuity:}
$\E[m(O,h)^2]\le C_m\,\norm{h}_{P,2}^2$
for all $h\in\mathrm{span}(\mathcal A)$;
\item \emph{bounded envelope:}
$\sup_{a\in\mathrm{star}(\mathcal A,a_k)}|m(O,a)|\le B_m$
a.s.\ for every center $k\in\{1,\ldots,M\}$.
\end{enumerate}
Let $\hat a$ be the star estimator in Definition~\ref{def:star-estimator}
for the empirical Riesz loss
$L_n(a):=\Pn[a(W)^2-2\,m(O,a)]$.
Then there exists a universal constant $C>0$ such that for every
$\delta\in(0,1)$, with probability at least $1-\delta$,
\begin{equation}
\label{eq:star-finite-riesz-oracle}
\norm{\hat a-\alpha_0}_{P,2}^2
\ \le\
\min_{1\le j\le M}\norm{a_j-\alpha_0}_{P,2}^2
\;+\;
C\,(B_A^2+C_m+B_m)\,\frac{\log(M/\delta)}{n},
\end{equation}
where $\alpha_0$ minimizes $\mathcal L(\alpha)$ over $\mathcal H$.
\end{proposition}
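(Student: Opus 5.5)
The plan is to reduce everything to the Riesz excess risk $\mathcal L(a)-\mathcal L(\alpha_0)$. By Lemma~\ref{lem:excess-risk-identities}, for every $a\in\mathcal H$ we have $\mathcal L(a)-\mathcal L(\alpha_0)=\norm{a-\alpha_0}_{P,2}^2$. Since every element of $\mathrm{star}(\mathcal A,\tilde a)$ lies in $\mathcal H$, it suffices to prove the oracle inequality
\[
\mathcal L(\hat a)\le \min_j \mathcal L(a_j)+C(B_A^2+C_m+B_m)\log(M/\delta)/n .
\]
The mean-square continuity condition implies $|\E[m(O,h)]|\le \sqrt{C_m}\norm{h}_{P,2}$. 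Expanding the loss along a segment then gives the exact quadratic identity
\[
\mathcal L(\lambda a+(1-\lambda)b)=\lambda\mathcal L(a)+(1-\lambda)\mathcal L(b)-\lambda(1-\lambda)\norm{a-b}_{P,2}^2 .
\]
This plays the role that strong convexity of square loss plays in Audibert's argument, and I would follow the proof structure of Proposition~\ref{prop:star-finite-reg} (Audibert / Lecu\'e--Mendelson) with this identity as the curvature input.

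\textbf{Step 1 (concentration over pairs).} For each ordered pair $(j,k)$, set $\ell_a(O)=a(W)^2-2m(O,a)$ and consider the increment
\[
\ell_{a_j}-\ell_{a_k}=(a_j-a_k)(a_j+a_k)-2m(O,a_j-a_k).
\]
Its envelope is of order $B_A^2+B_m$. Its variance is bounded by $(4B_A^2+8C_m)\norm{a_j-a_k}_{P,2}^2$, using mean-square continuity. Bernstein's inequality plus a union bound over the $M^2$ pairs then gives, with probability $1-\delta$, simultaneously for all pairs,
\[
|(\Pn-P)(\ell_{a_j}-\ell_{a_k})|\lesssim \sqrt{(B_A^2+C_m)\norm{a_j-a_k}^2\log(M/\delta)/n}+(B_A^2+B_m)\log(M/\delta)/n .
\]
The segments between a center and a competitor involve $\lambda$-dependent functions. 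I would handle them by writing $\ell_{\lambda a+(1-\lambda)b}$ as an explicit quadratic polynomial in $\lambda$. Its three coefficients are the processes $(\Pn-P)$ applied to $(a-b)^2$, to $b(a-b)-m(O,a-b)$, and to a constant term. Each coefficient is controlled by Bernstein with the same variance proxy $\norm{a-b}_{P,2}^2$, uniformly over the $M^2$ pairs and with no $\lambda$ net needed. The envelope condition (ii) on star hulls is what bounds $|m|$ uniformly here.

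\textbf{Step 2 (the Audibert case split).} Let $j^\ast$ be the oracle index and $\tilde a$ the empirical center. I would distinguish two cases.

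In case (a), $\norm{\tilde a-a_{j^\ast}}^2$ is small, i.e.\ $\lesssim\log(M/\delta)/n$ times constants. Because $\hat a$ minimizes the empirical loss over the star hull, which contains $\tilde a$, we have $L_n(\hat a)\le L_n(\tilde a)$. Combining this with the concentration bound and a localization step around $\tilde a$ should give $\mathcal L(\hat a)\le\mathcal L(a_{j^\ast})+O(\log(M/\delta)/n)$.

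In case (b), the distance is large. Then the midpoint-type competitor $\tfrac12(\tilde a+a_{j^\ast})$ lies in the star hull. By the curvature identity, its population loss sits below the average of the two endpoint losses by $\tfrac14\norm{\tilde a-a_{j^\ast}}^2$. Meanwhile the empirical loss of $\tilde a$ is at most that of $a_{j^\ast}$. Together with Step 1, the curvature gain absorbs the $\sqrt{\norm{\cdot}^2\log/n}$ fluctuation via $2xy\le x^2/8+8y^2$, leaving a remainder of order $\log(M/\delta)/n$.

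For $\hat a$ itself, which lies on some segment $[\tilde a,a_k]$, I would repeat the same argument. Its empirical optimality along its own segment, plus the curvature identity along that segment, yields an excess over the best point on the hull controlled by the localized fluctuation term. This again gives $\mathcal L(\hat a)\le \mathcal L(a_{j^\ast})+C\log(M/\delta)/n$.

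I expect the main obstacle to be this last step. Controlling the empirical process for the data-dependent point $\hat a$ on a segment whose center $\tilde a$ is itself random requires the uniform-over-pairs, polynomial-in-$\lambda$ concentration of Step 1 to be sharp. In particular, it needs localized variance proportional to $\norm{\hat a-a_{j^\ast}}^2$, and not merely an envelope-based bound, which would give only $\sqrt{\log M/n}$. If that direct route gets messy, I would fall back on the structure of the square-loss proof of Proposition~\ref{prop:star-finite-reg}. The key observation is that the Riesz loss differs from square loss with a pseudo-response only by the linear term $-2m(O,a)$. Its fluctuation is controlled in variance by $C_m\norm{\cdot}^2$ and in sup-norm by $B_m$, which is exactly why the constant becomes $B_A^2+C_m+B_m$.
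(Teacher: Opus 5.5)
Your route differs from the paper's and uses the right ingredients. However, the step you flag as the main obstacle is not closed by the mechanism you describe, and case (a) has the same defect.

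As written, case (a) rests only on $L_n(\hat a)\le L_n(\tilde a)$. That gives $\mathcal L(\hat a)-\mathcal L(\tilde a)\le(P-\Pn)(\ell_{\hat a}-\ell_{\tilde a})\lesssim\norm{\hat a-\tilde a}_{P,2}\sqrt{\log(M/\delta)/n}+\log(M/\delta)/n$. Nothing in that case controls $\norm{\hat a-\tilde a}_{P,2}$, so this is only a $\sqrt{\log M/n}$ bound.

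In the final step, let $h$ be the population minimizer of $\mathcal L$ on $\mathrm{star}(\mathcal A,\tilde a)$. The excess of $\hat a$ over $h$ has no quadratic margin, because the hull is not convex. If $h$ lies on another segment, curvature along $\hat a$'s own segment controls only the distance from $\hat a$ to that segment's minimizer. It does not control $\norm{\hat a-h}_{P,2}$, which is the variance proxy of the fluctuation you must absorb.

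Note also that your $\lambda$-expansion around the center never produces the proxy $\norm{\hat a-a_{j^\ast}}_{P,2}$ you say you need. Its constant term $\ell_{\tilde a}-\ell_{a_{j^\ast}}$ has proxy $\norm{\tilde a-a_{j^\ast}}_{P,2}$, not $\norm{a-b}_{P,2}$. The $\lambda$-terms have proxy $\norm{\hat a-\tilde a}_{P,2}$. The sum of these two can far exceed $\norm{\hat a-a_{j^\ast}}_{P,2}$.

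The missing idea is to generate curvature in exactly this split form, on the empirical side, with no case split. Write $\hat a=\tilde a+\hat\lambda(a_k-\tilde a)$. The objective on that segment is a convex quadratic in $\lambda$ with leading coefficient $\norm{a_k-\tilde a}_{2,\Pn}^2$. Its minimizer over $[0,1]$ therefore satisfies $L_n(\hat a)\le L_n(\tilde a)-\norm{\hat a-\tilde a}_{2,\Pn}^2\le L_n(a_{j^\ast})-\norm{\hat a-\tilde a}_{2,\Pn}^2$. Separately, the empirical midpoint identity and $L_n(\tilde a)\le L_n(a_{j^\ast})$ give $L_n(\hat a)\le L_n(a_{j^\ast})-\tfrac14\norm{\tilde a-a_{j^\ast}}_{2,\Pn}^2$. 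Averaging the two,
\[
L_n(\hat a)-L_n(a_{j^\ast})\le-\tfrac12\norm{\hat a-\tilde a}_{2,\Pn}^2-\tfrac18\norm{\tilde a-a_{j^\ast}}_{2,\Pn}^2 .
\]
Your Bernstein bounds for $(\Pn-P)(a_i-a_j)^2$ turn these into population norms at cost $O(\log(M/\delta)/n)$, since $\hat a-\tilde a$ is $\hat\lambda$ times a pairwise difference. AM--GM then absorbs both fluctuation pieces, with a constant of order $B_A^2+C_m+B_m$.

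Completed this way, your proof is more elementary than the paper's. The paper invokes the star margin of Lemma~\ref{lemma:star-margin} (Proposition~4 of \cite{vijaykumar2021localization}), whose curvature is the single term $\tfrac19\norm{\hat f-f^\ast}_{2,\Pn}^2$. It therefore localizes the offset process over all of $\conv_2(\mathcal A)$ in terms of $\norm{f-f^\ast}_{P,2}$. It does this by chaining along segments, Bousquet's inequality and peeling (Lemmas~\ref{lem:mixed-tail-chaining-bound}--\ref{lem:second_part_peeling}), after checking \ref{A:bounded}--\ref{A:lip}. Your approach, expanding around the random center and keeping two curvature terms, reduces everything to a union bound over $O(M^2)$ fixed functions. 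In exchange, the paper's route yields a modular offset lemma shared by both losses and reused foldwise in Proposition~\ref{prop:pooled-oracle}.
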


The two propositions make the practical message quite concrete.
When each nuisance is learned by star aggregation over a finite
foldwise menu, we obtain transparent high-probability $L^2(P)$
bounds of order $\sqrt{\log(M)/n}$.
These bounds can be plugged directly into
\eqref{eq:product-rate-template}, provided the menu is independent of the sample used to aggregate it. This independence prevents candidates from fitting the same noise used to assess them. When candidates are trained within a fold, use separate data for aggregation; Section~\ref{sec:cross-fold} treats pooled aggregation without this internal split.

\medskip

Stepping back, \eqref{eq:product-rate-template} turns learner design into an explicit approximation--estimation tradeoff.
Enriching $\mathcal H$ increases approximation power, while the proper and improper constructions above quantify the accompanying statistical complexity.
The product-rate structure is especially useful because faster learning of either nuisance can compensate for slower learning of the other.
Thus Section~\ref{sec:learn-H} supplies two concrete ways to verify the high-level DML condition from the geometry of the pipeline itself.

\subsection{Aggregation across folds}
\label{sec:cross-fold}

Propositions~\ref{prop:star-finite-reg} and~\ref{prop:star-finite-riesz} treat the menu of candidates as fixed.
An honest implementation inside a training fold must therefore train the candidates and fit the star weights on disjoint data.
That calls for a second split within each training fold, or for nested cross-fitting, and either way it multiplies the number of candidate fits.
When a candidate is a fine-tuned foundation model, each fit is expensive.

Practice often follows a cheaper design.
Train every candidate once on each training fold, as cross-fitting requires anyway, and fit a single set of star weights to the pooled out-of-fold predictions.
Each prediction in the aggregation criterion then comes from a model that never saw that observation, as in stacking \cite{wolpert1992stacked,breiman1996stacked} and the super learner \cite{vanderlaan2007super}.
The weights, however, depend on every fold, including the fold on which the score is later evaluated.
Cross-fitting then holds for the candidates but not for the aggregate.
This subsection shows that the breach is cheap: it adds one remainder term of order $(r_g+r_\alpha)\sqrt{\log M}$ for star weights and $(r_g+r_\alpha)\sqrt{M\log n}$ for convex weights, where $r_g$ and $r_\alpha$ are the nuisance rates.

\medskip
\noindent\emph{Construction.}
For each fold $k$, train on $I_k^c$ a regression menu
$\cF^{(-k)}=\{\hat f_1^{(-k)},\dots,\hat f_M^{(-k)}\}\subseteq\cH$
and a representer menu
$\mathcal A^{(-k)}=\{\hat a_1^{(-k)},\dots,\hat a_M^{(-k)}\}\subseteq\cH$.
Candidate~$j$ is one recipe---an encoder, a prompt, a head---refitted on each training sample.
Index the segments of the star hull by
\[
\mathcal T:=\{1,\dots,M\}^2\times[0,1],
\qquad
f_\tau^{(-k)}:=\lambda\hat f_a^{(-k)}+(1-\lambda)\hat f_b^{(-k)}
\quad\text{for }\tau=(a,b,\lambda)\in\mathcal T,
\]
and define $a_\tau^{(-k)}$ from $\mathcal A^{(-k)}$ in the same way.
The index $\tau$ is common to all folds; the functions it selects are not.
The pooled out-of-fold objectives are
\begin{align}
\bar R_n(\tau)
&:=\frac1n\sum_{k=1}^K\sum_{i\in I_k}\bigl\{Y_i-f_\tau^{(-k)}(W_i)\bigr\}^2,
\label{eq:pooled-reg}\\
\bar L_n(\tau)
&:=\frac1n\sum_{k=1}^K\sum_{i\in I_k}\Bigl\{a_\tau^{(-k)}(W_i)^2-2\,m\bigl(O_i,a_\tau^{(-k)}\bigr)\Bigr\}.
\label{eq:pooled-riesz}
\end{align}

\begin{definition}[Pooled star estimator]
\label{def:pooled-star}
Choose a center $\tilde a\in\arg\min_{1\le j\le M}\bar R_n(j,j,1)$ and a segment
\[
\hat\tau_g\in\arg\min\bigl\{\bar R_n(\tau):\ \tau=(\tilde a,b,\lambda)\in\mathcal T\bigr\}.
\]
Set $\hat g^{(-k)}:=f^{(-k)}_{\hat\tau_g}$ for every $k$.
Define $\hat\tau_\alpha$ from $\bar L_n$ in the same way, and set $\hat\alpha^{(-k)}:=a^{(-k)}_{\hat\tau_\alpha}$.
The estimators $\hat\theta_0$ and $\hat\sigma_0^2$ are then computed from these nuisances exactly as in Section~\ref{sec:pipeline-inference}.
\end{definition}

The segment search keeps its closed form: \eqref{eq:lambda-reg} and \eqref{eq:lambda-riesz} apply with $\Pn$ replaced by the pooled out-of-fold average.
Equivalently, the pooled star estimator is the star estimator of Definition~\ref{def:star-estimator}, computed from the augmented observations $(O_i,k(i))$ and the stacked candidates $(o,k)\mapsto\hat f_j^{(-k)}(o)$, $j=1,\dots,M$.

Write $p_k:=n_k/n$.
The natural benchmark is the best single recipe, measured by its fold-averaged error:
\begin{equation}
e_g^2:=\min_{1\le j\le M}\sum_{k=1}^K p_k\norm{\hat f_j^{(-k)}-g_0}_{P,2}^2,
\qquad
e_\alpha^2:=\min_{1\le j\le M}\sum_{k=1}^K p_k\norm{\hat a_j^{(-k)}-\alpha_0}_{P,2}^2.
\label{eq:oracle-errors}
\end{equation}

\begin{proposition}[Pooled star aggregation: oracle inequalities]
\label{prop:pooled-oracle}
Let $K$ be fixed, and let $n_k\ge n/(2K)$ for every $k$.
Suppose that, for every $k$ and conditional on $\{O_i:i\in I_k^c\}$, the menu $\cF^{(-k)}$ satisfies the conditions of Proposition~\ref{prop:star-finite-reg} and the menu $\mathcal A^{(-k)}$ satisfies those of Proposition~\ref{prop:star-finite-riesz}, with constants $(B_Y,B_F,B_A,C_m,B_m)$ that do not depend on $k$.
Then there is a universal constant $C$ such that, for every $\delta\in(0,1)$, each of the following bounds holds with probability at least $1-\delta$:
\begin{align}
\sum_{k=1}^K p_k\norm{\hat g^{(-k)}-g_0}_{P,2}^2
&\le
e_g^2+C\,K\,(B_Y+B_F)^2\,\frac{\log(KM/\delta)}{n},
\label{eq:pooled-oracle-reg}\\
\sum_{k=1}^K p_k\norm{\hat\alpha^{(-k)}-\alpha_0}_{P,2}^2
&\le
e_\alpha^2+C\,K\,(B_A^2+C_m+B_m)\,\frac{\log(KM/\delta)}{n}.
\label{eq:pooled-oracle-riesz}
\end{align}
\end{proposition}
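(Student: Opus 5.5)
The plan is to reduce Proposition~\ref{prop:pooled-oracle} to the single-menu results, Propositions~\ref{prop:star-finite-reg} and~\ref{prop:star-finite-riesz}, applied on an augmented sample. As noted after Definition~\ref{def:pooled-star}, the pooled star estimator is the star estimator of Definition~\ref{def:star-estimator} for the augmented observations $\tilde O_i=(O_i,k(i))$ and the stacked candidates $\tilde f_j(o,k):=\hat f_j^{(-k)}(o)$. The benchmark $e_g^2$ is the stacked $L^2$ distance $\sum_k p_k\norm{\tilde f_j(\cdot,k)-g_0}_{P,2}^2$ under the empirical fold-label distribution $(p_k)$. So if the single-menu proof applies to this stacked problem, \eqref{eq:pooled-oracle-reg} follows with $\log(M/\delta)$. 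The extra factor $K$ and the $\log(KM/\delta)$ come from the fact that the augmented observations are not i.i.d. and the stacked menu is not independent of all of them.

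To handle this, I decompose the pooled criterion as $\bar R_n(\tau)=\sum_k p_k\,\mathbb P_{n,k}[\ell(f_\tau^{(-k)})]$, where $\mathbb P_{n,k}$ is the empirical measure on $I_k$. Conditional on $I_k^c$, the fold-$k$ functions $\{\hat f_j^{(-k)}\}$ are fixed, and the observations in $I_k$ are i.i.d. and independent of them. The single-menu proof rests on two ingredients: a deterministic convexity/curvature argument along the segments of the star hull, and a high-probability uniform deviation bound for the centered excess losses $\ell(f)-\ell(g_0)$ over the $O(M^2)$ segment directions. The latter is a Bernstein-type bound whose variance term is controlled by $\norm{f-g_0}_{P,2}^2$, which supplies the offset/localization. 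I will apply this deviation bound within each fold, conditionally on $I_k^c$, with sample size $n_k\ge n/(2K)$ and confidence $\delta/K$. A union bound over folds then yields, simultaneously for all $k$ and all segments,
\[
(\mathbb P_{n,k}-P)\bigl[\ell(f^{(-k)}_\tau)-\ell(g_0)\bigr]\le \tfrac12\norm{f_\tau^{(-k)}-g_0}_{P,2}^2+C(B_Y+B_F)^2\,\frac{K\log(KM/\delta)}{n}.
\]
Here the $1/n_k\le 2K/n$ conversion produces the factor $K$. The continuum of $\lambda\in[0,1]$ is handled exactly as in the single-menu proof, through the quadratic structure in $\lambda$.

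Averaging these bounds with weights $p_k$ gives a uniform deviation inequality for $\bar R_n-\sum_kp_kP\ell(f_\tau^{(-k)})$ with a $p_k$-weighted localization term. Since $g_0$ is the projection onto $\mathcal H$ and every $f_\tau^{(-k)}\in\mathcal H$, Lemma~\ref{lem:excess-risk-identities} gives $\sum_kp_k[\mathcal R(f^{(-k)}_\tau)-\mathcal R(g_0)]=\sum_kp_k\norm{f^{(-k)}_\tau-g_0}_{P,2}^2$. This pooled excess risk is a convex quadratic in $\lambda$ along each segment, which is exactly the structure used in Audibert's star-aggregation argument. Running that argument verbatim, with the center chosen by $\bar R_n$ and the curvature inequality holding for the $p_k$-weighted norm, yields \eqref{eq:pooled-oracle-reg}.

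The Riesz bound \eqref{eq:pooled-oracle-riesz} follows in the same way. The excess loss is $\mathcal L(a)-\mathcal L(\alpha_0)=\norm{a-\alpha_0}_{P,2}^2$ by Lemma~\ref{lem:excess-risk-identities}. The per-fold Bernstein bound uses mean-square continuity and the envelope $B_m$ exactly as in Proposition~\ref{prop:star-finite-riesz}. I expect the main obstacle to be verifying that the single-menu proof only needs, conditionally on each fold, a uniform deviation bound with localization, rather than joint independence of the menu from the whole sample. In particular, the selection of the center $\tilde a$ by the pooled criterion must not reintroduce dependence across folds. I would check this by writing the Audibert argument entirely in terms of the deterministic inequalities on the event where all $K$ per-fold deviation bounds hold simultaneously.
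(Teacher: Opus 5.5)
There is a genuine gap. Your architecture matches the paper's: you split $\bar R_n$ fold by fold, condition on $D_k^c$ so that each fold's menu is fixed, treat the star step on the augmented sample deterministically, and pay $1/n_k\le 2K/n$. The problem is that your deviation bound is centered at $g_0$, and with that centering you cannot reach the stated inequality. Write $A:=\sum_kp_k\norm{\hat g^{(-k)}-g_0}_{P,2}^2$ and $e_j:=\sum_kp_k\norm{\hat f_j^{(-k)}-g_0}_{P,2}^2$, and let $r$ be your remainder. By Lemma~\ref{lem:excess-risk-identities}, $A-e_j=\sum_kp_kP\{\ell(\hat g^{(-k)})-\ell(\hat f_j^{(-k)})\}$. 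Combining $\bar R_n(\hat\tau_g)\le\bar R_n(j,j,1)$ with your $g_0$-centered bound for both $\hat g^{(-k)}$ and $\hat f_j^{(-k)}$ gives only $(1-\epsilon)A\le(1+\epsilon)e_j+O\bigl(\log(KM/\delta)/(\epsilon n)\bigr)$. With your $\epsilon=\tfrac12$ this is $A\le 3e_j+4r$. The proposition is a \emph{sharp} oracle inequality, with coefficient one on $e_g^2$, and $e_g$ need not be small.

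The star curvature cannot repair this. It controls $\norm{\hat g^{(-k)}-\hat f_j^{(-k)}}$, but the $g_0$-centered fluctuations leave a term of order $\epsilon e_j$ that nothing absorbs. The only other input, $\bar R_n(\hat\tau_g)\le\bar R_n(j,j,1)$, is also satisfied by ERM over the menu, and ERM is known not to admit a sharp $\log M/n$ bound.

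There is a second problem with centering at $g_0$. Bounding $\ell(f)-\ell(g_0)=(f-g_0)(f+g_0-2Y)$ with constant $(B_Y+B_F)^2$ requires $\norm{g_0}_\infty\lesssim B_Y+B_F$. The Riesz analogue requires a bound on $m(O,\alpha_0)$ and mean-square continuity in directions $a-\alpha_0$ outside $\mathrm{span}(\mathcal A)$. None of this is assumed: $g_0$ and $\alpha_0$ are projections onto a general closed linear space, and the paper imposes their boundedness only later, in Assumption~\ref{ass:pooled-star}.

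The missing idea is to center at the comparator, which is what the single-menu proof actually does: it works with $\cE_f=\ell_f-\ell_{f^*}$ for $f^*\in\cF$, never with $g_0$. The steps are as follows.
\begin{enumerate}
\item Fix $j$, set $f_*^{(-k)}:=\hat f_j^{(-k)}$, and use $\cQ^{(k)}_f:=\ell_f-\ell_{f_*^{(-k)}}+\tfrac19\bigl(f-f_*^{(-k)}\bigr)^2$. Its second moment is controlled by $\norm{f-f_*^{(-k)}}_{P,2}^2$ using only the menu envelopes.
\item The deterministic star margin (Proposition~4 of \cite{vijaykumar2021localization}), applied to $\bar R_n$ on the augmented sample, gives $\sum_kp_k\Pn^{(k)}\cQ^{(k)}_{\hat g^{(-k)}}\le0$.
\item Adding $P$ bounds $\sum_kp_kP\bigl(\ell_{\hat g^{(-k)}}-\ell_{f_*^{(-k)}}\bigr)$ by $\sum_kp_k\Xi_k$. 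Here $\Xi_k$ is the offset supremum over $\conv_2(\cF^{(-k)})$, and it no longer involves $\hat\tau_g$.
\item Lemma~\ref{lem:second_part_peeling} bounds each $\Xi_k$ conditionally on $D_k^c$.
\item Lemma~\ref{lem:excess-risk-identities} turns the left side \emph{exactly} into $A-e_j$, so the leading constant stays one.
\end{enumerate}
Once you center at the comparator, you must also union bound over all $M$ comparator indices, not just the $K$ folds. The minimizing $j$ in $e_g^2$ depends on every menu, hence on $D_k$, so it cannot be fixed conditional on $D_k^c$. Working at level $\delta/(KM)$ still costs only $\log(KM/\delta)$.
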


The remainder matches that of Propositions~\ref{prop:star-finite-reg} and~\ref{prop:star-finite-riesz} up to the factor $K$, which comes from treating each fold separately.
Unlike those propositions, Proposition~\ref{prop:pooled-oracle} needs no independence between the menu and the data used to aggregate it.
The proof uses only that each fold's candidates are independent of that fold's observations, the same structure that underlies oracle inequalities for cross-validated selectors \cite{vandervaart2006oracle,vanderlaan2007super}.

Inference needs one more ingredient.
With strict cross-fitting, the empirical-process remainder in the DML expansion vanishes by conditioning alone.
Here the weights see the evaluation fold, so the remainder must be controlled uniformly over the star segments that the weights could select.

\begin{assumption}[Regularity for pooled star aggregation]
\label{ass:pooled-star}
Let $K$ be fixed, let $n_k\ge n/(2K)$ for every $k$, and let $M=M_n\ge2$.
The constants $B_Y$, $B$, $B_m$, and $C_m$ below do not depend on $n$.
\begin{enumerate}[label=(\roman*)]
\item \emph{Boundedness.}
$|Y|\le B_Y$ almost surely; every element of $\cF^{(-k)}\cup\mathcal A^{(-k)}$, as well as $g_0$ and $\alpha_0$, has sup norm at most $B$; and $|m(O,f)|\le B_m$ almost surely for every $f\in\cF^{(-k)}\cup\mathcal A^{(-k)}\cup\{g_0\}$ and every $k$.
\item \emph{Mean-square continuity.}
$\E[m(O,h)^2]\le C_m\norm{h}_{P,2}^2$ for all $h\in\cH$.
\item \emph{Rates.}
$e_g=O_\Pp(a_{g,n})$ and $e_\alpha=O_\Pp(a_{\alpha,n})$ for deterministic sequences such that, with
$r_{g,n}:=a_{g,n}+\sqrt{\log M/n}$ and $r_{\alpha,n}:=a_{\alpha,n}+\sqrt{\log M/n}$,
\begin{equation}
\sqrt n\,r_{g,n}\,r_{\alpha,n}\to0
\qquad\text{and}\qquad
\sqrt{\log M}\,\bigl(r_{g,n}+r_{\alpha,n}\bigr)\to0.
\label{eq:pooled-rates}
\end{equation}
\item \emph{Nondegeneracy.}
$\sigma_0^2:=\Var\bigl(\psi^\star(O;g_0,\alpha_0)\bigr)$ is bounded away from zero.
\end{enumerate}
\end{assumption}

\begin{theorem}[DML inference with pooled star aggregation]
\label{thm:pooled-star}
Suppose Assumption~\ref{ass:pooled-star} holds, and let $\hat g^{(-k)}$ and $\hat\alpha^{(-k)}$ be the pooled star estimators of Definition~\ref{def:pooled-star}.
Then $\hat\theta_0$ satisfies the expansion \eqref{eq:pipeline-expansion},
\[
\sigma_0^{-1}\sqrt n\bigl(\hat\theta_0-\theta_0\bigr)\Rightarrow N(0,1),
\]
and $\hat\sigma_0^2$ is consistent for $\sigma_0^2$.
Consequently, $\Pp\{\theta_0\in\mathrm{CI}_0(1-\gamma)\}=1-\gamma+o(1)$.
\end{theorem}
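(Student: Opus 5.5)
The plan is to rerun the DML decomposition behind Theorem~\ref{thm:pipeline-dml}. The one new ingredient is that the aggregation index $\hat\tau=(\hat\tau_g,\hat\tau_\alpha)$ depends on all folds, so the empirical-process term cannot be handled by conditioning alone. We control it uniformly over the segment index instead. Write $\psi_i(g,\alpha):=\psi^\star(O_i;g,\alpha)$ and $\varphi_0:=\psi^\star(\cdot;g_0,\alpha_0)-\theta_0$. Then
\[
\sqrt n(\hat\theta_0-\theta_0)=\frac{1}{\sqrt n}\sum_{i=1}^n\varphi_0(O_i)+\sum_{k=1}^K\sqrt{p_k}\,\mathbb G_{n,k}\bigl[\psi^\star(\cdot;\hat g^{(-k)},\hat\alpha^{(-k)})-\psi^\star(\cdot;g_0,\alpha_0)\bigr]+\sqrt n\sum_k p_k\bigl(P\psi^\star(\cdot;\hat g^{(-k)},\hat\alpha^{(-k)})-\theta_0\bigr),
\]
where $\mathbb G_{n,k}$ is the empirical process on fold $I_k$, holding the functions fixed.

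\emph{Bias term.} Each segment function lies in $\cH$. Proposition~\ref{prop:pipeline-mixed-bias}(ii) therefore gives $P\psi^\star(\cdot;\hat g^{(-k)},\hat\alpha^{(-k)})-\theta_0=-P[(\hat g^{(-k)}-g_0)(\hat\alpha^{(-k)}-\alpha_0)]$. Apply Cauchy--Schwarz across folds with weights $p_k$, then Proposition~\ref{prop:pooled-oracle}. This bounds the term by $\sqrt n\,(e_g^2+\log(KM)/n)^{1/2}(e_\alpha^2+\log(KM)/n)^{1/2}=O_\Pp(\sqrt n\,r_{g,n}r_{\alpha,n})=o_\Pp(1)$, using the first condition in \eqref{eq:pooled-rates}. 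Since $n_k\ge n/(2K)$ and $K$ is fixed, the weighted sum also yields $\max_k\norm{\hat g^{(-k)}-g_0}_{P,2}=O_\Pp(r_{g,n})$, and likewise for $\alpha$.

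\emph{Empirical-process term (main obstacle).} Fix $k$ and condition on $I_k^c$. The candidates $\hat f_j^{(-k)},\hat a_j^{(-k)}$ are then fixed, and the selected function is $f^{(-k)}_\tau$ for some $\tau\in\mathcal T$. We bound $\sup_{\tau_g,\tau_\alpha}|\mathbb G_{n,k}[\psi^\star(\cdot;f_{\tau_g}^{(-k)},a_{\tau_\alpha}^{(-k)})-\psi^\star(\cdot;g_0,\alpha_0)]|$ over the localized set where $\norm{f_{\tau_g}^{(-k)}-g_0}_{P,2}\le C r_{g,n}$ and $\norm{a_{\tau_\alpha}^{(-k)}-\alpha_0}_{P,2}\le Cr_{\alpha,n}$; the selected indices lie in this set with probability approaching one by the previous step. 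The difference of scores is $m(O,f-g_0)+(a-\alpha_0)(Y-g_0)-a(f-g_0)$. This is linear in the segment parameters, and each of its pieces is bounded by Assumption~\ref{ass:pooled-star}(i). Each piece also has $L^2$ norm $\lesssim r_{g,n}+r_{\alpha,n}$, using mean-square continuity (ii) for the $m$ term. The index set is a union of $M^2$ one-dimensional segments, and on each segment the functions are affine in $\lambda$. For each pair $(a,b)$, discretize $\lambda$ on a grid of mesh $1/n$, or use that an affine process attains its supremum at an endpoint after bounding the linear part. A union bound over $O(M^4 n)$ points, with Bernstein's inequality, then gives $\sup=O_\Pp\bigl((r_{g,n}+r_{\alpha,n})\sqrt{\log M+\log n}+\log(Mn)/\sqrt n\bigr)$. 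For star segments the endpoint argument removes the $\log n$: Bernstein plus a maximal inequality over the $M^2$ pairs and the two endpoints yields $(r_{g,n}+r_{\alpha,n})\sqrt{\log M}$. The quadratic cross term in the product $a\,f$ needs care, because it mixes the two $\lambda$'s and makes the process bilinear rather than affine. Since everything is bounded, it still suffices to evaluate it at the corners of $[0,1]^2$ for the bilinear part. This term is $o_\Pp(1)$ by the second condition in \eqref{eq:pooled-rates} together with $\log M/\sqrt n\le\sqrt{\log M}\,r\to0$.

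\emph{Conclusion.} The leading term is an i.i.d.\ average with bounded variance bounded away from zero, so the Lindeberg CLT (valid under boundedness) gives $\sigma_0^{-1}\sqrt n(\hat\theta_0-\theta_0)\Rightarrow N(0,1)$. For variance consistency, write $\hat\sigma_0^2-\sigma_0^2$ in terms of $n^{-1}\sum_i(\hat\psi_i-\psi_i(g_0,\alpha_0))^2$ plus an LLN term. The first quantity has expectation $O(r_{g,n}^2+r_{\alpha,n}^2)$ uniformly over the localized segments; the same uniform Bernstein bound applied to squared, bounded differences makes it $o_\Pp(1)$. Coverage of $\mathrm{CI}_0$ then follows by Slutsky.
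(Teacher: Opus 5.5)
\textbf{Verdict: there is a gap in the step you call the main obstacle.} Your decomposition, the bias bound (mixed bias, Cauchy--Schwarz across folds, then Proposition~\ref{prop:pooled-oracle}) and the localization are sound. They follow the same structure as the paper's proof via Lemma~\ref{lem:selection-outside}.

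\emph{Why the discretization route is not enough.} The mesh-$1/n$ discretization produces a factor $\sqrt{\log M+\log n}$. Assumption~\ref{ass:pooled-star}(iii) does not make $\sqrt{\log n}\,(r_{g,n}+r_{\alpha,n})$ vanish. For example, $M=2$, $a_{\alpha,n}=n^{-1/2}$ and $a_{g,n}=1/\log\log n$ satisfy both conditions in \eqref{eq:pooled-rates}, yet $\sqrt{\log n}\,r_{g,n}\to\infty$. So the whole argument rests on your endpoint/corner reduction.

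\emph{Why the corner reduction, as stated, fails.} You take it over the corners of $[0,1]^2$, which is the wrong set. At those corners the score difference is built from the raw candidates $\hat f^{(-k)}_a,\hat f^{(-k)}_b,\hat a^{(-k)}_c,\hat a^{(-k)}_d$. Nothing makes these $O(r_{g,n})$- or $O(r_{\alpha,n})$-close to $g_0$ or $\alpha_0$; only points inside a segment are localized. For instance, $\hat f_a=g_0+v$ and $\hat f_b=g_0-v$ with $\norm{v}_{P,2}$ large have a perfectly localized midpoint. At the corners the $L^2(P)$ norms are of order one, so Bernstein's maximal inequality gives only $O_\Pp(\sqrt{\log M})$, not $o_\Pp(1)$. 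Note also that boundedness is not what licenses a corner reduction; exact multi-affinity is.

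\emph{The repair.} Use the observation in the paper's proof. For fixed $(a,b)$ the map $\lambda\mapsto\norm{f^{(-k)}_{(a,b,\lambda)}-g_0}_{P,2}$ is convex. Hence the localized $\lambda$'s form a closed interval $J_{ab}$ determined by $D_k^c$, and likewise $J'_{cd}$ for the representer. The localized index set is then a union of at most $M^4$ rectangles $J_{ab}\times J'_{cd}$. By linearity of $m$, $h_{\lambda,\mu}=c_0+\lambda c_1+\mu c_2+\lambda\mu c_3$ with $c_j$ fixed given $D_k^c$. So $(\lambda,\mu)\mapsto\mathbb{G}_{n,k}h_{\lambda,\mu}$ is affine in each coordinate, and its absolute value is maximized on each rectangle at one of the four corners of $J_{ab}\times J'_{cd}$. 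Those corners are themselves localized. Bernstein's maximal inequality over at most $4M^4$ such corner functions then gives $O_\Pp\{(r_{g,n}+r_{\alpha,n})\sqrt{\log M}+\log M/\sqrt n\}$, as you need.

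\emph{Comparison with the paper.} Once repaired, your route is genuinely simpler for condition~(b). The paper runs dyadic chaining (Lemma~\ref{lem:pair-chaining}) over the same rectangles, using the rescaled increment bound $\ell_{ab}\norm{\hat f^{(-k)}_a-\hat f^{(-k)}_b}_{P,2}\le 2R_g$, with $R_g$ the localization radius. The vertex argument avoids chaining by exploiting exact bilinearity. It does not transfer to the squared family needed for variance consistency, since $h^2$ is biquadratic. There, however, your crude discretization suffices, because only $o_\Pp(1)$ is required after dividing by $\sqrt{n_k}$.

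\emph{A minor point.} With a fixed localization constant, the selected index lies in the localized set only with probability at least $1-\epsilon$, not with probability tending to one. Either take $C_n\to\infty$ slowly, as the paper does, or run an $\epsilon$-argument.
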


The first condition in \eqref{eq:pooled-rates} is the product-rate template \eqref{eq:product-rate-template} with the star remainder of Proposition~\ref{prop:pooled-oracle}.
The second condition is the price of pooling.
It controls the empirical-process remainder uniformly over the star segments whose errors are of order $r_{g,n}$ and $r_{\alpha,n}$.
That set consists of at most $M^4$ pieces of dimension two, one for each pair of segments, so its complexity grows like $\sqrt{\log M}$.
If $a_{g,n}a_{\alpha,n}=o(n^{-1/2})$ and $a_{g,n}+a_{\alpha,n}=O(n^{-1/4})$, both conditions reduce to $\log M=o(\sqrt n)$.
The menu may therefore grow faster than any power of $n$, as it may for star aggregation inside the folds.

\begin{remark}[What must stay inside the folds]
\label{rem:inside-folds}
Only the aggregation step crosses folds.
Each candidate must still be trained on $I_k^c$ alone; otherwise the candidates themselves, and not merely a low-dimensional choice among them, would depend on the evaluation fold. We can think of this as a data contamination issue.
Two consequences follow.
First, the empirical-process part of the proof bounds a supremum over the localized star hulls, so it applies to any rule that selects a star segment, provided the selected aggregates attain the rates in Assumption~\ref{ass:pooled-star}(iii).
Second, a discrete choice made on the full sample, such as a prompt or a checkpoint, is covered by adding the rejected alternatives to the menu, at the cost of a larger $M$.
\end{remark}

\medskip
\noindent\emph{Convex weights.}
The same design applies to convex aggregation.
Let $\Delta_M:=\{w\in[0,1]^M:\sum_jw_j=1\}$ and, for $w\in\Delta_M$, write $f_w^{(-k)}:=\sum_jw_j\hat f_j^{(-k)}$ and $a_w^{(-k)}:=\sum_jw_j\hat a_j^{(-k)}$.
Let $\bar R_n(w)$ and $\bar L_n(w)$ be the pooled objectives \eqref{eq:pooled-reg}--\eqref{eq:pooled-riesz} with $f_w^{(-k)}$ and $a_w^{(-k)}$ in place of $f_\tau^{(-k)}$ and $a_\tau^{(-k)}$.

\begin{definition}[Pooled convex estimator]
\label{def:pooled-convex}
Choose $\hat w_g\in\arg\min_{w\in\Delta_M}\bar R_n(w)$ and $\hat w_\alpha\in\arg\min_{w\in\Delta_M}\bar L_n(w)$, and set $\hat g^{(-k)}:=f^{(-k)}_{\hat w_g}$ and $\hat\alpha^{(-k)}:=a^{(-k)}_{\hat w_\alpha}$ for every $k$.
\end{definition}

Both problems are convex quadratic programs over the simplex.
The benchmark is now the best convex combination, measured by its fold-averaged error:
\begin{equation}
\bar e_g^2:=\min_{w\in\Delta_M}\sum_{k=1}^K p_k\norm{f_w^{(-k)}-g_0}_{P,2}^2,
\qquad
\bar e_\alpha^2:=\min_{w\in\Delta_M}\sum_{k=1}^K p_k\norm{a_w^{(-k)}-\alpha_0}_{P,2}^2.
\label{eq:oracle-errors-convex}
\end{equation}

\begin{proposition}[Pooled convex aggregation: oracle inequalities]
\label{prop:pooled-convex-oracle}
Let $K$ be fixed, let $n_k\ge n/(2K)$ for every $k$, and suppose Assumption~\ref{ass:pooled-star}(i)--(ii) hold.
Then there is a universal constant $C$ such that, for every $\delta\in(0,1)$, each of the following bounds holds with probability at least $1-\delta$:
\begin{align}
\sum_{k=1}^K p_k\norm{\hat g^{(-k)}-g_0}_{P,2}^2
&\le
\bar e_g^2+C\,K\,(B_Y+B)^2\,\frac{M\log(3n)+\log(K/\delta)}{n},
\label{eq:pooled-convex-reg}\\
\sum_{k=1}^K p_k\norm{\hat\alpha^{(-k)}-\alpha_0}_{P,2}^2
&\le
\bar e_\alpha^2+C\,K\,(B^2+C_m+B_m)\,\frac{M\log(3n)+\log(K/\delta)}{n}.
\label{eq:pooled-convex-riesz}
\end{align}
\end{proposition}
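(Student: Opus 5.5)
The proof reduces the pooled convex aggregation problem to a single empirical risk minimization over a parametric class indexed by $w\in\Delta_M$. The relevant observations are the augmented pairs $(O_i,k(i))$. For each fold $k$ the candidates are frozen conditional on $\{O_i:i\in I_k^c\}$, so each fold is handled separately and the folds are combined by a union bound over $k$; this is the source of the factor $K$ and of $\log(K/\delta)$. I treat the regression bound \eqref{eq:pooled-convex-reg}; the Riesz bound \eqref{eq:pooled-convex-riesz} follows the same steps, with the excess-risk identity \eqref{eq:excess-risk-riesz} replacing \eqref{eq:excess-risk-reg} and mean-square continuity plus the envelope $B_m$ controlling the linear term.

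\textbf{Step 1 (basic inequality).} Fix $k$. For $w\in\Delta_M$, Lemma~\ref{lem:excess-risk-identities} gives
\[
\mathcal R(f_w^{(-k)})-\mathcal R(g_0)=\norm{f_w^{(-k)}-g_0}_{P,2}^2,
\]
because $f_w^{(-k)}\in\cH$. Weight these by $p_k$ and sum to obtain the population criterion $\bar{\mathcal R}(w)$, which is the population counterpart of $\bar R_n(w)-\bar R_n(g_0)$. Let $w^\ast$ attain $\bar e_g^2$. Optimality of $\hat w_g$ gives $\bar R_n(\hat w_g)\le\bar R_n(w^\ast)$. Hence
\[
\sum_kp_k\norm{f^{(-k)}_{\hat w_g}-g_0}_{P,2}^2\le \bar e_g^2+\sum_k p_k\,(\mathbb P_{n_k}-P)\bigl[\ell_{w^\ast}^{(k)}-\ell_{\hat w_g}^{(k)}\bigr],
\]
where
\[
\ell_w^{(k)}(O):=(Y-f_w^{(-k)}(W))^2-(Y-g_0(W))^2 .
\]
Write $\Delta_w^{(k)}:=f_w^{(-k)}-g_0$. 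The integrand splits as a linear part $-2(Y-g_0)\Delta_w^{(k)}$ plus the quadratic part $(\Delta_w^{(k)})^2$, and each part is bounded by multiples of $(B_Y+B)$ times $|\Delta_w^{(k)}|$. The variance of $\ell_w^{(k)}-\ell_{w^\ast}^{(k)}$ is therefore at most $C(B_Y+B)^2\norm{f_w^{(-k)}-f_{w^\ast}^{(-k)}}_{P,2}^2$. This Bernstein condition is what yields a fast $1/n$ rate rather than $1/\sqrt n$.

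\textbf{Step 2 (uniform concentration over the simplex).} Conditional on $I_k^c$, the map $w\mapsto\ell_w^{(k)}$ is a quadratic function of $w$ with coefficients bounded by $(B_Y+B)^2$. Discretize $\Delta_M$ by an $\varepsilon$-net in $\ell_1$ with $\varepsilon=1/n$. Its cardinality is at most $(3n)^M$, giving the $M\log(3n)$ term, and the discretization error costs $O((B_Y+B)^2/n)$ uniformly because $\ell_w$ is Lipschitz in $w$ in $\ell_1$ with constant $\lesssim(B_Y+B)^2$. On the net, Bernstein's inequality plus a union bound over the net and over $k$ gives, with probability at least $1-\delta$, for all net points $w$ and all $k$,
\[
|(\mathbb P_{n_k}-P)(\ell_{w^\ast}^{(k)}-\ell_w^{(k)})|\lesssim (B_Y+B)\norm{f_w^{(-k)}-f_{w^\ast}^{(-k)}}_{P,2}\sqrt{\tfrac{L}{n_k}}+(B_Y+B)^2\tfrac{L}{n_k},
\]
where $L:=M\log(3n)+\log(K/\delta)$ and $n_k\ge n/(2K)$.

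\textbf{Step 3 (closing the argument; main obstacle).} The hard step is absorbing the square-root term. One has
\[
\norm{f_w-f_{w^\ast}}\le\norm{f_w-g_0}+\norm{f_{w^\ast}-g_0},
\]
and the cross term $\norm{f_{w^\ast}-g_0}\sqrt{L/n}$ is not automatically of order $L/n$. The plan is to use convexity of the simplex. Since $\hat w_g$ minimizes a convex empirical quadratic, the standard variational argument gives $\norm{f_{\hat w}-f_{w^\ast}}^2\le \bar{\mathcal R}(\hat w)-\bar{\mathcal R}(w^\ast)$ up to the empirical fluctuation term, so the fluctuation can be localized around $w^\ast$ instead of $g_0$. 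The centered process is then indexed by differences $f_w-f_{w^\ast}$, whose variance matches the left side. Applying $2ab\le a^2/2+2b^2$ absorbs half of $\norm{f_{\hat w}-f_{w^\ast}}^2$. Summing over $k$ with weights $p_k$ then produces the stated remainder $CK(B_Y+B)^2L/n$. If that localization turns out too delicate, I would accept a bound with $\bar e_g^2$ multiplied by $(1+\epsilon)$ for a constant $\epsilon$; this still suffices for the downstream rates.
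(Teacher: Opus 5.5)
\textbf{Gap in Step~2.} You apply Bernstein's inequality conditionally on $\{O_i:i\in I_k^c\}$ to $\ell^{(k)}_{w^\ast}-\ell^{(k)}_w$, and you take a union bound over net points $w$ only. That treats $w^\ast$ as a fixed index, but it is not fixed. The weight $w^\ast$ minimizes $\sum_{k'}p_{k'}\norm{f^{(-k')}_w-g_0}_{P,2}^2$, and for every $k'\neq k$ the menu $\cF^{(-k')}$ is trained on $I_{k'}^c\supseteq I_k$. So $w^\ast$ is a function of the fold-$k$ observations, $\ell^{(k)}_{w^\ast}$ is not fixed given the training sample, and your displayed deviation bound does not follow.

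This is the price of the recentering you rightly chose to avoid the cross term $\norm{f_{w^\ast}-g_0}_{P,2}\sqrt{L/n}$. Centering at $g_0$ keeps the index fixed but loses sharpness; centering at $w^\ast$ is sharp but makes the center depend on the data.

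The repair is cheap. Control the process uniformly over \emph{pairs} $(w,w')\in\Delta_M\times\Delta_M$, with variance proxy $4(B_Y+B)^2\norm{f^{(-k)}_w-f^{(-k)}_{w'}}_{P,2}^2$. Discretize both coordinates and take a union bound over $N\times N$, which has at most $(3n)^{2M}$ points. This only doubles the $M\log(3n)$ term. The paper does exactly this: it observes that $(\hat w_g,w^\ast)$ depends on the evaluation fold and passes to the supremum over all pairs.

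\textbf{Comparison with the paper.} With that change your argument closes, by a different localization than the paper's. The inequality you need in Step~3 is the population Pythagorean inequality. In the product space with inner product $\sum_kp_k\langle\cdot,\cdot\rangle_{P}$, the point $(f^{(-k)}_{w^\ast})_k$ is the projection of $(g_0,\dots,g_0)$ onto the convex set $\{(f^{(-k)}_w)_k:w\in\Delta_M\}$. Hence $\sum_kp_k\norm{f^{(-k)}_w-f^{(-k)}_{w^\ast}}_{P,2}^2\le\sum_kp_k\norm{f^{(-k)}_w-g_0}_{P,2}^2-\bar e_g^2$ for \emph{every} $w$. This comes from optimality of $w^\ast$, not of $\hat w_g$, and involves no fluctuation term. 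So neither your hedge nor the $(1+\epsilon)$ fallback is needed; the fallback would not prove the stated inequality with constant one anyway.

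Combine this with your basic inequality, Cauchy--Schwarz across folds, $n_k\ge n/(2K)$, and the step $X\le a\sqrt X+b\Rightarrow X\le a^2+2b$. That gives the claimed remainder. The Riesz case is identical, with variance proxy of order $B^2+C_m$ and envelope of order $B^2+B_m$.

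The paper instead uses the first-order condition of the empirical program at $\hat w_g$ in the direction $w^\ast-\hat w_g$. It pairs this with the exact identity $\norm{f_{\hat w}-g_0}_{P,2}^2=\norm{f_{w^\ast}-g_0}_{P,2}^2-\norm{\Delta}_{P,2}^2+2P[(f_{\hat w}-Y)\Delta]$. This yields a one-sided offset process $2(P-\Pn^{(k)})Q-\norm{\Delta}_{P,2}^2$ (Lemma~\ref{lem:simplex-offset}) with the negative quadratic built in. That route needs no projection property of the comparator and no quadratic inequality to solve. Yours uses only $\bar R_n(\hat w_g)\le\bar R_n(w^\ast)$ plus minimality of $w^\ast$, which is the classical argument for ERM over a convex class. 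Both give the same order.
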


\begin{theorem}[DML inference with pooled convex aggregation]
\label{thm:pooled-convex}
Suppose Assumption~\ref{ass:pooled-star}(i), (ii), and~(iv) hold, and let $\hat g^{(-k)}$ and $\hat\alpha^{(-k)}$ be the pooled convex estimators of Definition~\ref{def:pooled-convex}.
Suppose $\bar e_g=O_\Pp(a_{g,n})$ and $\bar e_\alpha=O_\Pp(a_{\alpha,n})$ for deterministic sequences such that, with
$\bar r_{g,n}:=a_{g,n}+\sqrt{M\log n/n}$ and $\bar r_{\alpha,n}:=a_{\alpha,n}+\sqrt{M\log n/n}$,
\begin{equation}
\sqrt n\,\bar r_{g,n}\,\bar r_{\alpha,n}\to0
\qquad\text{and}\qquad
\sqrt{M\log n}\,\bigl(\bar r_{g,n}+\bar r_{\alpha,n}\bigr)\to0.
\label{eq:pooled-convex-rates}
\end{equation}
Then the conclusions of Theorem~\ref{thm:pooled-star} hold.
\end{theorem}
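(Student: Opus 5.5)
The plan is to rerun the argument behind Theorem~\ref{thm:pooled-star}, with the star-segment index set $\mathcal T$ replaced by the simplex $\Delta_M$. The logarithmic complexity $\log M$ is then replaced by a metric-entropy bound of order $M\log n$.

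\emph{Decomposition.} Write $\hat\theta_0-\theta_0=\frac1n\sum_i\{\psi^\star(O_i;g_0,\alpha_0)-\theta_0\}+B_n+E_n$. Here
\[
B_n:=\sum_k p_k\bigl(\E_P[\psi^\star(O;\hat g^{(-k)},\hat\alpha^{(-k)})]-\theta_0\bigr),
\]
with the expectation taken over a fresh $O$ and the fitted functions held fixed. The term $E_n$ collects the centered empirical-process remainders
\[
p_k(\mathbb P_{n,k}-P)\bigl[\psi^\star(\cdot;\hat g^{(-k)},\hat\alpha^{(-k)})-\psi^\star(\cdot;g_0,\alpha_0)\bigr].
\]
Since every candidate lies in $\cH$, Proposition~\ref{prop:pipeline-mixed-bias}(ii) gives
\[
B_n=-\sum_k p_k\E[(\hat g^{(-k)}-g_0)(\hat\alpha^{(-k)}-\alpha_0)].
\]
Cauchy--Schwarz applied twice, first in $L^2(P)$ and then over $k$ with weights $p_k$, bounds $|B_n|$ by the product of the square roots of the two fold-averaged errors. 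Proposition~\ref{prop:pooled-convex-oracle}, applied with $\delta\to0$ slowly, bounds these errors by $O_\Pp(\bar r_{g,n}^2)$ and $O_\Pp(\bar r_{\alpha,n}^2)$, because $\bar e_g=O_\Pp(a_{g,n})$ and $\bar e_\alpha=O_\Pp(a_{\alpha,n})$. Hence $\sqrt n|B_n|=O_\Pp(\sqrt n\,\bar r_{g,n}\bar r_{\alpha,n})=o_\Pp(1)$ by the first condition in \eqref{eq:pooled-convex-rates}.

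\emph{Main obstacle: $E_n$.} The weights $\hat w_g$ and $\hat w_\alpha$ depend on fold $k$'s observations, so conditioning alone does not kill $E_n$. I would condition on $I_k^c$, which fixes the menus. The map $(w,v)\mapsto\psi^\star(\cdot;f^{(-k)}_w,a^{(-k)}_v)-\psi^\star(\cdot;g_0,\alpha_0)$ is then a fixed class indexed by $\Delta_M\times\Delta_M$, and I would bound its supremum over the localized set
\[
\{(w,v):\ \norm{f^{(-k)}_w-g_0}_{P,2}\le c\,\bar r_{g,n},\ \norm{a^{(-k)}_v-\alpha_0}_{P,2}\le c\,\bar r_{\alpha,n}\}.
\]
By Proposition~\ref{prop:pooled-convex-oracle} and $p_k\ge1/(2K)$, the selected pair lies in this set with probability tending to one. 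Expanding the score gives a linear part
\[
m(O,f_w-g_0)-\alpha_0(f_w-g_0)+(a_v-\alpha_0)(Y-g_0)
\]
and a product part $-(a_v-\alpha_0)(f_w-g_0)$. Assumption~\ref{ass:pooled-star}(i) makes each term bounded by a constant depending on $B_Y,B,B_m$. Assumption~\ref{ass:pooled-star}(ii) makes its $L^2$ norm $O(\bar r_{g,n}+\bar r_{\alpha,n})$. The class is Lipschitz in $(w,v)$ with respect to $\ell_1$, with constant $O(B+B_m)$. An $\varepsilon$-net of $\Delta_M$ in $\ell_1$ of size $(3/\varepsilon)^M$, with $\varepsilon=1/n$, together with Bernstein's inequality and a union bound, then gives
\[
\sqrt{n_k}\,\sup|(\mathbb P_{n,k}-P)(\cdot)|=O_\Pp\!\Bigl((\bar r_{g,n}+\bar r_{\alpha,n})\sqrt{M\log n}+\frac{M\log n}{\sqrt n}\Bigr)+O(n^{-1/2}).
\]
Both terms are $o(1)$: the first by the second condition in \eqref{eq:pooled-convex-rates}, and the second because $\sqrt{M\log n/n}\le\bar r_{g,n}$ makes it at most $\sqrt{M\log n}\,\bar r_{g,n}$. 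A localized chaining argument could replace the crude net, but the net already suffices here.

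\emph{Conclusion.} The two previous steps give expansion \eqref{eq:pipeline-expansion}. The Lindeberg CLT applies because the score is bounded and $\sigma_0$ is bounded away from zero by Assumption~\ref{ass:pooled-star}(iv). For variance consistency, I would show $\frac1n\sum_i(\hat\psi_i-\psi^\star_i)^2=o_\Pp(1)$ by the same uniform bound applied to the squared differences, whose means are $O(\bar r_{g,n}^2+\bar r_{\alpha,n}^2)$. Coverage of $\mathrm{CI}_0(1-\gamma)$ then follows exactly as in Theorem~\ref{thm:pooled-star}.
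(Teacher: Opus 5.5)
Your proposal is correct and follows essentially the paper's proof. The steps coincide: you localize $(\hat w_g,\hat w_\alpha)$ through Proposition~\ref{prop:pooled-convex-oracle}, handle the bias with the mixed-bias identity of Proposition~\ref{prop:pipeline-mixed-bias}(ii) (no independence needed), and control the empirical-process and variance remainders conditionally on $D_k^c$ by an $\ell_1$ net of mesh $1/n$ on $\Delta_M\times\Delta_M$, restricted to the localized set, together with a Bernstein maximal bound; the paper only packages these steps as conditions (a)--(c) of Lemma~\ref{lem:selection-outside}.

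One detail to tighten: with a fixed radius $c\,\bar r$, the selected pair lies in the localized set only with probability $1-\epsilon(c)$, not with probability tending to one. Either let $c=C_n\to\infty$ slowly, as the paper does (the strict limits in \eqref{eq:pooled-convex-rates} leave room for this), or finish with the routine $\epsilon$-argument.
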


Convex aggregation pays for its richer class twice.
Its estimation term grows like $M\log n/n$ rather than $\log M/n$, and pooling costs a factor $\sqrt{M\log n}$ rather than $\sqrt{\log M}$.
If $a_{g,n}a_{\alpha,n}=o(n^{-1/2})$ and $a_{g,n}+a_{\alpha,n}=O(n^{-1/4})$, both conditions in \eqref{eq:pooled-convex-rates} reduce to $M\log n=o(\sqrt n)$.
Up to the logarithm, this is the condition $M=o(\sqrt n)$ that proper convex aggregation already requires after \eqref{eq:l1-radius}; pooling the weights across folds costs nothing further.
In return, $\bar e_g\le e_g$ and $\bar e_\alpha\le e_\alpha$: a convex combination can only improve approximation power.

\begin{remark}[Computation]
\label{rem:computation}
Star aggregation inside the folds with an internal split trains each candidate on part of each training fold and aggregates on the rest; nested cross-fitting with $K'$ inner folds requires $KK'$ fits of every candidate.
Pooled aggregation requires $K$ fits, the fewest that cross-fitting allows, and fits the weights to all $n$ out-of-fold predictions.
% With fine-tuned foundation models, the saving can decide whether the analysis is feasible.
\end{remark}

\section{Sensitivity inference for full-information targets}
\label{sec:bounds}

\subsection{Identified regions from representation loss}
Target adaptivity gives a pointwise route from the pipeline target to $\theta^\star$.
The omitted-information identity gives a complementary route that remains informative when representation loss is material, as one can specify transparent bounds on its strength and alignment, and the theory maps those restrictions into an identified region for the full-information target.
This section develops that region and shows how to attach sampling uncertainty to its endpoints.

Throughout this section we focus on the fine-tuned setting of Section~\ref{sec:fine-tuned}.
The relevant ``short'' target is the population pipeline target $\theta_0$ from Definition~\ref{def:pipeline-target}.
For the fixed-embedding specialization, take $\Phi=\{\phi\}$ and require
\[
\{g_\phi,\alpha_\phi\}\subseteq\mathcal H.
\]
This requires the head space to represent the two relevant short nuisances; it need not contain every square-integrable function of $W_\phi$. Since $\mathcal H\subseteq L^2(P_{W_\phi})$, the projection identities then give $(\theta_0,g_0,\alpha_0)=(\theta_\phi,g_\phi,\alpha_\phi)$.

The starting point is the omitted-information identity for the pipeline target.
Proposition~\ref{prop:pipeline-mixed-bias}(iii) gives
\[
\theta^\star-\theta_0=\E\bigl[(g^\star-g_0)(\alpha^\star-\alpha_0)\bigr].
\]
Following \cite{chernozhukov2024long}, we adopt their long/short notation and write
\[
(\theta,\theta_s,g,g_s,\alpha,\alpha_s)
:=
(\theta^\star,\theta_0,g^\star,g_0,\alpha^\star,\alpha_0).
\]
Then the identity above reads $\theta-\theta_s=\E[(g-g_s)(\alpha-\alpha_s)]$.

To convert the exact identity into an estimable sensitivity analysis, we use the factorization in \cite{chernozhukov2024long}: an identified scaling term is multiplied by interpretable sensitivity parameters that describe the unobserved differences $(g-g_s)$ and $(\alpha-\alpha_s)$.
One convenient factorization is
\begin{equation}
\theta-\theta_s = \rho\, C_Y\, C_D\, S,
\label{eq:bias-factorization}
\end{equation}
where
\[
S^2 := \E\!\left[(Y-g_s(W))^2\right]\;\E\!\left[\alpha_s(W)^2\right],
\qquad
\rho:=\frac{\E[(g-g_s)(\alpha-\alpha_s)]}{\norm{g-g_s}_{P,2}\,\norm{\alpha-\alpha_s}_{P,2}}\in[-1,1],
\]
with $\rho:=0$ if either norm vanishes. Here $\rho$ measures the normalized $L^2(P)$ alignment of the omitted components; it coincides with their correlation when both components are centered and have nonzero variance. Centering is not required for this definition.
The sensitivity parameters $(C_Y,C_D)$ use the following uncentered projection $R^2$ ratios:
\[
C_Y^2 := R^2_{Y-g_s\sim g-g_s}
=
\frac{\E[(g-g_s)^2]}{\E[(Y-g_s)^2]},
\qquad
C_D^2 :=
\frac{1-R^2_{\alpha\sim \alpha_s}}{R^2_{\alpha\sim \alpha_s}}
=
\frac{\E[(\alpha-\alpha_s)^2]}{\E[\alpha_s^2]}.
\]
The term $S$ depends only on the short objects $(g_s,\alpha_s)$ and is therefore identified; by contrast, $(\rho,C_Y,C_D)$ summarize the magnitude and direction of information loss.

Given user-chosen bounds $|\rho|\le\bar\rho\in[0,1]$, $C_Y\le\bar C_Y$, and $C_D\le\bar C_D$ with $\bar C_Y,\bar C_D\ge0$, equation \eqref{eq:bias-factorization} yields the interval of feasible full-information targets
\begin{equation}
\theta
\in
\Bigl[\theta_s-\bar\rho\,\bar C_Y\,\bar C_D\, S,\;\;
\theta_s+\bar\rho\,\bar C_Y\,\bar C_D\, S\Bigr].
\label{eq:bounds}
\end{equation}
Translating back to our earlier notation, this is an interval for $\theta^\star$ centered at the population target of the end-to-end pipeline, namely $\theta_0$.
In the fixed-embedding special case $\Phi=\{\phi\}$ with $\{g_\phi,\alpha_\phi\}\subseteq\mathcal H$, this center reduces to $\theta_\phi$.

Once $(\bar\rho,\bar C_Y,\bar C_D)$ are fixed, the only remaining uncertainty in \eqref{eq:bounds} is sampling uncertainty in the estimable components $(\theta_s,S)$.
We now develop DML inference for these bound endpoints, following the approach of \cite{chernozhukov2024long}.

\subsection{DML inference for sensitivity bounds in the fine-tuned case}\label{sec:bounds-components}

Once a sensitivity scenario $(\bar\rho,\bar C_Y,\bar C_D)$ is fixed, \eqref{eq:bounds} shows that the only remaining sampling uncertainty is in the estimable components $(\theta_s,S)$.
In the fine-tuned setting we take $\theta_s=\theta_0$ and $(g_s,\alpha_s)=(g_0,\alpha_0)$ from Definition~\ref{def:pipeline-target}.
The fixed-embedding case is included by the specialization $\Phi=\{\phi\}$ with $\{g_\phi,\alpha_\phi\}\subseteq\mathcal H$, under which $(\theta_0,g_0,\alpha_0)=(\theta_\phi,g_\phi,\alpha_\phi)$.

Define the estimable bound components:
\[
\sigma_0^2 := \E\!\left[(Y-g_0(W))^2\right],
\qquad
\nu_0^2 := \E\!\left[\alpha_0(W)^2\right],
\qquad
S_0 := (\sigma_0^2\nu_0^2)^{1/2}.
\]
These are the fine-tuned analogues of the $(\sigma_s^2,\nu_s^2,S)$ components in \cite{chernozhukov2024long}.

In addition to the short-target score used to estimate $\theta_0$ in Section~\ref{sec:fine-tuned}, we use the following auxiliary scores:
\begin{align}
\psi_{\theta_0}(O;\theta_0,g,\alpha)
&:= m(O,g)+\alpha(W)\{Y-g(W)\}-\theta_0,
\label{eq:score-theta-0}
\\
\psi_{\sigma_0^2}(O;\sigma_0^2,g)
&:= \{Y-g(W)\}^2-\sigma_0^2,
\label{eq:score-sigma-0}
\\
\psi_{\nu_0^2}(O;\nu_0^2,\alpha)
&:= \{2m(O,\alpha)-\alpha(W)^2\}-\nu_0^2.
\label{eq:score-nu-0}
\end{align}
As in \cite{chernozhukov2024long}, these scores are Neyman-orthogonal with respect to perturbations of $(g,\alpha)$ at $(g_0,\alpha_0)$.
Orthogonality for $\psi_{\sigma_0^2}$ uses that $g_0$ is an $L^2(P)$ projection, and orthogonality for $\psi_{\nu_0^2}$ uses linearity of $m$ together with the restricted Riesz property in Proposition~\ref{prop:pipeline-mixed-bias}(i).

Using the same fold partition $\{I_k\}_{k=1}^K$ as in Section~\ref{sec:fine-tuned}, run the end-to-end pipeline on each training sample $I_k^c$ to obtain fitted functions
$\hat g^{(-k)}$ and $\hat\alpha^{(-k)}$, intended to approximate the population projections $g_0$ and $\alpha_0$,
where the pipeline may internally fine-tune a fold-specific representation $\hat\phi^{(-k)}$.
Define the cross-fitted score evaluations
\[
\hat\psi_i:=\psi^\star\bigl(O_i;\hat g^{(-k(i))},\hat\alpha^{(-k(i))}\bigr),
\qquad
\hat\theta_0:=\frac{1}{n}\sum_{i=1}^n\hat\psi_i.
\]
To estimate $(\sigma_0^2,\nu_0^2,S_0)$, set
\begin{align}
\hat\sigma_0^2
&:= \frac{1}{n}\sum_{k=1}^K\sum_{i\in I_k}\Bigl\{Y_i-\hat g^{(-k)}(W_i)\Bigr\}^2,
\label{eq:sigmahat-pipeline}
\\
\hat\nu_0^2
&:= \frac{1}{n}\sum_{k=1}^K\sum_{i\in I_k}\Bigl\{2m(O_i,\hat\alpha^{(-k)})-\hat\alpha^{(-k)}(W_i)^2\Bigr\},
\label{eq:nuhat-pipeline}
\\
\hat S_0
&:= (\hat\sigma_0^2\hat\nu_0^2)^{1/2}.
\label{eq:Shat-pipeline}
\end{align}

Fix a sensitivity scenario $(\bar\rho,\bar C_Y,\bar C_D)$ and write
$
c:=\bar\rho\,\bar C_Y\,\bar C_D.$
The population endpoints in \eqref{eq:bounds} can then be written as
$\theta_0^{\pm}(c):=\theta_0\pm cS_0,$
with plug-in estimators
\begin{equation}
\hat\theta_0^{\pm}(c):=\hat\theta_0\pm c\hat S_0.
\label{eq:theta-pm-hat-pipeline}
\end{equation}

To obtain root-$n$ inference for these endpoints, we require $L^2(P)$ rates for the fitted pipeline nuisances.
As in Theorem~\ref{thm:pipeline-dml}, the condition is placed directly on the end-to-end fitted functions $(\hat g^{(-k)},\hat\alpha^{(-k)})$, allowing the intermediate tuning parameters $\hat\phi^{(-k)}$ to remain fold-specific and nonsmooth.

\begin{assumption}[Regularity for sensitivity bounds with a fine-tuned pipeline]
\label{ass:pipeline-bounds-dml}
Let $\mathcal H$ and $(g_0,\alpha_0,\theta_0)$ be as in Definition~\ref{def:pipeline-target}.
Assume that, for some $q>4$, $Y,g_0,\alpha_0\in L^q(P)$.
Assume that $m$ is $L^2$-bounded on $\mathcal H$ in the sense of Lemma~\ref{lem:score-conv}.

For each fold $k$, require that both $\hat g^{(-k)}$ and $\hat\alpha^{(-k)}$ belong to $\mathcal H$ and are constructed using only $I_k^c$. Also assume
\[
\max_{1\le k\le K}\norm{\hat g^{(-k)}}_{P,q}
+\max_{1\le k\le K}\norm{\hat\alpha^{(-k)}}_{P,q}=O_\Pp(1).
\]
These moment bounds prevent rare extreme predictions from dominating the score and allow $L^2(P)$ consistency to imply $L^4(P)$ consistency. The moment requirement on $g_0$ is explicit because projection onto a general linear space need not preserve the moments of $Y$.
Uniformly over folds,
\begin{equation}
\max_{1\le k\le K}\norm{\hat g^{(-k)}-g_0}_{P,2}=o_\Pp(n^{-1/4}),
\qquad
\max_{1\le k\le K}\norm{\hat\alpha^{(-k)}-\alpha_0}_{P,2}=o_\Pp(n^{-1/4}).
\label{eq:pipeline-bounds-rate}
\end{equation}
Finally, assume nondegeneracy: $\sigma_0^2:=\E[(Y-g_0(W))^2]$ and $\nu_0^2:=\E[\alpha_0(W)^2]$ are bounded away from zero.
\end{assumption}

\begin{lemma}[DML for the bound components]
\label{lem:dml-pipeline-bound-components}
Suppose Assumption~\ref{ass:pipeline-bounds-dml} holds.
Define the auxiliary scores
\begin{align*}
\psi_{\theta_0}(O;\theta_0,g,\alpha)
&:= m(O,g)+\alpha(W)\{Y-g(W)\}-\theta_0,\\
\psi_{\sigma_0^2}(O;\sigma_0^2,g)
&:= \{Y-g(W)\}^2-\sigma_0^2,\\
\psi_{\nu_0^2}(O;\nu_0^2,\alpha)
&:= \{2m(O,\alpha)-\alpha(W)^2\}-\nu_0^2.
\end{align*}
Then
\begin{align}
\sqrt{n}(\hat\theta_0-\theta_0)
&=
\frac{1}{\sqrt{n}}\sum_{i=1}^n\psi_{\theta_0}(O_i;\theta_0,g_0,\alpha_0)+o_\Pp(1),
\label{eq:asylin-theta-pipeline}
\\
\sqrt{n}(\hat\sigma_0^2-\sigma_0^2)
&=
\frac{1}{\sqrt{n}}\sum_{i=1}^n\psi_{\sigma_0^2}(O_i;\sigma_0^2,g_0)+o_\Pp(1),
\label{eq:asylin-sigma-pipeline}
\\
\sqrt{n}(\hat\nu_0^2-\nu_0^2)
&=
\frac{1}{\sqrt{n}}\sum_{i=1}^n\psi_{\nu_0^2}(O_i;\nu_0^2,\alpha_0)+o_\Pp(1).
\label{eq:asylin-nu-pipeline}
\end{align}
Consequently, $(\hat\theta_0,\hat\sigma_0^2,\hat\nu_0^2)$ is jointly asymptotically normal, and its asymptotic covariance matrix is consistently estimated by the empirical covariance of the corresponding cross-fitted score vector.
\end{lemma}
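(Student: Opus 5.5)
The plan is to treat each of the three estimators with the same cross-fitting decomposition used for Theorem~\ref{thm:pipeline-dml}. For a generic score $\psi(O;\eta)$ with nuisance $\eta\in\{g,\alpha,(g,\alpha)\}$, write, fold by fold,
\[
\frac{1}{n}\sum_{i\in I_k}\psi(O_i;\hat\eta^{(-k)})-\frac{n_k}{n}P\psi(O;\eta_0)
=\frac{1}{n}\sum_{i\in I_k}\{\psi(O_i;\eta_0)-P\psi(O;\eta_0)\}+\mathrm{(EP)}_k+\mathrm{(B)}_k .
\]
Here $\mathrm{(EP)}_k$ is the centered empirical process applied to $\psi(\cdot;\hat\eta^{(-k)})-\psi(\cdot;\eta_0)$, and $\mathrm{(B)}_k=\frac{n_k}{n}P\{\psi(\cdot;\hat\eta^{(-k)})-\psi(\cdot;\eta_0)\}$ is the population drift. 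For $\hat\theta_0$, expansion \eqref{eq:asylin-theta-pipeline} is exactly Theorem~\ref{thm:pipeline-dml}. Its hypotheses follow from Assumption~\ref{ass:pipeline-bounds-dml}: the product of two $o_\Pp(n^{-1/4})$ rates is $o_\Pp(n^{-1/2})$. The only ingredient not listed there is the variance bound, which is needed only for the CLT, so I will simply reuse the expansion part. The new work is therefore $\hat\sigma_0^2$ and $\hat\nu_0^2$.

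\textbf{Drift terms.} These follow from exact quadratic identities. For $g\in\mathcal H$,
\[
P\{(Y-g)^2-(Y-g_0)^2\}=\norm{g-g_0}_{P,2}^2-2P\{(Y-g_0)(g-g_0)\}=\norm{g-g_0}_{P,2}^2,
\]
because $g_0$ is the $L^2(P)$ projection onto $\mathcal H$ and $g-g_0\in\mathcal H$. This is equation \eqref{eq:excess-risk-reg} of Lemma~\ref{lem:excess-risk-identities}. Similarly, $P\psi_{\nu_0^2}(\alpha)-P\psi_{\nu_0^2}(\alpha_0)=-(\mathcal L(\alpha)-\mathcal L(\alpha_0))=-\norm{\alpha-\alpha_0}_{P,2}^2$ by \eqref{eq:excess-risk-riesz}, which itself rests on the restricted Riesz property in Proposition~\ref{prop:pipeline-mixed-bias}(i). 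Since $\hat g^{(-k)},\hat\alpha^{(-k)}\in\mathcal H$, both drifts are $o_\Pp(n^{-1/2})$ by \eqref{eq:pipeline-bounds-rate}. This is the Neyman-orthogonality step, and it is the reason the individual rates must be $n^{-1/4}$ here rather than just a product rate.

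\textbf{Empirical-process terms.} Conditional on $I_k^c$, the summands in $\mathrm{(EP)}_k$ are i.i.d. with mean zero. By Chebyshev, it suffices to show that the conditional second moment of the score difference is $o_\Pp(1)$. For $\sigma^2$ the difference is $(g_0-\hat g)(2Y-\hat g-g_0)$. By H\"older with exponents $(4,4)$, its $L^2$ norm is at most $\norm{\hat g-g_0}_{P,4}\,\norm{2Y-\hat g-g_0}_{P,4}$. The second factor is $O_\Pp(1)$ since $q>4$. The first tends to zero by interpolating between $L^2$ and $L^q$: $\norm{h}_{P,4}\le\norm{h}_{P,2}^{\lambda}\norm{h}_{P,q}^{1-\lambda}$ for some $\lambda\in(0,1]$, combined with the uniform $L^q$ bounds. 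For $\nu^2$ the difference is $2m(O,\hat\alpha-\alpha_0)-(\hat\alpha-\alpha_0)(\hat\alpha+\alpha_0)$. The first piece has $L^2$ norm at most $2\sqrt{C_m}\,\norm{\hat\alpha-\alpha_0}_{P,2}$ by $L^2$-boundedness of $m$ on $\mathcal H$. The second piece is handled by the same $L^4$ interpolation argument. I expect this to be the main (though mild) obstacle, because $L^2$ rates control products only through the $q>4$ moments. Everything must be done conditionally on the training folds, with the $O_\Pp$ bounds uniform over the finitely many folds.

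\textbf{Joint normality and covariance estimation.} Summing over $k$ gives \eqref{eq:asylin-sigma-pipeline} and \eqref{eq:asylin-nu-pipeline}. Stacking the three influence functions, which have finite second moments since $q>4$, and applying the multivariate Lindeberg--L\'evy CLT (Cram\'er--Wold) gives joint normality. For the covariance estimator, I will show $\frac1n\sum_i(\hat\psi_i-\psi_i)^2=o_\Pp(1)$ componentwise, using the same conditional $L^2$ bounds together with Markov's inequality. The law of large numbers for $\frac1n\sum_i\psi_i\psi_i^\top$, which holds because the squared scores are in $L^{q/2}$ with $q/2>2$, and Cauchy--Schwarz for the cross terms then give consistency of the empirical covariance.
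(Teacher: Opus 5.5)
Your proposal is correct and follows the paper's proof essentially step for step. The $\hat\theta_0$ expansion is taken from Theorem~\ref{thm:pipeline-dml}, which does not need the variance to be positive. The drifts of the $\sigma_0^2$ and $\nu_0^2$ scores are exactly $\norm{\hat g^{(-k)}-g_0}_{P,2}^2$ and $-\norm{\hat\alpha^{(-k)}-\alpha_0}_{P,2}^2$, by projection orthogonality and the restricted Riesz property. The empirical-process terms are handled by conditional Chebyshev together with $L^2$--$L^q$ interpolation to $L^4$.

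The paper is more explicit at one point in the covariance step. The plug-in scores are centered at the full-sample estimates $(\hat\theta_0,\hat\sigma_0^2,\hat\nu_0^2)$, so conditional Markov does not apply to them directly. The paper therefore applies conditional Markov to the training-measurable, truth-centered scores first, and then adds the constant $O_\Pp(n^{-1/2})$ centering shift separately.
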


\begin{theorem}[DML inference for sensitivity bounds in the fine-tuned case]
\label{thm:bounds-pipeline}
Suppose Assumption~\ref{ass:pipeline-bounds-dml} holds and fix $c\ge 0$.
Let $S_0:=(\sigma_0^2\nu_0^2)^{1/2}$ and define the population endpoints $\theta_0^{\pm}(c):=\theta_0\pm cS_0$.
If $S_0>0$, then
\begin{equation}
\sqrt{n}\bigl(\hat\theta_0^{\pm}(c)-\theta_0^{\pm}(c)\bigr)
=
\frac{1}{\sqrt{n}}\sum_{i=1}^n\varphi_{0,\pm}(O_i)+o_\Pp(1)
\quad\Rightarrow\quad
N\bigl(0,\;\E[\varphi_{0,\pm}(O)^2]\bigr),
\label{eq:asylin-bounds-pipeline}
\end{equation}
where
\begin{equation}
\varphi_{0,\pm}(O)
:=
\psi_{\theta_0}(O;\theta_0,g_0,\alpha_0)
\pm
\frac{c}{2S_0}\Bigl\{\nu_0^2\,\psi_{\sigma_0^2}(O;\sigma_0^2,g_0)+\sigma_0^2\,\psi_{\nu_0^2}(O;\nu_0^2,\alpha_0)\Bigr\}.
\label{eq:if-bounds-pipeline}
\end{equation}
Moreover, a consistent variance estimator is obtained by replacing unknown objects in \eqref{eq:if-bounds-pipeline} with their cross-fitted estimates and taking the empirical variance.
\end{theorem}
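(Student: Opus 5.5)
The plan is a delta-method argument built on Lemma~\ref{lem:dml-pipeline-bound-components}. That lemma already gives the joint asymptotically linear representations \eqref{eq:asylin-theta-pipeline}--\eqref{eq:asylin-nu-pipeline} for $(\hat\theta_0,\hat\sigma_0^2,\hat\nu_0^2)$. The endpoint estimator $\hat\theta_0^{\pm}(c)=\hat\theta_0\pm c\,(\hat\sigma_0^2\hat\nu_0^2)^{1/2}$ is a smooth function of these three components. So it suffices to linearize the map $F(t,s,v):=t\pm c\sqrt{sv}$ at $(\theta_0,\sigma_0^2,\nu_0^2)$.

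This map is continuously differentiable wherever $sv>0$. Assumption~\ref{ass:pipeline-bounds-dml} bounds $\sigma_0^2$ and $\nu_0^2$ away from zero, and the theorem assumes $S_0>0$. Its partial derivatives at the truth are $1$, $\pm c\,\nu_0^2/(2S_0)$ and $\pm c\,\sigma_0^2/(2S_0)$.

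The first step is to show $\hat\sigma_0^2\hat\nu_0^2\overset{p}{\to}\sigma_0^2\nu_0^2>0$, which follows from the $\sqrt n$-expansions. Hence with probability tending to one the square root is evaluated in a neighbourhood where it is smooth; note that $\hat\nu_0^2$ could be negative in finite samples, but this event has vanishing probability.

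A first-order Taylor expansion then gives
\[
\sqrt n(\hat S_0-S_0)=\frac{1}{2S_0}\Bigl\{\nu_0^2\sqrt n(\hat\sigma_0^2-\sigma_0^2)+\sigma_0^2\sqrt n(\hat\nu_0^2-\nu_0^2)\Bigr\}+o_\Pp(1).
\]
The remainder is $O_\Pp(n^{-1/2})$ times $\sqrt n$-tight terms, since the second derivative of $\sqrt{sv}$ is bounded near the truth. Substituting the lemma's expansions and adding \eqref{eq:asylin-theta-pipeline} yields the linear representation with influence function $\varphi_{0,\pm}$ in \eqref{eq:if-bounds-pipeline}.

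The CLT step uses the moment conditions of Assumption~\ref{ass:pipeline-bounds-dml}. Because $q>4$ and $Y,g_0,\alpha_0\in L^q$, and $m$ is $L^2$-bounded, the three scores have finite second moments. The Lindeberg--Lévy CLT (or Lyapunov, if one tracks $n$-dependence of $\mathcal H$) then delivers the normal limit with variance $\E[\varphi_{0,\pm}^2]$. The scores have mean zero by the definitions of $\theta_0,\sigma_0^2,\nu_0^2$, using Proposition~\ref{prop:pipeline-mixed-bias}(i) for $\E[2m(O,\alpha_0)-\alpha_0^2]=\nu_0^2$.

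For the variance estimator, form $\hat\varphi_{0,\pm,i}$ by plugging in $\hat\theta_0,\hat\sigma_0^2,\hat\nu_0^2,\hat S_0$ and the cross-fitted $\hat g^{(-k)},\hat\alpha^{(-k)}$. The lemma's final statement already gives consistency of the empirical covariance matrix $\hat\Sigma$ of the cross-fitted score vector. The estimated variance equals $\hat a^\top\hat\Sigma\hat a$ up to centering, with $\hat a=(1,\pm c\hat\nu_0^2/(2\hat S_0),\pm c\hat\sigma_0^2/(2\hat S_0))$. Since $\hat a\overset{p}{\to}a$ by continuity, the continuous mapping theorem gives consistency.

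The main obstacle is only bookkeeping: confirming that the lemma's covariance consistency covers the exact plug-in form, including the estimated centering constants. This holds because the scores are affine in the centering constants, and those constants are consistent. Everything else is the standard delta method.
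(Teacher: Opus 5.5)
Your proposal is correct and follows the paper's proof: a delta-method expansion of $\hat S_0=(\hat\sigma_0^2\hat\nu_0^2)^{1/2}$ around $(\sigma_0^2,\nu_0^2)$, built on the linear representations of Lemma~\ref{lem:dml-pipeline-bound-components}, followed by the i.i.d.\ CLT. The only difference is the variance step. You note that, because the cross-fitted component scores are empirically centered, the estimator equals $\hat a^{\top}\hat\Sigma\hat a$ exactly, so consistency follows from the lemma's $\hat\Sigma\overset{p}{\to}\Sigma$ together with $\hat a\overset{p}{\to}a$. The paper instead re-derives $n^{-1}\sum_i(\hat\varphi_{0,\pm,i}-\varphi_{0,\pm}(O_i))^2=o_\Pp(1)$ from the intermediate bound in Step~4 of that lemma's proof, so your route is slightly shorter. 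Your handling of the event $\hat\nu_0^2<0$, which has vanishing probability, is a point the paper leaves implicit.
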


For the Wald intervals and confidence envelope below, additionally require
\[
\E[\varphi_{0,-}(O)^2]>0,
\qquad
\E[\varphi_{0,+}(O)^2]>0.
\]
This requirement means that each endpoint has nonzero first-order sampling variation, so division by its estimated standard error is valid. Positivity of $S_0$ alone does not ensure this. Under these conditions, Theorem~\ref{thm:bounds-pipeline} provides feasible inference for the full-information sensitivity region.
Let $\hat s_-^2$ and $\hat s_+^2$ be the empirical variances of the two estimated influence functions in \eqref{eq:if-bounds-pipeline}.
Besides pointwise Wald intervals for each endpoint, a simple asymptotic $(1-\gamma)$ confidence envelope for the entire sensitivity region is
\begin{equation}
\mathcal C_{1-\gamma}(c)
:=
\left[
\hat\theta_0^-(c)-z_{1-\gamma/2}\frac{\hat s_-}{\sqrt n},
\quad
\hat\theta_0^+(c)+z_{1-\gamma/2}\frac{\hat s_+}{\sqrt n}
\right].
\label{eq:confidence-envelope-bounds}
\end{equation}
Bonferroni's inequality and the two endpoint central limit theorems imply
\[
\Pp\!\left\{
[\theta_0^-(c),\theta_0^+(c)]
\subseteq \mathcal C_{1-\gamma}(c)
\right\}
\ge 1-\gamma+o(1).
\]
Consequently, whenever the maintained sensitivity restrictions place $\theta^\star$ in the identified region, \eqref{eq:confidence-envelope-bounds} is also an asymptotically valid confidence set for $\theta^\star$.
A critical value computed from the estimated joint Gaussian law of the two endpoint scores can replace the Bonferroni value for a shorter simultaneous envelope.

The adaptive and sensitivity results therefore support two complementary reporting modes.
When target adaptivity is maintained, the usual DML estimate and standard error provide point inference on $\theta^\star$.
For a calibrated sensitivity scenario, the analyst reports the pipeline target $\theta_0$, the identified interval \eqref{eq:bounds}, and the confidence envelope \eqref{eq:confidence-envelope-bounds}.
The fixed-representation case uses the same procedure with $\theta_0=\theta_\phi$ under the containment condition $\{g_\phi,\alpha_\phi\}\subseteq\mathcal H$.

\section{Multimodal demand with AI-learned representations}
\label{sec:empirical}

We apply the framework to product demand, where text, images, and conventional attributes provide economically distinct descriptions of the same good.
The exercise uses product-level data scraped from Keepa.com,\footnote{\url{https://keepa.com/}} covering clothing and shoes sold on Amazon.
The raw data combine price histories with three sources of product information: text, product images, and conventional tabular attributes.
First differencing removes time-invariant additive product heterogeneity, while seven modality combinations reveal how the information supplied to the pipeline changes both the outcome regression and its Riesz representer.
The application demonstrates that there is a stable negative response of the inverse-rank demand proxy to price and shows how representation-specific targets, separate nuisance learning, star aggregation, and sensitivity inference fit together in one workflow.

\subsection{A continuous-treatment first-difference specification}
\label{sec:empirical-specification}

Let $R_{it}$ denote the numerical sales rank of product $i$ at date $t$.
Because a smaller rank indicates stronger sales, we orient the outcome so that larger values indicate stronger demand:
\[
Y_{it}:=-\log R_{it}=\log(1/R_{it}).
\]
Let $D_{it}$ be log price, and form adjacent changes
\[
\Delta Y_i:=Y_{it}-Y_{i,t-1},
\qquad
\Delta D_i:=D_{it}-D_{i,t-1}.
\]
Thus the outcome is a change in the log inverse-sales-rank proxy and the treatment is a change in log price.
Equivalently, one may formulate the model using $\log R_{it}$ itself, in which case all signs below are reversed.
We index product-period changes by $i$ to simplify notation.
If a product contributes several adjacent changes, all of its observations should remain in the same training or validation fold, and the variance calculation should allow for within-product dependence.

We use ``difference-in-differences'' in the broad sense of removing a product-level difference before comparing changes in outcomes and treatment intensity.
More precisely, because treatment is continuous and there is no separate treated--control indicator, the estimating equation is a first-difference panel regression:
\begin{equation}
\Delta Y_i=g_j(X_i^{(j)},\Delta D_i)+\varepsilon_i,
\qquad
\E[\varepsilon_i\mid X_i^{(j)},\Delta D_i]=0.
\label{eq:empirical-fd-model}
\end{equation}
Here $X_i^{(j)}$ denotes one of seven information sets,
\[
j\in\{\mathrm{txt},\mathrm{img},\mathrm{tab},
\mathrm{txt{+}img},\mathrm{txt{+}tab},\mathrm{img{+}tab},\mathrm{full}\}.
\]
Text and image features are AI-learned embeddings; the tabular representation contains conventional product attributes.
The full model uses all three modalities.

This notation maps directly into the main text.
Take the observation to be $O_i=(\Delta Y_i,\Delta D_i,Z_i)$, where $Z_i$ collects the raw product information, and let $X_i^{(j)}=x_j(Z_i)$ be the representation generated by pipeline $j$.
For a function $h(d,Z_i)$, the relevant Riesz-linear functional is
\[
m(O_i,h)
:=
\frac{\partial}{\partial d}h(d,Z_i)
\bigg|_{d=\Delta D_i}.
\]
When $h$ factors through $x_j(Z_i)$, the resulting short target is precisely the representation-specific target $\theta_{\phi_j}$ of Section~\ref{sec:warmup}.

The target for information set $j$ is the average partial derivative
\begin{equation}
\theta_j
:=
\E\!\left[
\frac{\partial}{\partial d}g_j(X_i^{(j)},d)
\bigg|_{d=\Delta D_i}
\right].
\label{eq:empirical-apd}
\end{equation}
Since both the outcome and price are in logarithmic units, $\theta_j$ is an elasticity-like response of the inverse-rank demand proxy.
An additional mapping from sales rank to units sold would translate this rank elasticity into a conventional unit-demand elasticity.
The label ``full'' used below means that all three \emph{observed} modalities enter the pipeline.
Accordingly, it supplies the richest observed-information benchmark, while $\theta^\star$ continues to denote the target based on all economically relevant information.

First differencing removes time-invariant product effects.
Suppose, in addition, that price changes are conditionally mean-independent of unobserved outcome changes given $X_i^{(j)}$.
Under this condition, $\theta_j$ has a causal average-response interpretation.
This is the continuous-treatment analogue of a conditional parallel-trends restriction: differencing absorbs fixed product quality, and the multimodal controls account for observed sources of heterogeneous demand evolution.
The sensitivity analysis in Section~\ref{sec:empirical-sensitivity} then quantifies how remaining time-varying demand shocks, promotions, or inventory changes could alter the conclusion.

\subsection{Outcome and Riesz learners}
\label{sec:empirical-learners}

For each information set, let $b_j(X_i^{(j)})$ denote the corresponding learned feature vector.
We use the same varying-coefficient architecture for both nuisance functions:
\begin{align}
g_j(X_i^{(j)},\Delta D_i)
&=
\Delta D_i\,\beta_j^\top b_j(X_i^{(j)})
+\delta_j^\top b_j(X_i^{(j)}),
\label{eq:empirical-outcome-learner}
\\
\alpha_j(X_i^{(j)},\Delta D_i)
&=
\Delta D_i\,\gamma_j^\top b_j(X_i^{(j)})
+\eta_j^\top b_j(X_i^{(j)}).
\label{eq:empirical-riesz-learner}
\end{align}
The matching architectures do not impose equal coefficients.
The outcome learner is trained by squared-error loss for $\Delta Y_i$, whereas the Riesz learner is fitted independently by minimizing
\begin{equation}
\E\!\left[\alpha_j(X_i^{(j)},\Delta D_i)^2\right]
-2\E\!\left[
\frac{\partial}{\partial d}\alpha_j(X_i^{(j)},d)
\bigg|_{d=\Delta D_i}
\right].
\label{eq:empirical-riesz-loss}
\end{equation}
For the average-derivative functional in \eqref{eq:empirical-apd}, the debiased score is
\begin{equation}
\psi_{ij}
=
\frac{\partial}{\partial d}\hat g_j(X_i^{(j)},d)
\bigg|_{d=\Delta D_i}
+\hat\alpha_j(X_i^{(j)},\Delta D_i)
\left\{\Delta Y_i-\hat g_j(X_i^{(j)},\Delta D_i)\right\}.
\label{eq:empirical-dr-score}
\end{equation}
Data are split into a training and validation fold.
Candidate learners, star centers, and aggregation weights are selected on the training sample, while $\hat\theta_j$ and its score variance are computed only from held-out validation observations.
The fully cross-fitted estimator studied in Sections~\ref{sec:warmup} and \ref{sec:fine-tuned} repeats the same construction across folds and averages the held-out scores, thereby using every observation for evaluation once.
By Theorem~\ref{thm:pooled-star}, the star weights in that construction may be fitted once to the pooled out-of-fold predictions, so each of the seven candidates is trained only once per fold.

\FloatBarrier
\subsection{Estimates and star aggregation}
\label{sec:empirical-results}

Table~\ref{tab:empirical-modalities} reports the average-derivative estimates.
All seven modality-specific estimates are negative, tightly estimated, and have 95\% intervals bounded away from zero; in fact every interval lies strictly below $-1$. They range from $-1.098$ for text plus image to $-1.034$ for the tabular representation.
The full multimodal estimate is $-1.069$ with standard error $0.015$.

\begin{table}[htbp]
\centering
\caption{Average partial derivatives by information set}
\label{tab:empirical-modalities}
\small
\setlength{\tabcolsep}{7pt}
\begin{tabular}{@{}lrrrr@{}}
\toprule
Information set & Estimate & Standard error & \multicolumn{2}{c}{95\% confidence interval} \\
\cmidrule(l){4-5}
& & & Lower & Upper \\
\midrule
Text                  & $-1.092$ & $0.015$ & $-1.120$ & $-1.063$ \\
Image                 & $-1.092$ & $0.014$ & $-1.120$ & $-1.065$ \\
Tabular               & $-1.034$ & $0.016$ & $-1.064$ & $-1.003$ \\
Text $+$ image        & $-1.098$ & $0.015$ & $-1.127$ & $-1.069$ \\
Text $+$ tabular      & $-1.039$ & $0.018$ & $-1.074$ & $-1.004$ \\
Image $+$ tabular     & $-1.056$ & $0.015$ & $-1.085$ & $-1.027$ \\
Full multimodal       & $-1.069$ & $0.015$ & $-1.099$ & $-1.040$ \\
\addlinespace
Star aggregate        & $-1.053$ & $0.016$ & $-1.086$ & $-1.021$ \\
\bottomrule
\end{tabular}
\begin{minipage}{0.92\textwidth}
\footnotesize
\emph{Note:} The outcome is the change in the oriented log-sales-rank proxy, and treatment is the change in log price. The final row aggregates the seven modality-specific candidates. The intervals are normal approximations computed from the held-out validation scores in the reported train--validation split.
\end{minipage}
\end{table}

We next apply the star-aggregation construction of Section~3.
For the outcome regression, the image-plus-tabular learner attains the smallest training loss and is selected as the center; no other candidate improves on it, so every fitted star weight equals one and the aggregate reduces to the center:
\[
\hat g_{\mathrm{star}}
=
1.0000\,\hat g_{\mathrm{img+tab}}.
\]
For the Riesz representer, the text-plus-tabular learner is the center, and the star step mixes in the full multimodal learner:
\[
\hat\alpha_{\mathrm{star}}
=0.8433\,\hat\alpha_{\mathrm{txt+tab}}
+0.1567\,\hat\alpha_{\mathrm{full}}.
\]

The two losses are allowed to select different candidates because $g$ and $\alpha$ solve different prediction problems.
Nevertheless, both selected functions lie in the shared ambient space $\mathcal H$ generated by the seven pipelines, exactly as required by the learner-design argument in Section~\ref{sec:learn-H}.
The aggregate estimates the pipeline parameter associated with $\mathcal H$ under the approximation, learning-rate, and regularity conditions stated in the theory. Common-space membership supplies the mixed-bias identity; convergence to its projections is additionally needed for inference.
Combining these two separately selected nuisances in \eqref{eq:empirical-dr-score} gives
\[
\hat\theta_{\mathrm{star}}=-1.053,
\qquad
\widehat{\mathrm{se}}(\hat\theta_{\mathrm{star}})=0.016,
\qquad
\mathrm{CI}_{0.95}=[-1.086,-1.021].
\]
At face value, a 10\% price increase is associated locally with roughly a 10.5\% decline in the inverse-rank demand index.
The narrow range across modalities is itself informative: the negative demand response is stable whether product information enters through text, images, tabular attributes, or their combinations.
Each information set still defines its own representation-based target, while, under those approximation, learning-rate, and regularity conditions, the star estimate summarizes the population pipeline target associated with their common span $\mathcal H$.
The aggregate therefore complements the modality-specific estimates by learning strong nuisances over the full candidate menu without selecting an economic estimand after inspecting the coefficient.

\FloatBarrier
\subsection{Sensitivity to omitted multimodal information}
\label{sec:empirical-sensitivity}

To calibrate the abstract sensitivity parameters of Section~\ref{sec:bounds}, we treat the full multimodal model as a richer \emph{observed-information benchmark} and the text-only model as the short specification.
This deliberate deletion exercise measures how much the two nuisance functions change when image and tabular information, whose empirical scale is observable, are withheld.
It thereby anchors the abstract sensitivity parameters to an economically recognizable loss of product information; the subsequent grid uses that scale to examine information beyond all observed modalities.
Omitting image and tabular controls changes the validation $R^2$ of the outcome learner only from $0.0314$ to $0.0297$.
The corresponding validation Riesz loss changes from $-254.24$ to $-252.02$, where a smaller loss is better.
For nested population learners, the excess-risk identities \eqref{eq:excess-risk-reg}--\eqref{eq:excess-risk-riesz} motivate the validation analogues
\[
\widehat C_Y^{2}
=
\frac{\widehat R^2_{\mathrm{full}}-\widehat R^2_{\mathrm{text}}}
{1-\widehat R^2_{\mathrm{text}}}
\approx 0.0017,
\qquad
\widehat C_D^{2}
=
\frac{\widehat{\mathcal L}_{\mathrm{text}}-\widehat{\mathcal L}_{\mathrm{full}}}
{-\widehat{\mathcal L}_{\mathrm{text}}}
\approx 0.0088.
\]
These diagnostics quantify the observed deletion and provide a benchmark for the sensitivity scenarios.
Together with the reported residual alignment $\hat\rho=0.8654$, they yield Table~\ref{tab:empirical-calibration}.

\begin{table}[htbp]
\centering
\caption{Full-versus-text calibration}
\label{tab:empirical-calibration}
\small
\begin{tabular}{@{}lrrl@{}}
\toprule
Validation criterion & Full & Text only & Calibration contrast \\
\midrule
Outcome-model $R^2$ & $0.0314$ & $0.0297$ & $R^2_{\mathrm{full}}-R^2_{\mathrm{text}}=0.0017$ \\
Riesz loss           & $-254.24$ & $-252.02$ & $\mathcal L_{\mathrm{text}}-\mathcal L_{\mathrm{full}}=2.22$ \\
\midrule
\multicolumn{4}{@{}l}{Sensitivity calibration: $\widehat C_Y^{2}=0.0017$, $\widehat C_D^{2}=0.0088$, and $\hat\rho=0.8654$.} \\
\multicolumn{4}{@{}l}{Equal-strength benchmark: $r^2=(\widehat C_Y^{2}\widehat C_D^{2})^{1/2}=0.0039$.} \\
\bottomrule
\end{tabular}
\end{table}

The benchmark comparison shows that image and tabular information add little predictive content beyond text for the differenced outcome, while contributing more visibly to the representer.
This is precisely the mixed-bias logic in action: the high estimated alignment, $\hat\rho=0.8654$, is multiplied by small residual strengths, so the induced target shift remains modest.

Table~\ref{tab:empirical-sensitivity} extends the observed benchmark through a grid of scenarios for additional information.
The two comparisons play complementary roles: the deletion exercise treats the full observed pipeline as long relative to text only to calibrate an empirical scale, whereas the sensitivity grid treats that same full observed pipeline as short relative to the full-information target $\theta^\star$.
The center is the full-model estimate $\hat\theta_s=-1.069$.
For compactness, the scenarios set $C_Y^2=C_D^2=r^2$ at $1\%$, $2.5\%$, and $5\%$ and vary the alignment parameter over $\rho\in\{1,0.5,0.2\}$.
The columns $\theta^-$ and $\theta^+$ are the sensitivity endpoints from \eqref{eq:bounds}; the outer columns reproduce the reported 10th- and 90th-percentile sampling limits for those endpoints, complementing the 95\% intervals for the modality-specific targets in Table~\ref{tab:empirical-modalities}.

\begin{table}[htbp]
\centering
\caption{Sensitivity of the full-model estimate}
\label{tab:empirical-sensitivity}
\footnotesize
\setlength{\tabcolsep}{4.3pt}
\begin{tabular}{@{}clccccc@{}}
\toprule
$\rho$ & Scenario & 10th-percentile limit & $\theta^-$ & $\hat\theta_s$ & $\theta^+$ & 90th-percentile limit \\
\midrule
1.0 & Mild: $r^2=0.010$   & $-1.156$ & $-1.137$ & $-1.069$ & $-1.001$ & $-0.982$ \\
1.0 & Medium: $r^2=0.025$ & $-1.260$ & $-1.240$ & $-1.069$ & $-0.898$ & $-0.879$ \\
1.0 & Strong: $r^2=0.050$ & $-1.436$ & $-1.415$ & $-1.069$ & $-0.723$ & $-0.705$ \\
\addlinespace
0.5 & Mild: $r^2=0.010$   & $-1.122$ & $-1.103$ & $-1.069$ & $-1.035$ & $-1.016$ \\
0.5 & Medium: $r^2=0.025$ & $-1.174$ & $-1.154$ & $-1.069$ & $-0.984$ & $-0.965$ \\
0.5 & Strong: $r^2=0.050$ & $-1.262$ & $-1.242$ & $-1.069$ & $-0.896$ & $-0.877$ \\
\addlinespace
0.2 & Mild: $r^2=0.010$   & $-1.102$ & $-1.083$ & $-1.069$ & $-1.055$ & $-1.036$ \\
0.2 & Medium: $r^2=0.025$ & $-1.123$ & $-1.103$ & $-1.069$ & $-1.035$ & $-1.016$ \\
0.2 & Strong: $r^2=0.050$ & $-1.158$ & $-1.138$ & $-1.069$ & $-1.000$ & $-0.981$ \\
\bottomrule
\end{tabular}
\begin{minipage}{0.94\textwidth}
\footnotesize
\emph{Note:} In each row, $C_Y^2=C_D^2=r^2$, using the notation of Section~\ref{sec:bounds}. The interval $[\theta^-,\theta^+]$ measures uncertainty from omitted information; the outer limits additionally reflect sampling uncertainty through the reported 10th and 90th percentiles.
\end{minipage}
\end{table}

\FloatBarrier
The leading empirical conclusion is stable throughout the analysis.
Every modality-specific 95\% interval, the star-aggregate interval, and every full-information sensitivity interval in the reported grid lies below zero.
Even under the strong scenario with perfect alignment, the upper sensitivity endpoint is $-0.723$ and its reported 90th-percentile limit is $-0.705$.
The grid also reveals which conclusion is sharper: the sign is highly robust, while the magnitude responds to the assumed strength of omitted information.
Under $\rho=1$, moving from the mild to the strong scenario expands the sensitivity interval from $[-1.137,-1.001]$ to $[-1.415,-0.723]$.
Thus the multimodal estimates give strong evidence of a negative rank-based price response under the maintained differenced design, and the sensitivity region quantifies the remaining uncertainty about its economic magnitude.

\subsection{Empirical lessons}
\label{sec:empirical-takeaway}

The application implements the four main steps of the paper.
Each modality map produces a representation-specific short target; the regression and Riesz representer are trained by their own quadratic objectives; star aggregation learns the two nuisances over a common menu without forcing them to select the same candidate; and the mixed-bias identity turns a long--short benchmark comparison into a transparent sensitivity scale.
The different star choices---image-plus-tabular candidate alone for $g$, but text plus tabular mixed with the full multimodal candidate for $\alpha$---show concretely why the two nuisances should be learned separately.
The calibration gives the corresponding diagnostic lesson: image and tabular information barely change the outcome $R^2$, yet they produce a more visible change in the Riesz loss, so prediction quality alone does not summarize target preservation.

Under the maintained first-difference design, the central empirical finding is a quantitatively stable negative response across the observed representation menu and across the reported sensitivity grid.
The richest multimodal pipeline is an observed-information benchmark, and the sensitivity scenarios explicitly carry the analysis from that benchmark toward the full-information target.
Likewise, the train--validation implementation provides honest split-sample inference under the stated sampling conditions, while the cross-fitted construction in the theory offers the natural sample-efficient extension.
These scope conditions make clear how additional modalities, repeated-split analysis, or richer controls can sharpen the magnitude without obscuring the robust sign already present in the data.

\section{Conclusion}
\label{sec:conclusion}

AI-learned representations make text, images, and other rich objects usable in causal and policy analysis.
This paper shows that the resulting DML procedure has a precise, target-aware interpretation.
For Riesz-linear functionals, replacing the original covariates by a representation changes the target through one mixed-bias term: the inner product of the discarded component of the outcome regression and the discarded component of the Riesz representer.
This identity is useful because it focuses attention on the information that matters for the estimand, rather than on global reconstruction of the original covariates.

The inferential results form a natural hierarchy.
Standard cross-fitted DML provides valid confidence intervals and tests for the representation-specific target.
Target adaptivity makes the same output valid for the full-information truth, and adaptive efficiency makes it first-order equivalent to inference based on the full-information efficient influence function.
For representations tuned on the sample, the population pipeline target $\theta_0$ supplies the corresponding estimand: foldwise tuning is absorbed into the nuisance learners, so inference depends on their prediction rates rather than on a linear expansion of the tuning algorithm.

The learning results make those rate requirements operational.
Proper learning over controlled star-shaped approximations is well suited to explicit sieves and convex aggregation classes, while star aggregation provides a computationally simple and statistically sharp method for large, nonconvex menus of prompts, embeddings, and heads.
Both aggregation schemes may fit their weights to pooled out-of-fold predictions, which avoids nested sample splitting at the cost of a factor $\sqrt{\log M}$ for star weights, or $\sqrt{M\log n}$ for convex weights, in one remainder term.
Because the outcome regression and Riesz representer solve different learning problems, they may select different candidates; the shared space $\mathcal H$ preserves the orthogonal-score geometry that combines them.

Sensitivity inference completes the framework by translating economically meaningful restrictions on representation loss into an identified region for $\theta^\star$, together with confidence statements for its endpoints and for the region as a whole.
The multimodal demand application illustrates the complete workflow.
All seven representation-specific estimates and the star aggregate imply a negative inverse-rank demand response to price, and that sign remains negative across the reported sensitivity grid; the widening intervals show transparently how uncertainty about omitted time-varying information affects the magnitude.
Thus learned representations can support affirmative empirical conclusions while making the target, the efficiency claim, and the sensitivity to discarded information explicit.

\subsection*{Acknowledgments}
We are enormously grateful to Claude Opus~5.5 (Anthropic) for its help with this paper, and particularly with Section~\ref{sec:cross-fold} and the accompanying proofs in Appendix~\ref{app:proofs-cross-fold}. GPT-6 Astra was also used for proof audits.

\newpage

\bibliographystyle{plainnat}
\bibliography{AI_Embeddings_revision_2026-07-31_v3_full_authors_references}

\newpage
\appendix
\section{Proofs for Section 2}
\label{app:proofs-sec2}

This appendix collects the proofs for Section~2.
To keep the main text focused on applied interpretation, we allow ourselves slightly more formal notation here, including empirical process notation in the standard style of \cite{vanderVaart1998,vanderVaartWellner1996}.

\subsection{Empirical process notation for proofs}

For a measurable function $f:\mathcal O\to\R$ with $Pf<\infty$, write
\[
\Pn f := \frac{1}{n}\sum_{i=1}^n f(O_i),
\qquad
\Gn f := \sqrt{n}\,(\Pn-P)f.
\]
For each fold $I_k$ of size $n_k:=|I_k|$, we also write
\[
\Pn^{(k)} f := \frac{1}{n_k}\sum_{i\in I_k} f(O_i),
\qquad
\Gn^{(k)} f := \sqrt{n_k}\,(\Pn^{(k)}-P)f.
\]
We use these symbols only in the appendix.
\subsection{Proof of Proposition~\ref{prop:ovb}}

\begin{proof}
By \eqref{eq:rr-long} and \eqref{eq:rr-short},
\[
\theta^\star=\E[g^\star\alpha^\star]
\qquad\text{and}\qquad
\theta_\phi=\E[g_\phi\alpha_\phi].
\]
Expand
\[
\E[g^\star\alpha^\star]-\E[g_\phi\alpha_\phi]
=
\E[(g^\star-g_\phi)(\alpha^\star-\alpha_\phi)]
+\E[g_\phi(\alpha^\star-\alpha_\phi)]
+\E[\alpha_\phi(g^\star-g_\phi)].
\]
We now argue that the last two terms vanish.
First, $g_\phi(W_\phi)=\E[g^\star(W)\mid W_\phi]$, so $g^\star-g_\phi$ is orthogonal in $L^2(P)$ to every square-integrable function of $W_\phi$; in particular, $\E[\alpha_\phi(g^\star-g_\phi)]=0$ because $\alpha_\phi$ is $\sigma(W_\phi)$-measurable.
Second, the compatibility condition from Section~\ref{sec:setup} implies that the ``short'' functional is the restriction of the ``long'' functional to $\sigma(W_\phi)$-measurable functions.
Hence, for every $h\in L^2(P_{W_\phi})$ we have $\E[h(W_\phi)\alpha^\star]=\E[h(W_\phi)\alpha_\phi]$, which implies $\alpha_\phi(W_\phi)=\E[\alpha^\star(W)\mid W_\phi]$.
Therefore $\alpha^\star-\alpha_\phi$ is orthogonal to every square-integrable function of $W_\phi$, and in particular $\E[g_\phi(\alpha^\star-\alpha_\phi)]=0$.
This proves \eqref{eq:ovb-identity}, and \eqref{eq:cs-bound} follows from Cauchy--Schwarz.
\end{proof}

\subsection{Proof of Theorem~\ref{thm:dml-expansion}}

The point of this proof is to make explicit why cross-fitting and orthogonality reduce the effect of nuisance estimation to a second-order remainder.
Because our score \eqref{eq:score-short} is linear and of ``Riesz'' form, we can carry out the argument by direct algebra, without appealing to abstract differentiability.

\begin{proof}
Fix $n$ and write $\phi=\phi_n$ to lighten notation.
Let $n_k:=|I_k|$.
For each fold $k$, define the nuisance errors
\[
\delta g^{(-k)} := \hat g_{\phi}^{(-k)}-g_{\phi},
\qquad
\delta\alpha^{(-k)} := \hat\alpha_{\phi}^{(-k)}-\alpha_{\phi}.
\]

\noindent\emph{Step 1: A decomposition of the cross-fitted estimator.}
Using \eqref{eq:dml-est} and fold-specific empirical measures, we can write
\[
\hat\theta_{\phi}
=
\sum_{k=1}^K \frac{n_k}{n}\,\Pn^{(k)}\psi_{\phi}\!\left(\cdot;\hat g_{\phi}^{(-k)},\hat\alpha_{\phi}^{(-k)}\right).
\]
Add and subtract $P\psi_{\phi}(\cdot;g_{\phi},\alpha_{\phi})=\theta_{\phi}$ to obtain
\begin{align*}
\hat\theta_{\phi}-\theta_{\phi}
&=
(\Pn-P)\psi_{\phi}(\cdot;g_{\phi},\alpha_{\phi})
+ R_{1n}+R_{2n},
\\[3pt]
R_{1n}
&:=
\sum_{k=1}^K \frac{n_k}{n}\,
(\Pn^{(k)}-P)\Bigl\{
\psi_{\phi}(\cdot;\hat g_{\phi}^{(-k)},\hat\alpha_{\phi}^{(-k)})
-\psi_{\phi}(\cdot;g_{\phi},\alpha_{\phi})
\Bigr\},
\\[3pt]
R_{2n}
&:=
\sum_{k=1}^K \frac{n_k}{n}\,
P\Bigl\{
\psi_{\phi}(\cdot;\hat g_{\phi}^{(-k)},\hat\alpha_{\phi}^{(-k)})
-\psi_{\phi}(\cdot;g_{\phi},\alpha_{\phi})
\Bigr\}.
\end{align*}
Thus, to prove \eqref{eq:dml-expansion}, it suffices to show $\sqrt{n}R_{1n}=o_\Pp(1)$ and $\sqrt{n}R_{2n}=o_\Pp(1)$.

\noindent\emph{Step 2: The ``bias'' term $R_{2n}$ is second order.}
Fix a fold $k$ and abbreviate $(\hat g,\hat\alpha):=(\hat g_{\phi}^{(-k)},\hat\alpha_{\phi}^{(-k)})$.
By linearity of $m$ and the score definition \eqref{eq:score-short},
\begin{align*}
\psi_{\phi}(O;\hat g,\hat\alpha)-\psi_{\phi}(O;g_{\phi},\alpha_{\phi})
&=
m(O,\delta g^{(-k)})
+\delta\alpha^{(-k)}(W_{\phi})\{Y-g_{\phi}(W_{\phi})\}
\\
&\qquad
-\alpha_{\phi}(W_{\phi})\,\delta g^{(-k)}(W_{\phi})
-\delta\alpha^{(-k)}(W_{\phi})\,\delta g^{(-k)}(W_{\phi}).
\end{align*}
Taking expectations and using the defining properties of $g_{\phi}$ and $\alpha_{\phi}$ yields the key cancellation.
First, $\E[Y-g_{\phi}(W_{\phi})\mid W_{\phi}]=0$, so
\[
P\bigl[\delta\alpha^{(-k)}(W_{\phi})\{Y-g_{\phi}(W_{\phi})\}\bigr]=0.
\]
Second, by the Riesz representation \eqref{eq:rr-short}, for any square-integrable function $h(W_{\phi})$ we have
\[
P[m(O,h)] = P[h(W_{\phi})\alpha_{\phi}(W_{\phi})].
\]

Applying this to $h=\delta g^{(-k)}$ gives $P[m(O,\delta g^{(-k)})]=P[\alpha_{\phi}\delta g^{(-k)}]$, so the first and third terms cancel in expectation.
Therefore,
\[
P\Bigl\{
\psi_{\phi}(\cdot;\hat g_{\phi}^{(-k)},\hat\alpha_{\phi}^{(-k)})
-\psi_{\phi}(\cdot;g_{\phi},\alpha_{\phi})
\Bigr\}
=
-\,P\bigl[\delta g^{(-k)}\,\delta\alpha^{(-k)}\bigr].
\]
This is exactly the mixed-bias identity \eqref{eq:mixed-bias} applied with fold-$k$ nuisance errors.
It follows that
\[
|R_{2n}|
\le
\max_{1\le k\le K}\norm{\delta g^{(-k)}}_{P,2}\;
\max_{1\le k\le K}\norm{\delta\alpha^{(-k)}}_{P,2}
\]
by Cauchy--Schwarz, and hence $\sqrt{n}R_{2n}=o_\Pp(1)$ by the rate condition \eqref{eq:dml-rate}.

\noindent\emph{Step 3: The empirical process term $R_{1n}$ is negligible under $L^2$ consistency.}
For each fold $k$ define the score difference
\[
\Delta_k(O)
:=
\psi_{\phi}(O;\hat g_{\phi}^{(-k)},\hat\alpha_{\phi}^{(-k)})
-\psi_{\phi}(O;g_{\phi},\alpha_{\phi}).
\]
Conditional on the training sample $\{O_i:i\in I_k^c\}$, the function $\Delta_k$ is fixed and the observations $\{O_i:i\in I_k\}$ are i.i.d.\ draws from $P$.
Therefore,
\[
\E\!\left[\Bigl((\Pn^{(k)}-P)\Delta_k\Bigr)^2\;\middle|\;\{O_i:i\in I_k^c\}\right]
=
\frac{1}{n_k}\Var(\Delta_k(O))
\le
\frac{1}{n_k}\norm{\Delta_k}_{P,2}^2.
\]
By conditional Chebyshev's inequality,
\[
(\Pn^{(k)}-P)\Delta_k
=
O_\Pp\!\left(n_k^{-1/2}\norm{\Delta_k}_{P,2}\right)
\qquad\text{uniformly over }k.
\]
Using $n_k/n\le 1$ and $\min_k n_k\asymp n$ (since $K$ is fixed), we obtain
\[
\sqrt{n}\,|R_{1n}|
\le
\sum_{k=1}^K \frac{n_k}{n}\,\sqrt{n}\,\bigl|(\Pn^{(k)}-P)\Delta_k\bigr|
=
O_\Pp\!\left(\max_{1\le k\le K}\norm{\Delta_k}_{P,2}\right).
\]
Thus it remains to show that $\max_k \norm{\Delta_k}_{P,2}=o_\Pp(1)$.

To bound $\norm{\Delta_k}_{P,2}$, expand $\Delta_k$ using the same algebra as in Step~2:
\[
\Delta_k(O)
=
m(O,\delta g^{(-k)})
+\delta\alpha^{(-k)}(W_{\phi})\{Y-g_{\phi}(W_{\phi})\}
-\alpha_{\phi}(W_{\phi})\,\delta g^{(-k)}(W_{\phi})
-\delta\alpha^{(-k)}(W_{\phi})\,\delta g^{(-k)}(W_{\phi}).
\]
By the triangle inequality and H\"older,
\begin{align*}
\norm{\Delta_k}_{P,2}
&\le
\norm{m(O,\delta g^{(-k)})}_{P,2}
+ \norm{\alpha_{\phi}\,\delta g^{(-k)}}_{P,2}
+ \norm{\delta\alpha^{(-k)}(Y-g_{\phi})}_{P,2}
+ \norm{\delta\alpha^{(-k)}\,\delta g^{(-k)}}_{P,2}
\\
&\le
C_m\norm{\delta g^{(-k)}}_{P,2}
+ \norm{\alpha_{\phi}}_{P,4}\,\norm{\delta g^{(-k)}}_{P,4}
+ \norm{Y-g_{\phi}}_{P,4}\,\norm{\delta\alpha^{(-k)}}_{P,4}
\\
&\quad
+ \norm{\delta\alpha^{(-k)}}_{P,4}\,\norm{\delta g^{(-k)}}_{P,4},
\end{align*}
where we used \eqref{eq:m-L2-bounded} for the first term.
Assumption~\ref{ass:dml}(i) gives $\norm{\alpha_{\phi}}_{P,4}<\infty$ and $\norm{Y-g_{\phi}}_{P,4}<\infty$.
Moreover, Assumption~\ref{ass:dml}(ii) implies $\norm{\delta g^{(-k)}}_{P,2}=o_\Pp(1)$ and $\norm{\delta\alpha^{(-k)}}_{P,2}=o_\Pp(1)$ uniformly in $k$, while the $q$th-moment bounds ensure $\max_k \norm{\delta g^{(-k)}}_{P,q}=O_\Pp(1)$ and $\max_k \norm{\delta\alpha^{(-k)}}_{P,q}=O_\Pp(1)$.
For any $p\in[2,q]$, the interpolation inequality
\[
\norm{f}_{P,p}\le \norm{f}_{P,2}^{\lambda}\norm{f}_{P,q}^{1-\lambda},
\qquad
\frac{1}{p}=\frac{\lambda}{2}+\frac{1-\lambda}{q},
\]
implies $\norm{f}_{P,p}=o_\Pp(1)$ whenever $\norm{f}_{P,2}=o_\Pp(1)$ and $\norm{f}_{P,q}=O_\Pp(1)$.
Taking $p=4$ (permitted since $q>4$) yields $\norm{\delta g^{(-k)}}_{P,4}=o_\Pp(1)$ and $\norm{\delta\alpha^{(-k)}}_{P,4}=o_\Pp(1)$ uniformly in $k$.
Hence $\max_k\norm{\Delta_k}_{P,2}=o_\Pp(1)$, so $\sqrt{n}R_{1n}=o_\Pp(1)$.

\noindent\emph{Step 4: Conclude the expansion and the CLT.}
Combining Steps~1--3 gives
\[
\sqrt{n}(\hat\theta_{\phi}-\theta_{\phi})
=
\Gn\!\left(\psi_{\phi}(\cdot;g_{\phi},\alpha_{\phi})\right)+o_\Pp(1)
=
\frac{1}{\sqrt{n}}\sum_{i=1}^n\Bigl(\psi_{\phi}(O_i;g_{\phi},\alpha_{\phi})-\theta_{\phi}\Bigr)
+o_\Pp(1),
\]
which is \eqref{eq:dml-expansion}.

\noindent\emph{Asymptotic normality.}
If $\phi_n\equiv \phi$ is fixed, then the summands in \eqref{eq:dml-expansion} are i.i.d.\ with variance $\sigma_\phi^2\in(0,\infty)$, so the classical CLT yields
$\sqrt{n}(\hat\theta_\phi-\theta_\phi)\Rightarrow N(0,\sigma_\phi^2)$.
If $\phi=\phi_n$ drifts with $n$, then for each $n$ the summands remain i.i.d.\ within row, with variance $\sigma_{\phi_n}^2$. Let
\[\varphi_{\phi_n}(O):=\psi_{\phi_n}(O;g_{\phi_n},\alpha_{\phi_n})-\theta_{\phi_n}.\]
By Assumption~\ref{ass:dml}, for some $\delta>0$,
\[\sup_n\norm{\varphi_{\phi_n}}_{P,2+\delta}<\infty,\qquad\inf_n\sigma_{\phi_n}>0.\]
Hence the Lyapunov ratio satisfies
\[\frac{nP|\varphi_{\phi_n}|^{2+\delta}}{(n\sigma_{\phi_n}^2)^{1+\delta/2}}=\frac{P|\varphi_{\phi_n}|^{2+\delta}}{n^{\delta/2}\sigma_{\phi_n}^{2+\delta}}\longrightarrow 0.\]
Lyapunov's central limit theorem for triangular arrays therefore gives
\[\sigma_{\phi_n}^{-1}\frac{1}{\sqrt n}
\sum_{i=1}^n\varphi_{\phi_n}(O_i)\Rightarrow N(0,1).\]

Combining this with \eqref{eq:dml-expansion} and Slutsky's theorem yields
\[
\sigma_{\phi_n}^{-1}\sqrt{n}(\hat\theta_{\phi_n}-\theta_{\phi_n})\Rightarrow N(0,1).
\]

\noindent\emph{Consistency of the cross-fitted variance estimator.}
Write $k(i)$ for the fold containing observation $i$.
For notational convenience, fix $n$ and abbreviate $\phi=\phi_n$.
Define the oracle score
\[
\psi_0(O):=\psi_{\phi}(O;g_{\phi},\alpha_{\phi}),
\qquad
\theta:=\theta_{\phi}=P\psi_0,
\qquad
\sigma^2:=\sigma_{\phi}^2=\Var(\psi_0)=P\psi_0^2-\theta^2.
\]
For each fold $k$, define the foldwise cross-fitted score function
\[
\hat\psi^{(-k)}(O):=\psi_{\phi}\bigl(O;\hat g_{\phi}^{(-k)},\hat\alpha_{\phi}^{(-k)}\bigr).
\]
Then $\hat\psi_{\phi,i}=\hat\psi^{(-k)}(O_i)$ for $i\in I_k$, and the variance estimator \eqref{eq:var-est} can be written as
\begin{equation}
\hat\sigma_{\phi}^2
=
\Pn(\hat\psi_{\phi}^2)-\hat\theta_{\phi}^2.
\label{eq:var-est-alt}
\end{equation}

We first compare the fitted and oracle scores on the validation observations. Step~3 established
$\max_k\norm{\hat\psi^{(-k)}-\psi_0}_{P,2}=o_\Pp(1)$.
Conditional on the training sample, the score difference is fixed. Conditional Markov's inequality therefore gives
\begin{equation}
\max_k\Pn^{(k)}(\hat\psi^{(-k)}-\psi_0)^2=o_\Pp(1).
\label{eq:var-lln-foldwise}
\end{equation}
Indeed, the conditional probability of exceeding any fixed positive threshold is bounded by the population squared norm divided by that threshold, truncated at one; its expectation tends to zero. This uses convergence of the score difference, rather than a law of large numbers for an arbitrary fitted squared score.

For a fixed representation, the ordinary law of large numbers gives $\Pn\psi_0^2-P\psi_0^2=o_\Pp(1)$. For a drifting representation, Assumption~\ref{ass:dml}(v) and the bounded means $|\theta_{\phi_n}|\le C_m\norm{Y}_{P,2}$ give uniform integrability of the oracle squares. Truncation at a fixed level, the variance bound for bounded independent variables, and then removal of the truncation give the same triangular-array law of large numbers. In either case $\Pn\psi_0^2=O_\Pp(1)$.
Averaging \eqref{eq:var-lln-foldwise} across the fixed number of folds and applying Cauchy--Schwarz now gives
\begin{equation}
\begin{split}
\bigl|\Pn(\hat\psi_{\phi}^2)-\Pn\psi_0^2\bigr|
&\le \{\Pn(\hat\psi_\phi-\psi_0)^2\}^{1/2}
       \{\Pn(\hat\psi_\phi+\psi_0)^2\}^{1/2}
=o_\Pp(1),
\end{split}
\label{eq:var-mean-foldwise}
\end{equation}
where the second factor is $O_\Pp(1)$ by the oracle bound and \eqref{eq:var-lln-foldwise}. Consequently,
\begin{equation}
\Pn(\hat\psi_{\phi}^2)=P\psi_0^2+o_\Pp(1).
\label{eq:var-second-moment-consistency}
\end{equation}

Finally, by \eqref{eq:dml-expansion} we have $\hat\theta_{\phi}-\theta=O_\Pp(n^{-1/2})=o_\Pp(1)$, and hence $\hat\theta_{\phi}^2-\theta^2=o_\Pp(1)$.
Substituting into \eqref{eq:var-est-alt} and using $\sigma^2=P\psi_0^2-\theta^2$ gives
\[
\hat\sigma_{\phi}^2-\sigma^2
=
\bigl\{\Pn(\hat\psi_{\phi}^2)-P\psi_0^2\bigr\}
-
\bigl\{\hat\theta_{\phi}^2-\theta^2\bigr\}
=o_\Pp(1).
\]
This proves consistency of $\hat\sigma_{\phi_n}^2$ for $\sigma_{\phi_n}^2$.

\end{proof}

\subsection{Proof of Lemma~\ref{lem:score-conv}}

\begin{proof}
Fix $n$.
We use the shorthand $g_{\phi_n}=g_{\phi_n}(W_{\phi_n})$ and $\alpha_{\phi_n}=\alpha_{\phi_n}(W_{\phi_n})$.
Since $m(O,g_{\phi_n})$ is $\sigma(W_{\phi_n})$-measurable, we can write $m(O,g_{\phi_n})=m(W_{\phi_n},g_{\phi_n})$.

A short algebraic expansion yields
\[
\varphi^\star(O)-\varphi_{\phi_n}(O)
=
\Bigl\{m(O,g^\star-g_{\phi_n})-P\,m(O,g^\star-g_{\phi_n})\Bigr\}
+(\alpha^\star-\alpha_{\phi_n})(Y-g^\star)
+\alpha_{\phi_n}(g_{\phi_n}-g^\star).
\]
By the triangle inequality,
\[
\begin{aligned}
\norm{\varphi_{\phi_n}-\varphi^\star}_{P,2}
&\le
\norm{m(O,g^\star-g_{\phi_n})-P\,m(O,g^\star-g_{\phi_n})}_{P,2}
+\norm{(\alpha^\star-\alpha_{\phi_n})(Y-g^\star)}_{P,2}
\\
&\quad
+\norm{\alpha_{\phi_n}(g_{\phi_n}-g^\star)}_{P,2}.
\end{aligned}
\]

For the first term,
\[
\norm{m(O,g^\star-g_{\phi_n})-P\,m(O,g^\star-g_{\phi_n})}_{P,2}
\le
2\norm{m(O,g^\star-g_{\phi_n})}_{P,2}
\le
2C_m \norm{g^\star-g_{\phi_n}}_{P,2}.
\]
For the second and third terms, condition \eqref{eq:adaptive-efficiency-score-products} implies that each bound is $o(1)$ as $n\to\infty$, which proves \eqref{eq:score-conv}.
The variance convergence follows because $L^2(P)$ convergence implies convergence of second moments.
\end{proof}

\subsection{Proof of Theorem~\ref{thm:strong-adaptive}}

\begin{proof}
Start from Theorem~\ref{thm:dml-expansion} and target adaptivity \eqref{eq:target-adaptivity}:
\[
\sqrt{n}\bigl(\hat\theta_{\phi_n}-\theta^\star\bigr)
=
\Gn(\varphi_{\phi_n})+o_\Pp(1).
\]

\noindent\emph{A note on centering.}
Theorem~\ref{thm:dml-expansion} gives $\hat\theta_{\phi_n}-\theta_{\phi_n}=O_\Pp(n^{-1/2})$, and target adaptivity \eqref{eq:target-adaptivity} gives $\theta_{\phi_n}-\theta^\star=o_\Pp(n^{-1/2})$.
Thus $\hat\theta_{\phi_n}-\theta^\star=o_\Pp(1)$, so the estimator is consistent for the full-information target.

The long-space operator bound and the score-product conditions supply the hypotheses of Lemma~\ref{lem:score-conv}, so $\norm{\varphi_{\phi_n}-\varphi^\star}_{P,2}=o(1)$.
Since $\E[\Gn(f)^2]=\Var(f)$ for each (nonrandom) $f\in L^2(P)$, we have
\[
\E\bigl[\Gn(\varphi_{\phi_n}-\varphi^\star)^2\bigr]
=
\Var(\varphi_{\phi_n}-\varphi^\star)
\le
\norm{\varphi_{\phi_n}-\varphi^\star}_{P,2}^2
\to 0,
\]
and hence $\Gn(\varphi_{\phi_n}-\varphi^\star)=o_\Pp(1)$.
Therefore $\Gn(\varphi_{\phi_n})=\Gn(\varphi^\star)+o_\Pp(1)$, which gives \eqref{eq:efficient-expansion}.

The asymptotic normality statement follows from the classical CLT applied to the i.i.d.\ summands $\psi^\star(O_i;g^\star,\alpha^\star)-\theta^\star$.
Finally, since $\varphi^\star$ is the efficient influence function for $\theta^\star$, \eqref{eq:efficient-expansion} implies asymptotic efficiency \cite{bickel1993,vanderVaart1998}.

\noindent\emph{Consistency of the cross-fitted variance estimator.}
Theorem~\ref{thm:dml-expansion} (proved above) shows that the variance estimator computed from the cross-fitted short scores satisfies
$\hat\sigma_{\phi_n}^2-\sigma_{\phi_n}^2=o_\Pp(1)$.
Lemma~\ref{lem:score-conv} implies $\sigma_{\phi_n}^2\to\sigma_\star^2$ under adaptive efficiency.
Therefore,
\[
\bigl|\hat\sigma_{\phi_n}^2-\sigma_\star^2\bigr|
\le
\bigl|\hat\sigma_{\phi_n}^2-\sigma_{\phi_n}^2\bigr|
+
\bigl|\sigma_{\phi_n}^2-\sigma_\star^2\bigr|
=o_\Pp(1)+o(1)=o_\Pp(1),
\]
so $\hat\sigma_{\phi_n}^2\overset{p}{\to}\sigma_\star^2$.
\end{proof}

\subsection{Proof of Proposition~\ref{prop:pipeline-mixed-bias}}

\begin{proof}
We write $\mathcal H\subseteq L^2(P_W)$ for the closed linear subspace in Definition~\ref{def:pipeline-target}.

\noindent\emph{Step 1: The projection interpretation of $\alpha_0$.}
By the long Riesz representation \eqref{eq:rr-long}, for every $h\in L^2(P_W)$,
\[
\E[m(O,h)] = \E\!\left[h(W)\,\alpha^\star(W)\right].
\]
In particular, for any $\alpha\in\mathcal H$,
\[
\E\!\left[\alpha(W)^2-2m(O,\alpha)\right]
=
\E\!\left[\alpha(W)^2-2\alpha(W)\alpha^\star(W)\right]
=
\norm{\alpha-\alpha^\star}_{P,2}^2-\norm{\alpha^\star}_{P,2}^2.
\]
Since $\mathcal H$ is a closed linear subspace of the Hilbert space $L^2(P_W)$, the minimizer is the $L^2(P)$ projection of $\alpha^\star$ onto $\mathcal H$.
Thus $\alpha_0=\Pi_{\mathcal H}\alpha^\star$, and $\alpha^\star-\alpha_0$ is orthogonal to $\mathcal H$.

\noindent\emph{Step 2: Part (i).}
Fix $h\in\mathcal H$.
Using the long Riesz representation and the orthogonality of $\alpha^\star-\alpha_0$ to $\mathcal H$,
\[
\E[m(O,h)]
=
\E[h\,\alpha^\star]
=
\E[h\,\alpha_0]+\E[h(\alpha^\star-\alpha_0)]
=
\E[h\,\alpha_0],
\]
which proves the restricted Riesz property.

\noindent\emph{Step 3: The projection interpretation of $g_0$.}
Since $g^\star(W)=\E[Y\mid W]$, the residual $Y-g^\star(W)$ is orthogonal in $L^2(P)$ to every square-integrable function of $W$.
In particular, it is orthogonal to $\mathcal H$.
Therefore, minimizing $\E[(Y-g(W))^2]$ over $g\in\mathcal H$ is equivalent to minimizing $\E[(g^\star-g(W))^2]$ over $g\in\mathcal H$, so $g_0=\Pi_{\mathcal H}g^\star$ and $g^\star-g_0$ is orthogonal to $\mathcal H$.

\noindent\emph{Step 4: Part (ii).}
Fix $g,\alpha\in\mathcal H$.
Using $\theta_0=\E[m(O,g_0)]$,
\[
\E[\psi^\star(O;g,\alpha)]-\theta_0
=
\E[m(O,g-g_0)]+\E\!\left[\alpha(W)\{Y-g(W)\}\right].
\]
By part (i), $\E[m(O,g-g_0)]=\E[(g-g_0)\alpha_0]$ because $g-g_0\in\mathcal H$.
Also,
\[
\E\!\left[\alpha(Y-g)\right]
=
\E\!\left[\alpha(Y-g_0)\right]-\E\!\left[\alpha(g-g_0)\right].
\]
Since $Y-g_0$ is orthogonal to $\mathcal H$ and $\alpha\in\mathcal H$, we have $\E[\alpha(Y-g_0)]=0$.
Combining these identities yields
\[
\E[\psi^\star(O;g,\alpha)]-\theta_0
=
\E[(g-g_0)\alpha_0]-\E[\alpha(g-g_0)]
=
-\,\E[(g-g_0)(\alpha-\alpha_0)],
\]
which is \eqref{eq:mixed-bias-pipeline}.

\noindent\emph{Step 5: Part (iii).}
By the long Riesz representation,
\[
\theta^\star-\theta_0
=
\E[m(O,g^\star-g_0)]
=
\E[(g^\star-g_0)\alpha^\star].
\]
Because $g^\star-g_0$ is orthogonal to $\mathcal H$ and $\alpha_0\in\mathcal H$, we have $\E[(g^\star-g_0)\alpha_0]=0$, so
\[
\E[(g^\star-g_0)\alpha^\star]
=
\E[(g^\star-g_0)(\alpha^\star-\alpha_0)].
\]
This proves the first identity in \eqref{eq:ovb-pipeline}, and the inequality is immediate from Cauchy--Schwarz.
\end{proof}

\subsection{Proof of Theorem~\ref{thm:pipeline-dml}}

\begin{proof}
The proof follows the same three-step decomposition as the proof of Theorem~\ref{thm:dml-expansion}.
The only substantive change is the population algebra: in place of the short mixed-bias identity used in Step~2 of that proof, we use Proposition~\ref{prop:pipeline-mixed-bias}(ii), which yields
\[
P\Bigl\{\psi^\star(O;\hat g^{(-k)},\hat\alpha^{(-k)})-\psi^\star(O;g_0,\alpha_0)\Bigr\}
=
-\,P\bigl[(\hat g^{(-k)}-g_0)(\hat\alpha^{(-k)}-\alpha_0)\bigr].
\]
Assumption~\ref{ass:pipeline-dml} makes this term $o_\Pp(n^{-1/2})$ uniformly over folds, and the remaining empirical-process terms are handled exactly as before because the fitted nuisances are fixed conditional on the training sample.
In particular, interpolation between the assumed $L^2(P)$ convergence and bounded $L^q(P)$ norms gives $L^4(P)$ convergence of both nuisance errors. H\"older's inequality and bounded linearity of $m$ then give convergence of the fitted score in $L^2(P)$.
This yields the linear representation \eqref{eq:pipeline-expansion}.

The asymptotic normality statement follows from the classical CLT applied to the i.i.d.\ summands $\psi^\star(O_i;g_0,\alpha_0)-\theta_0$.
Consistency of the cross-fitted variance estimator is proved in the same way as in the last part of the proof of Theorem~\ref{thm:dml-expansion}, using that
\[
\max_{1\le k\le K}\norm{\psi^\star(\cdot;\hat g^{(-k)},\hat\alpha^{(-k)})-\psi^\star(\cdot;g_0,\alpha_0)}_{P,2}
= o_\Pp(1)
\]
under Assumption~\ref{ass:pipeline-dml}.
\end{proof}

\section{Proofs for Section 3}
\label{app:proofs-sec3}

We collect the proofs for the results stated in Section~\ref{sec:learn-H}.
Throughout, $P$ denotes the population law and $\Pn$ denotes the empirical measure on a training fold (we continue to suppress the fold index, as in the main text).

\subsection{Proof of Lemma~\ref{lem:excess-risk-identities}}
\begin{proof}
For \eqref{eq:excess-risk-reg}, write $Y=g_0(W)+\varepsilon$ where $\varepsilon:=Y-g_0(W)$.
Since $g_0$ minimizes $\mathcal R$ over the linear subspace $\mathcal H$, the normal equation gives
$\E[\varepsilon\,h(W)]=0$ for every $h\in\mathcal H$.
Then for any $g\in\mathcal H$,
\[
\mathcal R(g)-\mathcal R(g_0)
=
\E[\{(g-g_0)(W)\}^2]-2\,\E[(g-g_0)(W)\varepsilon]
=
\norm{g-g_0}_{P,2}^2,
\]
because $g-g_0\in\mathcal H$ and hence $\E[(g-g_0)(W)\,\varepsilon]=0$ by the normal equation.

For \eqref{eq:excess-risk-riesz}, expand
\[
\mathcal L(\alpha)-\mathcal L(\alpha_0)
=
\E[\{(\alpha-\alpha_0)(W)\}^2]
+
2\,\E[\alpha_0(W)\{\alpha-\alpha_0\}(W)]
-2\,\E[m(O,\alpha-\alpha_0)].
\]
Since $\alpha-\alpha_0\in\mathcal H$, Proposition~\ref{prop:pipeline-mixed-bias}(i) gives the restricted Riesz identity
$\E[m(O,h)]=\E[h(W)\alpha_0(W)]$ for all $h\in\mathcal H$.
Taking $h=\alpha-\alpha_0$ cancels the last two terms and yields
$\mathcal L(\alpha)-\mathcal L(\alpha_0)=\norm{\alpha-\alpha_0}_{P,2}^2$.
\end{proof}

\medskip

We next record the localized complexity calculation behind \eqref{eq:l1-radius}, and the corresponding extension to Riesz regression via mean-square continuity.

\subsection{Derivation of \eqref{eq:l1-radius}}
\begin{proof}
Let $\mathcal F_B$ be the signed $\ell_1$ hull and let $f_{0,B}$ be the minimizer of $\mathcal R(f)$ over $\mathcal F_B$.
For $r>0$ consider the localized difference set
\[
\Delta(r):=\{f-f_{0,B}:\ f\in \mathcal F_B,\ \norm{f-f_{0,B}}_{P,2}\le r\}.
\]
Since $\mathcal F_B$ lies in a span of dimension at most $M$, the localized difference set lies in its $L^2(P)$ ball of radius $r$. Expanding in an orthonormal basis and applying Cauchy--Schwarz gives
\[
\mathcal R_n(r)\le r\sqrt{\frac{M}{n}}.
\]
On the other hand, $\ell_1$--$\ell_\infty$ duality and the bounded-dictionary Rademacher maximal inequality give
\[
\E\sup_{f\in\mathcal F_B}\frac{1}{n}\sum_{i=1}^n\varepsilon_i f(W_i)
=B\,\E\max_{1\le j\le M}\Bigl|\frac{1}{n}\sum_{i=1}^n\varepsilon_i f_j(W_i)\Bigr|
\lessssim B\max_j\norm{f_j}_\infty\sqrt{\frac{\log(2M)}{n}}.
\]
Subtracting the fixed comparator $f_{0,B}$ does not change this expectation. Hence
\[
\mathcal R_n(r)\lessssim
\min\Biggl\{r\sqrt{\frac{M}{n}},\ B\max_j\norm{f_j}_\infty\sqrt{\frac{\log(2M)}{n}}\Biggr\}.
\]
Solving $\mathcal R_n(r)\le r^2/8$ yields \eqref{eq:l1-radius}. The dimension bound is independent of dictionary scaling, while the global bound uses the dictionary's uniform envelope.
\end{proof}

\medskip

\begin{proof}[A simple $\ell_1$-geometry reduction for the Riesz loss]
Let $\mathcal A=\{a_1,\ldots,a_M\}\subseteq\mathcal H$ and define
\[
\mathcal A_B:=\Bigl\{\sum_{j=1}^M b_j a_j:\ \sum_j|b_j|\le B\Bigr\}.
\]
Assume the dictionary is fixed independently of the objective sample and both $a_j$ and $m(O,a_j)$ have bounded envelopes.
Apply the calculation for \eqref{eq:l1-radius} to $\mathcal A_B-\alpha_{0,B}$ using $\max_j\norm{a_j}_\infty$.
By linearity, $m\circ\mathcal A_B$ is the signed $\ell_1$ hull of $\{m(O,a_j):1\le j\le M\}$, so apply the same calculation separately using $\max_j\norm{m(O,a_j)}_\infty$.
Both critical radii are therefore bounded by
\[
\delta_n\lessssim
\min\Biggl\{\sqrt{\frac{M}{n}},\
\sqrt{B\max_j\bigl(\norm{a_j}_\infty\vee\norm{m(O,a_j)}_\infty\bigr)}
\Bigl(\frac{\log(2M)}{n}\Bigr)^{1/4}\Biggr\}.
\]
The deviation sets are already star-shaped because the signed hulls are convex. Insert this common radius into \eqref{eq:riesz-theorem21}; mean-square continuity supplies the factor $C_m$ in that theorem, rather than a direct rescaling of the critical radius.
\end{proof}

\medskip

We now prove Propositions~\ref{prop:star-finite-reg} and~\ref{prop:star-finite-riesz}.
Both results share a common structure, which we isolate in abstract terms before specializing to each loss.
In particular, both problems apply star aggregation over a finite class of functions, with loss functions that are strongly convex and sufficiently regular.
As before we will use the empirical process notation, and in addition we will use symmetrized processes:
$${\mathbb{P}_{n,\varepsilon}} f := n^{-1} \sum_{i=1}^n \varepsilon_i f(O_i),$$ where $\varepsilon_1,\dots,\varepsilon_n$ are i.i.d.\ Rademacher signs independent of the sample.

%As in the previous subsection, we write $Pf = \E[f(O)]$ for the expectation under the data-generating law,
%$\Pn f = n^{-1}\sum_{i=1}^n f(O_i)$ for the empirical mean, and $\mathbb{P}_{n,\varepsilon} f = n^{-1} \sum_{i=1}^n \varepsilon_i f(O_i)$ where $\varepsilon_1,\dots,\varepsilon_n$ are i.i.d.\ Rademacher signs independent of the sample.

Let $\cF=\{f_1,\dots,f_M\}$ be a finite candidate set of measurable functions $f\colon \mathcal{W}\to\R$, and define the $2$-convex hull
\[
\operatorname{conv}_2(\cF)
:=\bigl\{\lambda f_j+(1-\lambda)f_k : 1\le j,k\le M,\;
\lambda\in[0,1]\bigr\}.
\]
Throughout, we fix a comparator $f^*\in\cF$ and write
$$\cE_f:=\ell_f-\ell_{f^*}$$for the excess loss,
where $\ell_f(O):=\ell(O,f)$.

\subsection{Proofs of Propositions~\ref{prop:star-finite-reg} and~\ref{prop:star-finite-riesz}}
The discussion in Section~3 uses two closely related learning problems over a common function space~$\mathcal{H}$.
Within this framework, Propositions~\ref{prop:star-finite-reg} and~\ref{prop:star-finite-riesz} both seek to bound $P\cE_{\hat f}$, where $\hat f$ is the star estimator, in two separate learning problems corresponding to the following losses.

\medskip
\noindent\emph{(i) Regression (squared loss).}
We observe $O=(Y,W)$ and take
$\ell^{\mathrm{ls}}(O,f) := (Y-f(W))^2.$

\medskip
\noindent\emph{(ii) Riesz regression (Riesz loss).}
We observe a generic $O$ that contains $W$ as a measurable component
and take
$\ell^{\mathrm{R}}(O,f) := f(W)^2 - 2\,m(O,f),$
where $m(O,\cdot)$ is linear in its second argument.
In the ATE case, $m\{(y,d,z),f\}=f(1,z)-f(0,z)$.

\medskip
These two loss functions share additional structure.
In both cases, population excess risk is equivalent to an $L_2(P)$ distance for a particular comparator $f_0$: it was shown in Lemma~\ref{lem:excess-risk-identities} that for any $f \in \mathcal{H}$, if $f^* = f_0$, then
\[
P\cE_f^{\mathrm{ls}} = \|f-f_0^{\mathrm{ls}}\|_{P,2}^2,
\qquad
P\cE_f^{\mathrm{R}} = \|f-f_0^{\mathrm{R}}\|_{P,2}^2.
\]
For squared loss, $f_0^{\mathrm{ls}}$ is the $L_2(P)$ projection of $Y$ onto $\cH$; for the Riesz loss,
$f_0^{\mathrm{R}}$ is the Riesz representer of the restricted functional
$h\mapsto Pm(O,h)$ on~$\cH$.
Note that, in general, $f_0$ does not belong to the candidate class $\cF$, so we state oracle inequalities relative to an arbitrary comparator $f^* \in \cF$.

With this shared structure, Propositions~\ref{prop:star-finite-reg} and~\ref{prop:star-finite-riesz} can be proven in tandem, and we do so here.

% ----------------------------------------------------------------
% STANDING ASSUMPTIONS
% ----------------------------------------------------------------
\medskip
\noindent\textbf{Standing assumptions.}
The proof relies on an offset quantity
\[
\cQ_f \;:=\; \cE_f + c\,(f-f^*)^2,
\qquad c:=\tfrac{1}{9},
\]
and relies on the following regularity conditions that can be verified for both loss functions.

\begin{enumerate}[label=\textup{(A\arabic*)}]
\item\label{A:bounded}
(\emph{Boundedness.})
$\|f\|_\infty\le b_F$ for all $f\in\cF$, and
$|\cQ_f|\le b_Q$ a.s.\ for all $f\in\conv_2(\cF)$.

\item\label{A:var}
(\emph{Population second-moment bound.})
There exists $B_0<\infty$ such that
\[
P\,\cQ_f^2 \;\le\; B_0\,\norm{f-f^*}_{P,2}^2
\qquad \forall\, f\in\conv_2(\cF).
\]

\item\label{A:inc}
(\emph{Population increment bound.})
There exists $B_1<\infty$ such that
\[
P\,(\cQ_f-\cQ_g)^2 \;\le\; B_1\,\norm{f-g}_{P,2}^2
\qquad \forall\, f,g\in\conv_2(\cF).
\]

\item\label{A:lip}
(\emph{Restricted Lipschitz along line segments.})
There exists $L<\infty$ such that for all
$f_1,f_2\in\cF$ and $\lambda,\lambda'\in[0,1]$,
\[
\bigl|\cQ_{f_1+\lambda(f_2-f_1)}-\cQ_{f_1+\lambda'(f_2-f_1)}\bigr|
\;\le\; L\,|\lambda-\lambda'|.
\]
\end{enumerate}

% \noindent
% \textbf{Which assumptions each lemma uses.}
% Both strategies rely on the star-margin bound (Lemma~\ref{lemma:star-margin}),
% which requires only strong convexity of~$\ell$.
% Strategy~A (Lemma~\ref{lemma:bernstein-localization}) uses
% \ref{A:bounded}, \ref{A:var}, and~\ref{A:lip};
% it does \emph{not} require~\ref{A:inc}.
% Strategy~B (Lemmas~\ref{lem:first_part_large_scales}
% and~\ref{lem:second_part_peeling}) uses all four assumptions
% \ref{A:bounded}--\ref{A:lip}.

% ----------------------------------------------------------------
% OUTLINE OF PROOF
% ----------------------------------------------------------------
\medskip
The proof argument proceeds in two steps.

Step 1. Strong convexity of $\ell^{\mathrm{ls}}$ and
$\ell^{\mathrm{R}}$ implies that the star estimator's excess risk
admits a \emph{localized} bound over $\conv_2(\cF)$
(Lemma~\ref{lemma:star-margin}).

Step 2. Regularity of the respective losses allows us to control the
complexity of $\conv_2(\cF)$, yielding the desired
$\log(M)/n$ rate.
% Two routes are available for Step~2:
% \begin{itemize}
% \item \emph{Strategy~A} (Lemma~\ref{lemma:bernstein-localization}):
% a direct Bernstein-plus-covering argument giving a high-probability
% bound.
% \item \emph{Strategy~B}
% (Lemmas~\ref{lem:first_part_large_scales}--\ref{lem:second_part_peeling}):
Mixed-tail chaining first bounds the expected localized empirical process. Concentration and a peeling argument then retain the quadratic offset and give the oracle remainder.
% \end{itemize}

We first give the shared localization step, then the complexity bound, and finally verify the standing assumptions for each loss, thus allowing us to prove both Propositions~\ref{prop:star-finite-reg} and~\ref{prop:star-finite-riesz}.

% ================================================================
%  LEMMA 4: STAR MARGIN
% ================================================================
\medskip
\noindent\textbf{Step~1: Localized bound via strong convexity.}
Because the loss is strongly convex, the star algorithm's excess risk is bounded by a localized (or offset) empirical process
\cite{vijaykumar2021localization}. The following result shows that the star estimator behaves like a convex minimizer, even when applied to a nonconvex candidate set.

\begin{lemma}[Star margin]\label{lemma:star-margin}
Let $\ell \colon \mathcal{H} \times \mathcal{O} \to \R$ be given.
Suppose that $\ell$ almost surely satisfies the strong convexity
condition: for all $f,g\in\cH$ and $\lambda\in[0,1]$,
\[
\ell\!\bigl(O,\,f + \lambda(g-f)\bigr)
\;\le\;
(1-\lambda)\,\ell(O,f) + \lambda\,\ell(O,g)
- \lambda(1-\lambda)\bigl(f(W)-g(W)\bigr)^2.
\]
Let $\hat f$ denote the star estimator:
for $\mathcal{F} \subset \mathcal{H}$,
let $\tilde f$ minimize $f \mapsto \Pn\cE_f$ over $\cF$
and let $\hat f$ minimize $f \mapsto \Pn\cE_f$ over
$\mathrm{star}(\cF,\tilde f)$.
Then, for any comparator $f^* \in \cF$:
\begin{enumerate}[label=(\roman*)]
\item
\emph{(Empirical margin.)}
$-\Pn \cE_{\hat f} - \tfrac{1}{9}\|f^* - \hat f\|_{2, \Pn}^2 \ge 0.$
% \item
% \textcolor{purple}{SV:DELETE
% \emph{(Symmetrized moment bound.)}
% For every convex increasing $\Psi\colon\R\to\R$,
% \begin{equation}
% P \Psi\!\bigl(\cE_{\hat f}\bigr)
% \;\le\;
% P\,\Psi \!\left(
% \sup_{f \in \conv_2 \cF} \mathbb{P}_{n,\varepsilon}
% \Bigl[2\{\cE_f + \tfrac{1}{9}(f- f^*)^2\} - \tfrac{1}{9}(f- f^*)^2 \Bigr]
% \right).
% \label{eq:star-empirical-process2}
% \end{equation}}
\item
\emph{(Population-risk bound.)}
\begin{equation}
P\cE_{\hat f}
\;\le\;
\sup_{f \in \conv_2 \cF}\left\{
(P - \Pn)\!\left[\cE_{ f} + \tfrac{1}{9}( f - f^*)^2\right]
- \tfrac{1}{9}\|f^* - f\|_{2, P}^2\right\}.
\label{eq:star-empirical-process}
\end{equation}
\end{enumerate}
\end{lemma}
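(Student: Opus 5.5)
The plan is to prove the empirical margin (i) by a purely deterministic argument on the sample, using the strong convexity inequality twice. Part (ii) then follows by adding and subtracting population quantities.

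Throughout, write $\|\cdot\|_n:=\|\cdot\|_{2,\Pn}$ and $\ell_f:=\ell(\cdot,f)$. Averaging the almost-sure strong convexity inequality over the sample gives the same inequality for $\Pn\ell$ with $\|\cdot\|_n^2$ in place of the pointwise square. Since $\Pn\cE_f=\Pn\ell_f-\Pn\ell_{f^*}$, minimizing $\Pn\cE_f$ is the same as minimizing $\Pn\ell_f$.

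The first step controls the comparator distance. Because $f^*\in\cF$ and $\tilde f$ is the empirical minimizer over $\cF$, we have $\Pn\cE_{\tilde f}\le \Pn\cE_{f^*}=0$. The segment $f_\lambda:=\tilde f+\lambda(f^*-\tilde f)$ lies in $\mathrm{star}(\cF,\tilde f)$, so $\Pn\ell_{\hat f}\le\Pn\ell_{f_\lambda}$ for every $\lambda\in[0,1]$. Take $\lambda=1/2$ and write $d:=\|\tilde f-f^*\|_n$. Strong convexity then gives
\[
\Pn\cE_{\hat f}\le \tfrac12\Pn\cE_{\tilde f}+\tfrac12\Pn\cE_{f^*}-\tfrac14 d^2\le -\tfrac14 d^2 ,
\]
that is, $d^2\le -4\Pn\cE_{\hat f}$.

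The second step controls the distance from the center. The estimator $\hat f$ lies on some segment $[\tilde f,f_j]$ and minimizes $\Pn\ell$ along it. Apply strong convexity at the points $\hat f+t(\tilde f-\hat f)$, which stay on that segment. Optimality gives
\[
\Pn\ell_{\hat f}\le(1-t)\Pn\ell_{\hat f}+t\,\Pn\ell_{\tilde f}-t(1-t)e^2,\qquad e:=\|\hat f-\tilde f\|_n .
\]
Divide by $t$ and let $t\downarrow0$ to obtain $\Pn\ell_{\hat f}\le\Pn\ell_{\tilde f}-e^2$. Hence $e^2\le \Pn\cE_{\tilde f}-\Pn\cE_{\hat f}\le-\Pn\cE_{\hat f}$.

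The third step combines the two bounds. By the triangle inequality and $(e+d)^2\le(1+s)e^2+(1+1/s)d^2$ with $s=2$,
\[
\|\hat f-f^*\|_n^2\le 3e^2+\tfrac32 d^2\le -(3+6)\Pn\cE_{\hat f}=-9\,\Pn\cE_{\hat f},
\]
which is (i). The constant $1/9$ comes from optimizing $5+s+4/s$ at $s=2$.

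For (ii), write $P\cE_{\hat f}=(P-\Pn)\cE_{\hat f}+\Pn\cE_{\hat f}$ and bound $\Pn\cE_{\hat f}\le-\frac19\|\hat f-f^*\|_n^2$ by (i). Then rewrite
\[
-\tfrac19\Pn(\hat f-f^*)^2=(P-\Pn)\tfrac19(\hat f-f^*)^2-\tfrac19\|\hat f-f^*\|_{P,2}^2 .
\]
Since $\hat f=\lambda\tilde f+(1-\lambda)f_j$ with $\tilde f,f_j\in\cF$, it belongs to $\conv_2(\cF)$. Bounding the resulting expression by its supremum over $\conv_2(\cF)$ gives \eqref{eq:star-empirical-process}.

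The only delicate point is the second step. There the optimality of $\hat f$ is available only along its own segment, and the limiting argument must use points that stay on that segment. This is why the bound is taken toward $\tilde f$ rather than toward $f^*$.
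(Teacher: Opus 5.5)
Your proof is correct, and your part (ii) is the same add-and-subtract step the paper uses. The difference is in part (i), where the paper gives no argument of its own. It invokes Proposition~4 of \cite{vijaykumar2021localization} after noting that the empirical objective is $(\mu,d)$-convex with $\mu(z)=z^2$ and $d$ the empirical $L^2$ pseudometric. For the Riesz loss it then needs an extra remark: the linear term $m(O,\cdot)$ has zero Bregman divergence, because the cited Lemma~5 covers only losses that depend on $f$ through $f(W)$.

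You instead derive the margin directly in three steps:
\begin{enumerate}[label=(\alph*)]
\item the midpoint of $\tilde f$ and $f^*$ gives $\norm{\tilde f-f^*}_{2,\Pn}\le 2\sqrt{-\Pn\cE_{\hat f}}$;
\item one-sided optimality of $\hat f$ along its own segment toward $\tilde f$ gives $\norm{\hat f-\tilde f}_{2,\Pn}\le\sqrt{-\Pn\cE_{\hat f}}$;
\item the triangle inequality then produces the factor $(1+2)^2=9$.
\end{enumerate}
This is in substance the lower-set and triangle-inequality argument the paper attributes to the cited proposition, specialized to the quadratic modulus.

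What your route buys is that it uses only the pointwise strong-convexity hypothesis exactly as the lemma states it. It therefore covers the square and Riesz losses at once without a separate Bregman remark, and it shows where the constant $1/9$ comes from. The same argument also transfers verbatim to the pooled objective $\bar R_n$ with the pooled pseudometric used later in the paper. What the citation buys is generality: it handles arbitrary moduli $\mu$ and pseudometrics $d$, at the cost of having to check the $(\mu,d)$-convexity framework separately for each loss.
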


\begin{proof}
Claim (i) is Proposition~4 of \cite{vijaykumar2021localization}, applied to the empirical objective $f\mapsto\Pn\ell(\cdot,f)$, which is $(\mu,d)$-convex in the sense of Definition~1 of that paper with $\mu(z)=z^2$ and $d$ the empirical $L^2$ pseudometric.
For the square loss this is Lemma~5 of \cite{vijaykumar2021localization}.
The Riesz loss depends on $f$ through $m(O,f)$ as well as through $f(W)$, so that lemma does not apply verbatim; but $m(O,\cdot)$ is linear and has zero Bregman divergence, so each summand again satisfies the same condition, and the averaging argument is unchanged.
For~(ii), start from~(i):
\begin{align*}
P \cE_{\hat f}
&\le (P - \Pn) \cE_{\hat f}
- \tfrac{1}{9}\|f^* - \hat f\|_{2, \Pn}^2 \\
&= (P - \Pn)\!\left[\cE_{\hat f}
+ \tfrac{1}{9}(\hat f - f^*)^2\right]
- \tfrac{1}{9}\|f^* - \hat f\|_{2, P}^2 \\
&\le \sup_{f \in \conv_2 \cF}\left\{
(P - \Pn)\!\left[\cE_{ f}
+ \tfrac{1}{9}( f - f^*)^2\right]
- \tfrac{1}{9}\|f^* - f\|_{2, P}^2\right\},
\end{align*}
where the last step uses $\hat f \in \conv_2 \cF$.
\end{proof}

\medskip
\noindent
We therefore seek to bound the offset empirical process
\begin{equation}
\label{eq:offset-EP}
\sup_{f \in \conv_2(\cF)}\left\{
 (P - \Pn)\cQ_f - c\,\|f - f^*\|_{P,2}^2\right\},
\qquad
\cQ_f := \cE_f + c \,(f-f^*)^2,
\quad c=\tfrac{1}{9}.
\end{equation}

\bigskip
\noindent\textbf{Step~2: Mixed-tail chaining and offset ULLN via peeling
(Lemmas~\ref{lem:mixed-tail-chaining-bound}--\ref{lem:second_part_peeling}).}
For any $t > 0$, define
\[
\cF(t) := \bigl\{f\in\conv_2(\cF):
\|f-f^*\|_{2,P}\le\sqrt{t}\bigr\},
\qquad
Z(t):=\sup_{f\in\cF(t)} (\mathbb{P}_{n}-P)\cQ_f.
\]
We will first control $PZ(t)$; under our standing assumptions, this follows from a relatively standard chaining bound for processes with sub-Gamma increments (Lemma \ref{lem:mixed-tail-chaining-bound}). From this we deduce a localized bound using a peeling argument for the offset process (Lemma \ref{lem:second_part_peeling}).

\begin{lemma}[Mixed-tail chaining]\label{lem:mixed-tail-chaining-bound} Under Assumptions \ref{A:inc} and \ref{A:lip}, it holds for a universal constant $C_3$ that
\begin{equation}\label{eq:mixed-tail-chaining-bound}
     P Z(t) \le C_3\left\{ \sqrt{\frac{B_1 t \log(M+1)}{n}} + \frac{L\log(M+1)}{n}\right\}.
\end{equation}
\end{lemma}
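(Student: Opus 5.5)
We bound $PZ(t)$ by symmetrization followed by chaining over the index set of $\cF(t)$. This set is a union of at most $M^2$ one-dimensional segments $\{f_j+\lambda(f_k-f_j):\lambda\in[0,1]\}$.

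First, by the standard symmetrization inequality,
\[
PZ(t)\le 2\,\E\sup_{f\in\cF(t)}\mathbb{P}_{n,\varepsilon}(\cQ_f-\cQ_{f^*}),
\]
where we may subtract $\cQ_{f^*}=0$ freely, since $\cE_{f^*}=0$ and $(f^*-f^*)^2=0$. Note $f^*\in\cF(t)$, so the supremum is nonnegative.

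The key tool is Bernstein's inequality for the increments. For fixed $f,g\in\conv_2(\cF)$, the variable $\mathbb{P}_{n,\varepsilon}(\cQ_f-\cQ_g)$ is a sum of independent centered terms. Its variance proxy is $P(\cQ_f-\cQ_g)^2/n\le B_1\|f-g\|_{P,2}^2/n$ by \ref{A:inc}. Its range is controlled by $\|\cQ_f-\cQ_g\|_\infty/n$. On a common segment, \ref{A:lip} bounds this range by $L|\lambda-\lambda'|/n$. So the increments are sub-Gamma with respect to the pair of (pseudo)metrics $d_2(f,g)=\sqrt{B_1/n}\,\|f-g\|_{P,2}$ and $d_\infty$, the latter given by $L/n$ times the segment parameter distance.

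The chaining proceeds in three steps.
\begin{enumerate}
\item The coarse level is a finite maximum over the $M^2$ segments. For each segment, take an anchor point in $\cF(t)$ when the segment meets $\cF(t)$. Bernstein plus a union bound gives a term of order $\sqrt{B_1 t\log(M+1)/n}+\max\|\cQ\|_\infty\log(M+1)/n$. The sup-norm piece is replaced by $L$ because $\cQ$ of the anchor differs from $\cQ_{f^*}=0$ by at most $L$ along a segment through $f^*$.
\item Within each segment we have a one-dimensional process indexed by $\lambda$ ranging over an interval. The $d_2$-diameter is at most $2\sqrt{B_1 t/n}$, since $\cF(t)$ lies in an $L^2$ ball of radius $\sqrt t$. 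The $d_\infty$-diameter is at most $L/n$.
\item We apply a generic mixed-tail chaining bound (Talagrand's $\gamma_2+\gamma_1$ bound, or Dirksen's version for mixed tails) to the union of segments. Covering numbers behave like $M^2\cdot(\text{diam}/\epsilon)$ in both metrics. The entropy integrals therefore give $\sqrt{\log M^2}\cdot d_2\text{-diam}+\log M^2\cdot d_\infty\text{-diam}$, plus convergent one-dimensional integrals of the same order. Together these yield \eqref{eq:mixed-tail-chaining-bound}.
\end{enumerate}

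The main obstacle is the anchoring and metric compatibility in step (iii). Along one segment, $\|f_\lambda-f_{\lambda'}\|_{P,2}=|\lambda-\lambda'|\,\|f_k-f_j\|_{P,2}$, so the $d_2$ covering of the localized piece is genuinely one-dimensional. However, the localized piece of a segment is an interval whose $L^2$ length can be at most $2\sqrt t$, while its $\lambda$-length may be as large as $1$. This is why the $L$ term carries no $t$.

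The other point is linking different segments at the coarsest scale without paying the $L^\infty$ diameter of the whole class times $\log M$ more than once. I would handle this by chaining each segment separately from its own anchor, each anchor chained from $f^*$ by a single Bernstein step. I would then take a union bound over the $M^2$ segments at each scale, which only inflates logarithms by $\log(M+1)$ up to constants. Converting the resulting tail bound into an expectation bound is routine by integrating the tail.
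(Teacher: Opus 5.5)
Your proposal is correct and is essentially the paper's argument. The paper chains $(\Pn-P)\cQ_f$ directly (symmetrization is not needed) over dyadic $\lambda$-grids on the at most $M^2$ localized segments, applying Bernstein's maximal inequality at each level with $L^2(P)$ increments controlled by \ref{A:inc} and sup-norm increments by \ref{A:lip}, which is your per-segment scheme with a union bound at each scale. One small correction: an anchor's segment $[f_j,f_k]$ generally does not pass through $f^*$, so its sup-norm bound takes two applications of \ref{A:lip} (from $f^*$ to $f_j$, then along $[f_j,f_k]$), giving $2L$ rather than $L$; this only changes the constant.
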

\begin{proof}
For a finite collection of centered empirical averages with sub-Gamma variance and scale parameters bounded by $\sigma_U^2$ and $b_U$, Bernstein's maximal inequality gives
\begin{equation}
\E\max_{j\in\mathcal J}U_j
\lesssim\sqrt{\sigma_U^2\log(1+|\mathcal J|)}+b_U\log(1+|\mathcal J|).
\label{eq:sub-gamma-maximal}
\end{equation}
For $(\Pn-P)h$, these parameters are bounded by constant multiples of $Ph^2/n$ and $\norm{h}_\infty/n$, respectively.

The set $\cF(t)$ is a union of at most $M^2$ line segments, each of $L^2(P)$ diameter at most $2\sqrt t$. A dyadic grid of $2^m+1$ points on each intersected segment gives a net $A_m$ with $|A_m|\le M^2(2^m+1)$ and projections $\pi_m(f)$ satisfying
\begin{enumerate}[label=(\alph*)]
\item $\norm{\cQ_f-\cQ_{\pi_m(f)}}_{P,2}\le 2\sqrt{B_1t}\,2^{-m}$ by \ref{A:inc};\label{l2-cov}
\item $\norm{\cQ_f-\cQ_{\pi_m(f)}}_\infty\le L2^{-m}$ by \ref{A:lip}.\label{unif-cov}
\end{enumerate}
For the initial net, $\cQ_{f^*}=0$ and \ref{A:inc} give $\norm{\cQ_{\pi_0(f)}}_{P,2}\le\sqrt{B_1t}$. Also $\norm{\cQ_{\pi_0(f)}}_\infty\le2L$: connect $f^*$ to an endpoint of the segment containing $\pi_0(f)$, and then to $\pi_0(f)$, using \ref{A:lip} twice.
The telescoping expansion converges uniformly by \ref{unif-cov}, so
\begin{align*}
PZ(t)&\le \E\sup_{f\in\cF(t)}(\Pn-P)\cQ_{\pi_0(f)}\\
&\quad+\sum_{m\ge1}\E\sup_{f\in\cF(t)}(\Pn-P)
\{\cQ_{\pi_m(f)}-\cQ_{\pi_{m-1}(f)}\}.
\end{align*}
Each increment maximum has at most $|A_m||A_{m-1}|$ terms. Applying \eqref{eq:sub-gamma-maximal} at each level therefore gives
\begin{align*}
PZ(t)&\lesssim\sqrt{\frac{B_1t}{n}}\sum_{m\ge0}2^{-m}\sqrt{\log(M+1)+m}
+\frac Ln\sum_{m\ge0}2^{-m}\{\log(M+1)+m\}\\
&\lesssim\sqrt{\frac{B_1t\log(M+1)}n}+\frac{L\log(M+1)}n.
\end{align*}
\end{proof}

\begin{lemma}[Offset ULLN via peeling]
\label{lem:second_part_peeling}
Under Assumptions \ref{A:bounded}--\ref{A:lip}, for every $\delta\in(0,1)$, with probability at least $1-\delta$,
\begin{equation}
\label{eq:strat-B-conclusion}
\sup_{f\in\conv_2(\cF)}\!
\bigl\{(\mathbb P_n-P)\cQ_f-c\norm{f-f^*}_{P,2}^2\bigr\}
\le C_4\Bigl(\frac{B_0+B_1}{c}\vee b_Q\vee L\Bigr)
\frac{\log(2M/\delta)}n,
\end{equation}
where $C_4>0$ is a universal constant and $c=1/9$ as above.
\end{lemma}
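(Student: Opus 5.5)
We prove \eqref{eq:strat-B-conclusion} by peeling the offset process into dyadic shells of $\norm{f-f^*}_{P,2}^2$. On each shell we control the localized supremum $Z(t)$ by combining its mean bound from Lemma~\ref{lem:mixed-tail-chaining-bound} with Talagrand's concentration inequality in Bousquet's form. Then we sum the failure probabilities.

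\emph{Step 1: the scale and the shells.} Fix the threshold
\[
t_0 := K_0\Bigl(\frac{B_0+B_1}{c}\vee b_Q\vee L\Bigr)\frac{\log(2M/\delta)}{n},
\]
with $K_0$ a large universal constant. Split $\conv_2(\cF)$ into the inner ball $\cF(t_0)$ and the shells
\[
S_j:=\{f: 2^{j-1}t_0<\norm{f-f^*}_{P,2}^2\le 2^j t_0\},\qquad j\ge1.
\]
Only finitely many shells are nonempty, because $\norm{f-f^*}_{P,2}\le 2b_F$ by \ref{A:bounded}; the argument does not need this, since the probabilities below are summable anyway. On the inner ball we drop the offset and show $Z(t_0)\lesssim t_0$ with probability $1-\delta/2$. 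On $S_j$ it suffices to show $Z(2^jt_0)\le c\,2^{j-1}t_0$, because the offset there is at least $c\,2^{j-1}t_0$ and makes the shell's contribution nonpositive. We allocate failure probability $\delta_j:=\delta 2^{-j-1}$ to shell $j$.

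\emph{Step 2: concentration on one shell.} The class $\{\cQ_f:f\in\cF(t)\}$ has envelope $b_Q$ by \ref{A:bounded} and weak variance $\sup_f P\cQ_f^2\le B_0 t$ by \ref{A:var}. Bousquet's inequality then gives, with probability at least $1-e^{-x}$,
\[
Z(t)\le 2PZ(t)+\sqrt{\frac{2B_0t\,x}{n}}+\frac{C b_Q x}{n}.
\]
Insert Lemma~\ref{lem:mixed-tail-chaining-bound} for $PZ(t)$, take $t=2^jt_0$, and take $x=\log(1/\delta_j)\lesssim \log(2/\delta)+j$. Each square-root term has the form $\sqrt{(B_0+B_1)t\,u/n}$, where $u$ stands for $\log(M+1)$ or $x$. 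By AM--GM it is at most $\tfrac{c}{8}t+\tfrac{2(B_0+B_1)u}{cn}$. The linear terms $(L\log(M+1)+b_Q x)/n$ are of order $t_0\,(1+j)/K_0$.

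\emph{Step 3: absorbing the terms.} It remains to check that $(1+j)t_0/K_0\le \tfrac{c}{8}2^{j}t_0$ for all $j\ge0$ once $K_0$ is large. This holds because $2^j$ dominates $1+j$ uniformly. It explains the choice $c=1/9$: that constant is already fixed in \eqref{eq:offset-EP}, and the $K_0$ can be chosen after it. Summing $\delta_j$ over $j\ge0$ gives total failure probability at most $\delta$, and on the complement event the supremum is at most $O(t_0)$, which is \eqref{eq:strat-B-conclusion}.

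The main obstacle is the bookkeeping in Step 3. We must make sure the $j$-dependent price of the union bound, $\log(2^{j+1}/\delta)$, is absorbed both by the variance term (which scales like $\sqrt{2^j}$) and by the linear term (which does not scale with $t$ at all). We must also make sure the resulting constant is universal, with the dependence on $c$ entering only as $(B_0+B_1)/c$. If Bousquet's form is awkward with the centering, we fall back to a Bernstein-type bound for bounded suprema.
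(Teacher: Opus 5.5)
Your proposal is correct and follows essentially the same route as the paper: Bousquet's inequality combined with the chaining bound of Lemma~\ref{lem:mixed-tail-chaining-bound} on each localized class $\cF(t)$, AM--GM to split off a fraction of $ct$, and a union bound over dyadic shells $t_j=2^j t_0$. The differences are cosmetic. You allocate failure probability $\delta 2^{-j-1}$ rather than $\delta/\{\pi^2(j+1)^2\}$, and you put a large universal $K_0$ in the base scale so that every outer shell contributes nonpositively, whereas the paper uses base scale $\tau$ and bounds each shell by a universal multiple of the target.
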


\begin{proof}
For $f\in\cF(t)$, $\Var(\Pn\cQ_f)\le n^{-1}P\cQ_f^2\le B_0t/n$.
The centered summands are bounded by $2b_Q$. Bousquet's inequality gives, with probability at least $1-\eta$,
\begin{equation}
\label{eq:bousquet-B}
Z(t)\le PZ(t)+\sqrt{\frac{\{2B_0t+8b_QPZ(t)\}\log(1/\eta)}n}
+\frac{2b_Q\log(1/\eta)}{3n}.
\end{equation}
Combining this with \eqref{eq:mixed-tail-chaining-bound} and applying AM--GM yields
\begin{equation}
\label{eq:Zt-offset-B}
Z(t)\le\frac{ct}{4}
+C_4\Bigl(\frac{B_0+B_1}{c}\vee b_Q\vee L\Bigr)
\frac{\log(2M/\eta)}n.
\end{equation}
The $L$ term controls increments along the segments and must be retained through this step.

Set
\[
\tau:=\frac1c\Bigl(\frac{B_0+B_1}{c}\vee b_Q\vee L\Bigr)
\frac{\log(2M/\delta)}n,
\qquad t_j:=2^j\tau\quad(j\ge0).
\]
If the coefficient is zero the conclusion is immediate. Otherwise apply \eqref{eq:Zt-offset-B} with $\eta_j:=\delta/\{\pi^2(j+1)^2\}$. A union bound gives an event of probability at least $1-\delta$ on which all these bounds hold.
For $\norm{f-f^*}_{P,2}^2\le\tau$, the offset process is at most $Z(\tau)$, of the required order.
For $t_{j-1}<\norm{f-f^*}_{P,2}^2\le t_j$ with $j\ge1$, it is at most
\[
-\frac{ct_j}{4}
+C_4\Bigl(\frac{B_0+B_1}{c}\vee b_Q\vee L\Bigr)
\frac{\log(2M/\delta)+2\log(j+1)+2\log\pi}{n}.
\]
Substitution of $t_j=2^j\tau$ factors out the same coefficient divided by $n$. The remaining expression is bounded by a universal multiple of $\log(2M/\delta)$, because $\log(2M/\delta)\ge\log2$ and $2^j$ dominates $\log(j+1)$. This proves \eqref{eq:strat-B-conclusion} with a larger universal constant.
\end{proof}

% \clearpage

% ================================================================
%  CLEAN COMBINED CONCLUSIONS
% ================================================================

% \begin{enumerate}[label=\textup{(\Alph*)}]
% \item
% \emph{Via Strategy~A.}
% Assume \ref{A:bounded}, \ref{A:var}, \ref{A:lip}.
% Combining \eqref{eq:star-empirical-process} with
% \eqref{eq:lemma5-hp}:
% for any $f^*\in\cF$ and $\delta\in(0,1)$,
% with probability at least $1-\delta$,
% \begin{equation}
% \label{eq:combined-A}
% P\cE_{\hat f}
% \;\le\;
% \frac{4\,b_F^2 + 2L}{n}
% + \frac{C_0\,(B_0/c\vee b_Q)\,\log(C_0 n M^2/\delta)}{n}.
% \end{equation}

% \item
% \emph{Via Strategy~B.}
% Assume \ref{A:bounded}--\ref{A:lip}.
We now combine the star margin (Lemma~\ref{lemma:star-margin}) with the rest of the argument. Combining \eqref{eq:star-empirical-process}
with \eqref{eq:strat-B-conclusion} applied to $-\cQ_f$ (which satisfies the same four assumptions):
for any $f^*\in\cF$ and $\delta\in(0,1)$,
with probability at least $1-\delta$,
\begin{equation}
\label{eq:combined-B}
P\cE_{\hat f}
\;\le\;
C_8\,\Bigl(\frac{B_0+B_1}{c}\vee b_Q\vee L\Bigr)\,
\frac{\log(2M/\delta)}{n}.
\end{equation}
For $M\ge2$, $\log(2M/\delta)\lesssim\log(M/\delta)$, giving the stated rate. For $M=1$, the estimator equals the sole comparator and its excess risk is zero.
% \end{enumerate}

\noindent
Choosing $f^*\in\cF$ to minimize the population risk before applying the probability bound yields
\[
\|\hat f - f_0\|_{2,P}^2
\;\le\;
\min_{f^*\in\cF}\|f^* - f_0\|_{2,P}^2
\;+\; \text{(estimation remainder)},
\]
where the estimation remainder is the right-hand side of \eqref{eq:combined-B}.

% ================================================================
%  VERIFICATION FOR PROPOSITION 5 (SQUARE LOSS)
% ================================================================
We then verify assumptions for each loss to complete the proofs for Propositions~\ref{prop:star-finite-reg} and \ref{prop:star-finite-riesz} applying the rest of the results from this Appendix section.

\subsubsection{Proof of Proposition~\ref{prop:star-finite-reg} (square loss)}
\begin{proof}
We verify \ref{A:bounded}--\ref{A:lip} for the squared loss
$\ell(O,f)=(Y-f(W))^2$ with $|Y|\le B_Y$ and
$\|f\|_\infty\le B_F$.
Write $\Delta:=f-f^*$ and $\bar B:=B_Y+B_F$.
The excess loss and offset quantity are
\[
\cE_f = -2\,\Delta\,(Y-f^*) + \Delta^2,
\qquad
\cQ_f = -2\,\Delta\,(Y-f^*) + (1+c)\,\Delta^2.
\]

\smallskip
\noindent\emph{Checking \ref{A:bounded}.}
For $f\in\conv_2(\cF)$,
$\|\Delta\|_\infty\le 2B_F$ and $|Y-f^*|\le\bar B$, so
\[
b_F = B_F,\qquad
b_Q
\;\le\; 4B_F\bar B + 4(1+c)B_F^2
\;\le\; 9\,\bar B^2.
\]

\smallskip
\noindent\emph{Checking \ref{A:var}.}
Since $\cQ_f = \Delta\{-2(Y-f^*)+(1+c)\Delta\}$
and the factor in braces is bounded by $2\bar B + 2(1+c)B_F \le 5\bar B$,
\[
P\cQ_f^2
\;\le\; P[\Delta^2\cdot 25\bar B^2]
\;=\; 25\,\bar B^2\,\|\Delta\|_{2,P}^2.
\]
Thus $B_0 = 25\,\bar B^2$.

\smallskip
\noindent\emph{Checking \ref{A:inc}.}
For $f,g\in\conv_2(\cF)$ write $\delta_{fg}:=f-g$.  Then
\[
\cQ_f - \cQ_g
= \delta_{fg}\bigl\{-(2Y-f-g) + c(f+g-2f^*)\bigr\}.
\]
The factor in braces is bounded by
$2\bar B+4cB_F\le 3\bar B\le 9\bar B$.
Hence $P(\cQ_f-\cQ_g)^2 \le 81\,\bar B^2\,\|f-g\|_{2,P}^2$,
giving $B_1 = 81\,\bar B^2$.

\smallskip
\noindent\emph{Checking \ref{A:lip}.}
Along $f_\lambda = f_1+\lambda(f_2-f_1)$,
$f_\lambda - f_{\lambda'} = (\lambda-\lambda')(f_2-f_1)$,
so $|\cQ_{f_\lambda}-\cQ_{f_{\lambda'}}|
\le |\lambda-\lambda'|\cdot 2B_F\cdot 8\bar B
= 16\,B_F\bar B\,|\lambda-\lambda'|$.
Thus $L = 16\,B_F\bar B \le 16\,\bar B^2$.

% \smallskip
% \noindent\emph{Substituting into \eqref{eq:combined-A}.}
% All constants are $O(\bar B^2)$, so the dominant term is
% \[
% \frac{C\,\bar B^2\,\log(nM^2/\delta)}{n}
% \;=\;
% \frac{C\,(B_Y+B_F)^2\,\log(nM^2/\delta)}{n}.
% \]
% Since $\log(nM^2/\delta)\le C'\log(M^2/\delta)$
% for $n\ge 2$ (absorbing the $\log n$ factor into the
% universal constant), this yields \eqref{eq:star-finite-reg-oracle}.

\smallskip
\noindent\emph{Substituting into \eqref{eq:combined-B}.}
For $M\ge2$, $B_0+B_1+b_Q+L=O(\bar B^2)$ and $c=1/9$, so \eqref{eq:combined-B} gives the final rate (the case $M=1$ is exact):
\[
P\cE_{\hat f}
\;\le\;
C\,(B_Y+B_F)^2\,\frac{\log(M/\delta)}{n}.
\]
% with a slightly better $\log(M)$ instead of $\log(nM^2)$.

\smallskip
The $L^2(P)$ oracle inequality follows from the identity
$P\cE_{\hat f} = \|\hat f - g_0\|_{2,P}^2
- \|f^*-g_0\|_{2,P}^2$ (by the excess-risk identity of
Lemma~\ref{lem:excess-risk-identities}),
after minimizing over $f^*\in\cF$.
\end{proof}

% ================================================================
%  VERIFICATION FOR PROPOSITION 6 (RIESZ LOSS)
% ================================================================

\subsubsection{Proof of Proposition~\ref{prop:star-finite-riesz} (Riesz loss)}
\begin{proof}
We verify \ref{A:bounded}--\ref{A:lip} for the Riesz loss
$\ell(O,f)=f(W)^2-2\,m(O,f)$ with $\|f\|_\infty\le B_A$,
$\E[m(O,h)^2]\le C_m\,\|h\|_{2,P}^2$ for
$h\in\mathrm{span}(\mathcal{A})$,
and $\sup_{a\in\mathrm{star}(\mathcal{A},a_k)}|m(O,a)|\le B_m$ a.s.
Write $\Delta := f - f^*$.
The excess loss and offset quantity are
\[
\cE_f = 2\,f^*\,\Delta + \Delta^2 - 2\,m(O,\Delta),
\qquad
\cQ_f = 2\,f^*\,\Delta + (1+c)\,\Delta^2 - 2\,m(O,\Delta).
\]

\smallskip
\noindent\emph{Checking \ref{A:bounded}.}
For $f\in\conv_2(\mathcal{A})$, $\|\Delta\|_\infty\le 2B_A$
and $|m(O,\Delta)|\le 2B_m$ (using boundedness of $m$ on star
segments and linearity), so
\[
b_F = B_A,
\qquad
b_Q \;\le\; 4B_A^2 + 4(1+c)B_A^2 + 4B_m
\;\le\; 9B_A^2 + 4B_m.
\]

\smallskip
\noindent\emph{Checking \ref{A:var}.}
We bound $P\cQ_f^2$ by expanding:
\[
P\cQ_f^2
\;\le\; 3\bigl\{
4B_A^2\,P\Delta^2
+ 4(1+c)^2 B_A^2\,P\Delta^2
+ 4\,P[m(O,\Delta)^2]
\bigr\}
\;\le\; 3(12B_A^2 + 4C_m)\,\|\Delta\|_{2,P}^2,
\]
using $P[m(O,\Delta)^2]\le C_m\|\Delta\|_{2,P}^2$
(since $\Delta\in\mathrm{span}(\mathcal{A})$ for
$f,f^*\in\mathcal{A}$).
Thus $B_0 = 3(12B_A^2 + 4C_m) = 36B_A^2 + 12C_m$.

\smallskip
\noindent\emph{Checking \ref{A:inc}.}
For $f,g\in\conv_2(\mathcal{A})$ with $\delta_{fg}:=f-g$,
\[
\cQ_f - \cQ_g
= \delta_{fg}\{2f^* + (1+c)(f+g-2f^*)\} - 2\,m(O,\delta_{fg}).
\]
The first term satisfies
$|\delta_{fg}\{\cdots\}| \le 2B_A\cdot|\delta_{fg}|
+ 4(1+c)B_A\cdot|\delta_{fg}|
\le 7B_A|\delta_{fg}|$, which gives
$P[\delta_{fg}^2\cdot 49B_A^2] = 49B_A^2\|f-g\|_{2,P}^2$.
The second term gives $4P[m(O,\delta_{fg})^2]\le 4C_m\|f-g\|_{2,P}^2$.
By Cauchy--Schwarz on the cross term,
\[
B_1 \;\le\; 2(49B_A^2 + 4C_m) = 98B_A^2 + 8C_m.
\]

\smallskip
\noindent\emph{Checking \ref{A:lip}.}
Along $f_\lambda = f_1 + \lambda(f_2-f_1)$,
linearity of $m$ gives
$m(O,f_\lambda)-m(O,f_{\lambda'})
= (\lambda-\lambda')\,m(O,f_2-f_1)$,
so
\[
|\cQ_{f_\lambda}-\cQ_{f_{\lambda'}}|
\;\le\; |\lambda-\lambda'|\cdot\bigl(
7B_A\cdot 2B_A + 2\cdot 2B_m\bigr)
\;=\; |\lambda-\lambda'|\cdot(14B_A^2 + 4B_m).
\]
Thus $L = 14B_A^2 + 4B_m$.

% \smallskip
% \noindent\emph{Substituting into \eqref{eq:combined-A}.}
% The dominant terms are $B_0/c = O(B_A^2 + C_m)$ and
% $b_Q = O(B_A^2 + B_m)$, so
% \[
% \frac{C\,\{(B_A^2+C_m)\vee(B_A^2+B_m)\}\,\log(nM^2/\delta)}{n}
% \;\le\;
% \frac{C\,(B_A^2+C_m+B_m)\,\log(nM^2/\delta)}{n}.
% \]
% Since $(B_A^2+B_m)^2 \ge B_A^2 + B_m$ when
% $B_A^2+B_m\ge 1$ (and the bound is vacuous otherwise),
% this is consistent with \eqref{eq:star-finite-riesz-oracle} after
% absorbing universal constants.

\smallskip
\noindent\emph{Substituting into \eqref{eq:combined-B}.}
Since $B_0+B_1+b_Q+L=O(B_A^2+C_m+B_m)$ and $c=1/9$, substituting these constants yields (for $M\ge2$, with $M=1$ exact)
\[
P\cE_{\hat a}
\;\le\;
C\,(B_A^2+C_m+B_m)\,\frac{\log(M/\delta)}{n}.
\]
The $L^2(P)$ oracle inequality follows from
Lemma~\ref{lem:excess-risk-identities}
exactly as in the squared-loss case.
\end{proof}

\section{Proofs for Section~\ref{sec:cross-fold}}
\label{app:proofs-cross-fold}

We use the fold notation of Appendix~\ref{app:proofs-sec2}: $n_k=|I_k|$, $p_k=n_k/n$, and the fold-wise empirical measure $\Pn^{(k)}$ and process $\Gn^{(k)}=\sqrt{n_k}\,(\Pn^{(k)}-P)$.
Write $D_k:=\{O_i:i\in I_k\}$ and $D_k^c:=\{O_i:i\in I_k^c\}$.
Conditioning on $D_k^c$ fixes the fold-$k$ menus, and the observations in $D_k$ are then i.i.d.\ draws from $P$.
For a function $h_u$ indexed by a random $\hat u$, we write $Ph_{\hat u}$ for $(Ph_u)|_{u=\hat u}$: the integral over a fresh observation, with the index held fixed.
Every family indexed by a continuum below is continuous in its index in the sup norm, so its suprema equal suprema over countable dense subsets and are measurable.

\subsection{Selection outside the folds}

The first lemma isolates what is lost when a tuning index is chosen on the full sample.
The bias term needs nothing new.
The empirical-process term, which strict cross-fitting disposes of by conditioning, must instead be bounded uniformly over the indices that could be selected.

\begin{lemma}[Selection outside the folds]
\label{lem:selection-outside}
Let $\mathcal U$ be an index set.
For each fold $k$, let the family $\{(g_u^{(-k)},\alpha_u^{(-k)}):u\in\mathcal U\}$ of pairs in $\cH\times\cH$ be determined by $D_k^c$, and let $\hat u\in\mathcal U$ be a random index that may depend on the entire sample.
Set $\hat g^{(-k)}:=g_{\hat u}^{(-k)}$ and $\hat\alpha^{(-k)}:=\alpha_{\hat u}^{(-k)}$, define $\hat\theta_0$ and $\hat\sigma_0^2$ as in Section~\ref{sec:pipeline-inference}, assume $0<\sigma_0^2<\infty$, and write
\[
h_u^{(k)}:=\psi^\star\bigl(\cdot\,;g_u^{(-k)},\alpha_u^{(-k)}\bigr)-\psi^\star(\cdot\,;g_0,\alpha_0).
\]
Suppose there are sets $\mathcal U_k=\mathcal U_{k,n}\subseteq\mathcal U$, determined by $D_k^c$, such that $\Pp(\hat u\in\mathcal U_k\text{ for every }k)\to1$ and, for every $k$,
\begin{enumerate}[label=(\alph*)]
\item $\sqrt n\,\sup_{u\in\mathcal U_k}\norm{g_u^{(-k)}-g_0}_{P,2}\norm{\alpha_u^{(-k)}-\alpha_0}_{P,2}=o_\Pp(1)$;
\item $\E\bigl[\sup_{u\in\mathcal U_k}\bigl|\Gn^{(k)}h_u^{(k)}\bigr|\bigm| D_k^c\bigr]=o_\Pp(1)$.
\end{enumerate}
Then the expansion \eqref{eq:pipeline-expansion} holds, and $\sigma_0^{-1}\sqrt n(\hat\theta_0-\theta_0)\Rightarrow N(0,1)$.
If, in addition,
\begin{enumerate}[label=(\alph*),resume]
\item $\sup_{u\in\mathcal U_k}\Pn^{(k)}\bigl(h_u^{(k)}\bigr)^2=o_\Pp(1)$ for every $k$,
\end{enumerate}
then $\hat\sigma_0^2$ is consistent for $\sigma_0^2$.
\end{lemma}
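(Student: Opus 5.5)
The plan is to reuse the three-term decomposition from the proof of Theorem~\ref{thm:dml-expansion}, with $\hat u$ frozen only at the very end. Work on the event $E_n:=\{\hat u\in\mathcal U_k\text{ for every }k\}$; since $\Pp(E_n)\to1$, any remainder that is $o_\Pp(1)$ on $E_n$ is $o_\Pp(1)$ unconditionally. Write
\[
\hat\theta_0-\theta_0=(\Pn-P)\psi^\star(\cdot;g_0,\alpha_0)+\sum_{k}p_k(\Pn^{(k)}-P)h^{(k)}_{\hat u}+\sum_k p_k\,Ph^{(k)}_{\hat u}.
\]
Here $Ph^{(k)}_{\hat u}$ denotes the integral with the index held fixed, following the convention of this appendix.

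For the bias term, apply Proposition~\ref{prop:pipeline-mixed-bias}(ii) pointwise in $u$. Since $g_u^{(-k)},\alpha_u^{(-k)}\in\cH$, this gives $Ph_u^{(k)}=-P[(g_u^{(-k)}-g_0)(\alpha_u^{(-k)}-\alpha_0)]$. By Cauchy--Schwarz,
\[
\sqrt n\,|Ph^{(k)}_{\hat u}|\le \sqrt n\sup_{u\in\mathcal U_k}\norm{g_u^{(-k)}-g_0}_{P,2}\norm{\alpha_u^{(-k)}-\alpha_0}_{P,2}
\]
on $E_n$, and this is $o_\Pp(1)$ by (a).

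For the empirical-process term, note that on $E_n$ we have $\sqrt n\,p_k|(\Pn^{(k)}-P)h^{(k)}_{\hat u}|\le \sqrt{p_k}\sup_{u\in\mathcal U_k}|\Gn^{(k)}h^{(k)}_u|$. Because $\mathcal U_k$ and the family are $D_k^c$-measurable, conditional Markov applies: for each $\epsilon>0$,
\[
\Pp\Bigl(\sup_{u\in\mathcal U_k}|\Gn^{(k)}h_u^{(k)}|>\epsilon\Bigm| D_k^c\Bigr)\le \min\{1,\epsilon^{-1}\E[\sup_{u\in\mathcal U_k}|\Gn^{(k)}h_u^{(k)}|\mid D_k^c]\}.
\]
By (b) the right side tends to zero in probability. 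It is bounded by one, so dominated convergence sends its expectation to zero. This is the step that replaces the conditioning argument of strict cross-fitting, and the main subtlety is exactly this: handling the random, data-dependent index through a supremum over a $D_k^c$-measurable set rather than through conditioning alone. Measurability of the supremum is ensured by the countable-dense-subset remark at the start of the appendix. With $K$ fixed, summing over folds gives the expansion \eqref{eq:pipeline-expansion}. The classical CLT for the i.i.d.\ terms $\psi^\star(O_i;g_0,\alpha_0)-\theta_0$, together with $0<\sigma_0^2<\infty$ and Slutsky's theorem, then gives the normal limit.

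For the variance, follow the last part of the proof of Theorem~\ref{thm:dml-expansion}. On $E_n$, $\Pn^{(k)}(\hat\psi-\psi_0)^2=\Pn^{(k)}(h^{(k)}_{\hat u})^2\le\sup_{u\in\mathcal U_k}\Pn^{(k)}(h^{(k)}_u)^2=o_\Pp(1)$ by (c). This quantity directly replaces \eqref{eq:var-lln-foldwise}, which previously needed conditioning. The ordinary LLN gives $\Pn\psi_0^2\to P\psi_0^2$. Cauchy--Schwarz as in \eqref{eq:var-mean-foldwise} then yields $\Pn\hat\psi^2=P\psi_0^2+o_\Pp(1)$, and $\hat\theta_0-\theta_0=o_\Pp(1)$ from the expansion finishes the argument via $\hat\sigma_0^2=\Pn\hat\psi^2-\hat\theta_0^2$.
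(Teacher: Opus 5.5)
Your proposal is correct and follows the paper's proof essentially step for step. You work on $E_n=\{\hat u\in\mathcal U_k\text{ for every }k\}$, kill the bias term pointwise in $u$ via Proposition~\ref{prop:pipeline-mixed-bias}(ii) and (a), control $\sqrt{p_k}\sup_{u\in\mathcal U_k}|\Gn^{(k)}h_u^{(k)}|$ by conditional Markov plus bounded convergence using (b), and get variance consistency from (c) with the same Cauchy--Schwarz comparison of $\Pn\hat\psi^2$ with $\Pn\psi_0^2$. One minor point: for an abstract index set $\mathcal U$, measurability of the supremum is not supplied by the countable-dense-subset remark, which covers only index-continuous families; it is implicitly presupposed by condition~(b) itself.
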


\begin{proof}
Let $E_n:=\{\hat u\in\mathcal U_k\text{ for every }k\}$; by assumption $\Pp(E_n)\to1$, so it suffices to argue on $E_n$.
Write $\psi_0:=\psi^\star(\cdot\,;g_0,\alpha_0)$.
Since $\hat\theta_0=\sum_k p_k\Pn^{(k)}\bigl(\psi_0+h^{(k)}_{\hat u}\bigr)$ and $P\psi_0=\theta_0$,
\[
\sqrt n\bigl(\hat\theta_0-\theta_0\bigr)
=\Gn\psi_0+\sqrt n\,R_{1n}+\sqrt n\,R_{2n},
\]
where
\[
\sqrt n\,R_{1n}:=\sum_{k=1}^K\sqrt{p_k}\;\Gn^{(k)}h^{(k)}_{\hat u},
\qquad
R_{2n}:=\sum_{k=1}^K p_k\,Ph^{(k)}_{\hat u}.
\]

\emph{Bias term.}
For every $u$, both nuisances lie in $\cH$, so Proposition~\ref{prop:pipeline-mixed-bias}(ii) gives
$Ph_u^{(k)}=-P\bigl[(g_u^{(-k)}-g_0)(\alpha_u^{(-k)}-\alpha_0)\bigr]$.
By Cauchy--Schwarz, on $E_n$,
\[
\sqrt n\,|R_{2n}|\le\max_{1\le k\le K}\sqrt n\,\sup_{u\in\mathcal U_k}\norm{g_u^{(-k)}-g_0}_{P,2}\norm{\alpha_u^{(-k)}-\alpha_0}_{P,2}=o_\Pp(1)
\]
by (a).
This step uses no independence between $\hat u$ and the data.

\emph{Empirical-process term.}
On $E_n$, $\sqrt n\,|R_{1n}|\le\sum_k S_k$ with $S_k:=\sup_{u\in\mathcal U_k}|\Gn^{(k)}h_u^{(k)}|$.
Fix $\eta>0$.
By the conditional Markov inequality, $\Pp(S_k>\eta\mid D_k^c)\le\min\{1,\eta^{-1}\E[S_k\mid D_k^c]\}$, which is $o_\Pp(1)$ by (b); since it is bounded by one, dominated convergence gives $\Pp(S_k>\eta)\to0$.
Hence $\sqrt n\,R_{1n}=o_\Pp(1)$, which proves \eqref{eq:pipeline-expansion}.
The summands of $\Gn\psi_0$ are i.i.d.\ with variance $\sigma_0^2\in(0,\infty)$, so the central limit theorem gives the normal limit.

\emph{Variance.}
Write $\hat\psi_i=\psi_0(O_i)+H_i$, where $H_i:=h^{(k(i))}_{\hat u}(O_i)$.
On $E_n$,
\[
\frac1n\sum_{i=1}^nH_i^2=\sum_{k=1}^Kp_k\Pn^{(k)}\bigl(h^{(k)}_{\hat u}\bigr)^2\le\max_{k}\sup_{u\in\mathcal U_k}\Pn^{(k)}\bigl(h_u^{(k)}\bigr)^2=o_\Pp(1)
\]
by (c).
Cauchy--Schwarz gives
\[
\Bigl|\Pn\hat\psi^2-\Pn\psi_0^2\Bigr|
\le\Bigl(\frac1n\sum_{i=1}^nH_i^2\Bigr)^{1/2}
\Bigl\{\Bigl(\frac1n\sum_{i=1}^nH_i^2\Bigr)^{1/2}+2\bigl(\Pn\psi_0^2\bigr)^{1/2}\Bigr\}
=o_\Pp(1),
\]
and $\Pn\psi_0^2\to P\psi_0^2$ by the law of large numbers.
Since $\hat\sigma_0^2=\Pn\hat\psi^2-\hat\theta_0^2$ and $\hat\theta_0\to\theta_0$ in probability, $\hat\sigma_0^2\to P\psi_0^2-\theta_0^2=\sigma_0^2$ in probability.
\end{proof}

\subsection{Chaining over localized pairs of segments}

We use a Bernstein-type maximal inequality \cite[Lemmas~2.2.9 and~2.2.10]{vanderVaartWellner1996}.
Conditional on $D_k^c$, let $f_1,\dots,f_N$ be fixed functions with $\norm{f_j}_{P,2}\le\sigma$ and $\norm{f_j}_\infty\le\beta$.
Then, for a universal constant $C_0$,
\begin{equation}
\E\Bigl[\max_{j\le N}\bigl|\Gn^{(k)}f_j\bigr|\Bigm| D_k^c\Bigr]
\le C_0\Bigl\{\sigma\sqrt{\log(1+N)}+\beta\,\frac{\log(1+N)}{\sqrt{n_k}}\Bigr\}.
\label{eq:bernstein-max}
\end{equation}

\begin{lemma}[Chaining over localized pairs of segments]
\label{lem:pair-chaining}
Fix $k$ and condition on $D_k^c$.
Let $\mathcal U=\bigcup_{s=1}^S\mathcal U_s$, where each piece $\mathcal U_s=\{(s,\lambda,\mu):\lambda\in J_s,\ \mu\in J_s'\}$ is indexed by a product of closed intervals $J_s,J_s'\subseteq[0,1]$ of lengths $\ell_s,\ell_s'$.
Let $\{h_u:u\in\mathcal U\}$ be fixed functions and $\rho,b,L$ constants such that, for all $u$ and for all $u=(s,\lambda,\mu)$ and $u'=(s,\lambda',\mu')$ in a common piece,
\begin{align}
&\norm{h_u}_{P,2}\le\rho,
\qquad
\norm{h_u}_\infty\le b,
\label{eq:pc-size}\\
&\norm{h_u-h_{u'}}_{P,2}\le\rho\Bigl(\frac{|\lambda-\lambda'|}{\ell_s}+\frac{|\mu-\mu'|}{\ell_s'}\Bigr),
\label{eq:pc-l2}\\
&\norm{h_u-h_{u'}}_\infty\le L\bigl(|\lambda-\lambda'|+|\mu-\mu'|\bigr),
\label{eq:pc-lip}
\end{align}
with the convention $0/0:=0$.
Then, for a universal constant $C$,
\[
\E\Bigl[\sup_{u\in\mathcal U}\bigl|\Gn^{(k)}h_u\bigr|\Bigm| D_k^c\Bigr]
\le C\Bigl\{\rho\sqrt{\log(S+1)}+(b+L)\,\frac{\log(S+1)}{\sqrt{n_k}}\Bigr\}.
\]
\end{lemma}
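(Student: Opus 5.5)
We prove the bound by a multiscale chaining argument, run separately on each two-dimensional piece and then combined by a union bound across the $S$ pieces. The argument mirrors the proof of Lemma~\ref{lem:mixed-tail-chaining-bound}, with the Bernstein maximal inequality \eqref{eq:bernstein-max} applied at every level of the chain. Throughout we condition on $D_k^c$, so the $h_u$ are fixed functions and the observations in $D_k$ are i.i.d.\ draws from $P$.

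\emph{Nets.} For each level $m\ge0$ and each piece $s$, place a dyadic grid of $2^m+1$ equally spaced points on $J_s$ and on $J_s'$. This gives a product grid $A_{m,s}$ of at most $(2^m+1)^2$ indices, and we let $A_m:=\bigcup_s A_{m,s}$, so that $|A_m|\le S(2^m+1)^2$. Define $\pi_m(u)$ to be the nearest grid point in the same piece. Then $|\lambda-\lambda_m|\le \ell_s2^{-m}$ and $|\mu-\mu_m|\le\ell_s'2^{-m}$. By \eqref{eq:pc-l2}, $\norm{h_u-h_{\pi_m u}}_{P,2}\le 2\rho\,2^{-m}$, with the normalization by $\ell_s,\ell_s'$ making this scale-free. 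By \eqref{eq:pc-lip}, since $\ell_s,\ell_s'\le1$, $\norm{h_u-h_{\pi_m u}}_\infty\le 2L\,2^{-m}$. Hence the links $h_{\pi_m u}-h_{\pi_{m-1}u}$ have $L^2(P)$ norm at most $6\rho2^{-m}$ and sup norm at most $6L2^{-m}$. The sup-norm bound also shows that $h_{\pi_m u}\to h_u$ uniformly, so the telescoping identity
\[
h_u=h_{\pi_0u}+\sum_{m\ge1}\bigl(h_{\pi_mu}-h_{\pi_{m-1}u}\bigr)
\]
holds pointwise. Because each $\Gn^{(k)}$ is a finite linear combination of point evaluations and $P$-integrals, the process $\Gn^{(k)}$ passes through the sum.

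\emph{Level bounds.} At level $0$, the maximum runs over at most $4S$ functions with $L^2(P)$ norm at most $\rho$ and sup norm at most $b$, by \eqref{eq:pc-size}. Inequality \eqref{eq:bernstein-max} therefore gives the bound $C_0\{\rho\sqrt{\log(4S+1)}+b\log(4S+1)/\sqrt{n_k}\}$. At level $m\ge1$, the number of distinct links is at most $|A_m|\le S(2^m+1)^2$, because each link is determined by $\pi_m u$ within its piece. The level-$m$ contribution is therefore at most
\[
C_0\Bigl\{6\rho2^{-m}\sqrt{\log(S+1)+Cm}+6L2^{-m}\frac{\log(S+1)+Cm}{\sqrt{n_k}}\Bigr\}.
\]
Summing over $m$ uses $\sum_m 2^{-m}\sqrt{a+m}\lesssim\sqrt a$ and $\sum_m2^{-m}(a+m)\lesssim a$ for $a=\log(S+1)\ge\log 2$. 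This yields
\[
C\Bigl\{\rho\sqrt{\log(S+1)}+(b+L)\frac{\log(S+1)}{\sqrt{n_k}}\Bigr\},
\]
after absorbing $\log(4S+1)\lesssim\log(S+1)$.

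\emph{Measurability and the main obstacle.} We first exchange the supremum with the infinite chain, using $\E\sup_u|\sum_m X_{m,u}|\le\sum_m\E\max|X_{m,\cdot}|$. This needs the supremum to be measurable, and the section's continuity convention supplies that: the supremum may be taken over a countable dense subset. We then use monotone convergence over the levels. The step that requires care is degenerate pieces, where $\ell_s=0$ or $\ell_s'=0$. There the grid collapses to a single point and \eqref{eq:pc-l2}, under the convention $0/0:=0$, still gives zero error in that coordinate, so the counts above remain valid upper bounds. Beyond that, the only work is bookkeeping of constants.
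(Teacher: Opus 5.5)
Your argument is correct and is essentially the paper's proof: dyadic product grids on each piece scaled by $\ell_s,\ell_s'$, a telescoping chain, the Bernstein maximal inequality \eqref{eq:bernstein-max} at each level with entropy $\log(S+1)+O(m)$, and geometric summation of the resulting bounds. One small slip is in the link count. With nearest-point rounding the projections do not nest, so a link is not determined by $\pi_m u$ alone: at an odd grid index, $\pi_{m-1}u$ can be either neighbouring coarse point. There are still at most $4|A_m|$ links, which changes only constants; the paper avoids the issue by rounding down, so that $\pi_{m-1}\circ\pi_m=\pi_{m-1}$.
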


\begin{proof}
Write $J_s=[\lambda_s,\lambda_s+\ell_s]$ and $J_s'=[\mu_s,\mu_s+\ell_s']$.
For $m\ge0$, let $A_m$ be the dyadic grid
$\{(s,\lambda_s+\ell_si2^{-m},\mu_s+\ell_s'i'2^{-m}):s\le S,\ 0\le i,i'\le2^m\}$,
so that $|A_m|\le S\,4^{m+1}$, and let $\pi_m(u)$ round each coordinate of $u=(s,\lambda,\mu)$ down to the grid.
Rounding down to dyadic grids nests, so $\pi_{m-1}(\pi_m(u))=\pi_{m-1}(u)$.
Since $|\lambda-\lambda'|\le\ell_s2^{-m}\le2^{-m}$ for $\lambda'$ the rounded coordinate, and likewise for $\mu$, conditions \eqref{eq:pc-l2} and \eqref{eq:pc-lip} give
\[
\norm{h_u-h_{\pi_m(u)}}_{P,2}\le2\rho\,2^{-m},
\qquad
\norm{h_u-h_{\pi_m(u)}}_\infty\le2L\,2^{-m}.
\]
Thus $h_u=h_{\pi_0(u)}+\sum_{m\ge1}\bigl(h_{\pi_m(u)}-h_{\pi_{m-1}(u)}\bigr)$, with the series converging in the sup norm, and
\[
\sup_{u\in\mathcal U}\bigl|\Gn^{(k)}h_u\bigr|
\le\max_{v\in A_0}\bigl|\Gn^{(k)}h_v\bigr|
+\sum_{m\ge1}\max_{v\in A_m}\bigl|\Gn^{(k)}\bigl(h_v-h_{\pi_{m-1}(v)}\bigr)\bigr|.
\]
The increments in the $m$th maximum have $L^2(P)$ norm at most $4\rho2^{-m}$ and sup norm at most $4L2^{-m}$, and there are at most $S4^{m+1}$ of them.
Apply \eqref{eq:bernstein-max} to each maximum and use $\log(1+S4^{m+1})\le4(m+1)\log(S+1)$:
\[
\E\Bigl[\sup_{u}\bigl|\Gn^{(k)}h_u\bigr|\Bigm| D_k^c\Bigr]
\le C_0\sum_{m\ge0}2^{4-m}(m+1)
\Bigl\{\rho\sqrt{\log(S+1)}+(b+L)\frac{\log(S+1)}{\sqrt{n_k}}\Bigr\}.
\]
The series converges, which proves the lemma.
\end{proof}

\subsection{Proof of Proposition~\ref{prop:pooled-oracle}}

\begin{proof}
We prove \eqref{eq:pooled-oracle-reg}.
The proof of \eqref{eq:pooled-oracle-riesz} is identical, with the constants of the proof of Proposition~\ref{prop:star-finite-riesz} in place of those of Proposition~\ref{prop:star-finite-reg}.
Fix a comparator index $j\in\{1,\dots,M\}$, and write $f_*^{(-k)}:=\hat f_j^{(-k)}$, $\ell_f:=\ell(\cdot,f)$ for the square loss, and $c:=1/9$.
For $f\in\cH$ set
\[
\cE^{(k)}_f:=\ell_f-\ell_{f_*^{(-k)}},
\qquad
\cQ^{(k)}_f:=\cE^{(k)}_f+c\bigl(f-f_*^{(-k)}\bigr)^2.
\]

\emph{Step 1: a pooled star margin.}
Consider the augmented observations $\tilde O_i:=(O_i,k(i))$, and let $\mathcal B$ be the vector space of real functions of $(o,k)$.
The stacked candidates $\bar f_j(o,k):=\hat f_j^{(-k)}(o)$ belong to $\mathcal B$, and $\bar R_n$ in \eqref{eq:pooled-reg} is the average over $\tilde O_1,\dots,\tilde O_n$ of the summands $\bar f\mapsto\{Y_i-\bar f(\tilde O_i)\}^2$.
The pooled star estimator is therefore the output of the star algorithm of \cite[Section~2.1]{vijaykumar2021localization}, applied to the set $\{\bar f_1,\dots,\bar f_M\}$ and the function $\bar R_n$ on $\mathcal B$.
We check the two facts that the star margin requires.

First, $\bar R_n$ is $(\mu,d_n)$-convex on $\mathcal B$ in the sense of \cite[Definition~1]{vijaykumar2021localization}, with $\mu(z)=z^2$ and the pooled empirical pseudometric
\[
d_n(\bar f,\bar g)^2:=\frac1n\sum_{i=1}^n\bigl\{\bar f(\tilde O_i)-\bar g(\tilde O_i)\bigr\}^2
=\sum_{k=1}^Kp_k\,\Pn^{(k)}\bigl\{\bar f(\cdot,k)-\bar g(\cdot,k)\bigr\}^2.
\]
Each summand is a convex quadratic in $\bar f$, and its Bregman divergence at $\bar g$ is exactly $\{\bar f(\tilde O_i)-\bar g(\tilde O_i)\}^2$.
Averaging these identities gives the claim; this is the argument of \cite[Lemma~5]{vijaykumar2021localization}, carried out on the augmented sample.
For the Riesz loss the summand is $\bar a\mapsto\bar a(\tilde O_i)^2-2m\bigl(O_i,\bar a(\cdot,k(i))\bigr)$.
Lemma~5 of \cite{vijaykumar2021localization} is stated for losses that depend on $\bar a$ only through the prediction $\bar a(\tilde O_i)$, which this summand does not; but the linear term has zero Bregman divergence, so the summand again satisfies Definition~1 with the same $\mu$ and $d_n$, and the averaging step is unchanged.
The same remark applies to Lemma~\ref{lemma:star-margin} itself when it is used for the Riesz loss.

Second, Proposition~4 of \cite{vijaykumar2021localization} is a deterministic statement: for any $(\mu,d)$-convex function on a vector space and any set $S$, the output $\tilde x$ of the star algorithm satisfies $f(x)-f(\tilde x)\ge\mu\bigl(d(x,\tilde x)/3\bigr)$ for every $x\in S$.
Its proof uses the geometry of the lower sets of $f$ and the triangle inequality for $d$, and nothing about the sample.
Applied to $f=\bar R_n$, $S=\{\bar f_1,\dots,\bar f_M\}$, and $x=\bar f_j$, and written fold by fold, it gives
\begin{equation}
\sum_{k=1}^Kp_k\,\Pn^{(k)}\cQ^{(k)}_{f^{(-k)}_{\hat\tau_g}}\le0.
\label{eq:pooled-margin}
\end{equation}

\emph{Step 2: reduction to fold-wise offset processes.}
Adding $P$ to both sides of \eqref{eq:pooled-margin} and using $P\cQ^{(k)}_f=P\cE^{(k)}_f+c\norm{f-f_*^{(-k)}}_{P,2}^2$,
\[
\sum_{k=1}^Kp_k\,P\cE^{(k)}_{f^{(-k)}_{\hat\tau_g}}
\le\sum_{k=1}^Kp_k\Bigl\{(P-\Pn^{(k)})\cQ^{(k)}_{f^{(-k)}_{\hat\tau_g}}-c\norm{f^{(-k)}_{\hat\tau_g}-f_*^{(-k)}}_{P,2}^2\Bigr\}
\le\sum_{k=1}^Kp_k\,\Xi_k,
\]
where
\[
\Xi_k:=\sup_{f\in\conv_2(\cF^{(-k)})}\Bigl\{(P-\Pn^{(k)})\cQ^{(k)}_f-c\norm{f-f_*^{(-k)}}_{P,2}^2\Bigr\},
\]
because $f_\tau^{(-k)}\in\conv_2(\cF^{(-k)})$ for every $\tau\in\mathcal T$.
The index $\hat\tau_g$ depends on all folds, but $\Xi_k$ does not involve it.

\emph{Step 3: fold-wise control.}
Conditional on $D_k^c$, the menu $\cF^{(-k)}$ and the comparator $f_*^{(-k)}$ are fixed, the observations in $D_k$ are i.i.d., and Assumptions~\ref{A:bounded}--\ref{A:lip} hold with the constants verified in the proof of Proposition~\ref{prop:star-finite-reg}.
These assumptions are unchanged when $\cQ^{(k)}_f$ is replaced by $-\cQ^{(k)}_f$, so Lemma~\ref{lem:second_part_peeling} applies to $(P-\Pn^{(k)})$ as well as to $(\Pn^{(k)}-P)$.
At confidence level $\delta/(KM)$ it gives
$\Xi_k\le C(B_Y+B_F)^2\log(KM^2/\delta)/n_k\le2C(B_Y+B_F)^2\log(KM/\delta)/n_k$
with conditional probability at least $1-\delta/(KM)$; integrating over $D_k^c$ removes the conditioning.
A union bound over the $K$ folds and the $M$ comparator indices yields an event of probability at least $1-\delta$ on which, for every $j$ simultaneously and with $n_k\ge n/(2K)$,
\[
\sum_{k=1}^Kp_k\,P\cE^{(k)}_{f^{(-k)}_{\hat\tau_g}}\le4C\,K\,(B_Y+B_F)^2\,\frac{\log(KM/\delta)}{n}.
\]
The union over comparators matters: the best index for the fold average depends on every training sample, and so on $D_k$, and cannot be fixed in advance of fold $k$.

\emph{Step 4: from excess risk to $L^2(P)$ error.}
All functions involved lie in $\cH$, so Lemma~\ref{lem:excess-risk-identities} gives
$P\cE^{(k)}_f=\norm{f-g_0}_{P,2}^2-\norm{f_*^{(-k)}-g_0}_{P,2}^2$.
Hence, on the same event,
\[
\sum_{k=1}^Kp_k\norm{\hat g^{(-k)}-g_0}_{P,2}^2
\le\sum_{k=1}^Kp_k\norm{\hat f_j^{(-k)}-g_0}_{P,2}^2
+4C\,K\,(B_Y+B_F)^2\,\frac{\log(KM/\delta)}{n}
\]
for every $j$.
Minimizing over $j$ gives \eqref{eq:pooled-oracle-reg}.
\end{proof}

\subsection{Proof of Theorem~\ref{thm:pooled-star}}

\begin{proof}
We verify the conditions of Lemma~\ref{lem:selection-outside} with $u=(\tau,\sigma)\in\mathcal T\times\mathcal T$, $g_u^{(-k)}:=f_\tau^{(-k)}$, and $\alpha_u^{(-k)}:=a_\sigma^{(-k)}$.
These functions lie in $\cH$, since $\cH$ is linear.
Assumption~\ref{ass:pooled-star}(i)--(ii) imply the conditions of Proposition~\ref{prop:pooled-oracle} with $B_F=B_A=B$, because $m(O,\cdot)$ is linear and so $|m(O,a)|\le B_m$ on every segment.

\emph{Step 1: localization.}
The limits in \eqref{eq:pooled-rates} are strict, so there is a sequence $C_n\to\infty$ such that
\[
\sqrt n\,C_n^2\,r_{g,n}r_{\alpha,n}\to0
\qquad\text{and}\qquad
C_n\sqrt{\log M}\,(r_{g,n}+r_{\alpha,n})\to0.
\]
Set $R_g:=C_nr_{g,n}$ and $R_\alpha:=C_nr_{\alpha,n}$.
Apply Proposition~\ref{prop:pooled-oracle} with $\delta=1/C_n$, and use $\max_kp_k^{-1}\le2K$:
with probability at least $1-2/C_n$,
\[
\max_{1\le k\le K}\norm{\hat g^{(-k)}-g_0}_{P,2}^2
\le2K\,e_g^2+2C\,K^2(B_Y+B)^2\,\frac{\log(KMC_n)}{n}.
\]
Since $e_g=O_\Pp(a_{g,n})$ and $C_n\to\infty$, the first term is at most $R_g^2/2$ with probability tending to one.
Since $\log(KMC_n)/\log M\le1+(\log K+\log C_n)/\log2=o(C_n^2)$, the second term is at most $R_g^2/2$ for all large $n$.
The same argument applies to $\hat\alpha^{(-k)}$.
Hence the event
\[
E_n:=\Bigl\{\max_{k}\norm{\hat g^{(-k)}-g_0}_{P,2}\le R_g,\ \ \max_{k}\norm{\hat\alpha^{(-k)}-\alpha_0}_{P,2}\le R_\alpha\Bigr\}
\]
has probability tending to one.
Define
\[
\mathcal U_k:=\Bigl\{(\tau,\sigma)\in\mathcal T\times\mathcal T:\ \norm{f_\tau^{(-k)}-g_0}_{P,2}\le R_g,\ \norm{a_\sigma^{(-k)}-\alpha_0}_{P,2}\le R_\alpha\Bigr\}.
\]
The set $\mathcal U_k$ is determined by $D_k^c$, and $\hat u:=(\hat\tau_g,\hat\tau_\alpha)\in\mathcal U_k$ for every $k$ on $E_n$.

\emph{Step 2: condition (a).}
On $\mathcal U_k$ the product of errors is at most $R_gR_\alpha$, and $\sqrt n\,R_gR_\alpha=\sqrt n\,C_n^2r_{g,n}r_{\alpha,n}\to0$.

\emph{Step 3: condition (b).}
Fix $k$ and condition on $D_k^c$.
For fixed $(a,b)$, the map $\lambda\mapsto\norm{f^{(-k)}_{(a,b,\lambda)}-g_0}_{P,2}$ is convex, so the set of $\lambda$ with $\norm{f^{(-k)}_{(a,b,\lambda)}-g_0}_{P,2}\le R_g$ is a closed interval $J_{ab}$, possibly empty; define $J'_{a'b'}$ from the representer menu in the same way.
Hence $\mathcal U_k$ is a union of at most $S=M^4$ pieces of the form required by Lemma~\ref{lem:pair-chaining}, one for each $(a,b,a',b')$, with $(\lambda,\mu)\in J_{ab}\times J'_{a'b'}$.

We verify \eqref{eq:pc-size}--\eqref{eq:pc-lip}.
Write $\delta g:=g_u^{(-k)}-g_0$, $\delta\alpha:=\alpha_u^{(-k)}-\alpha_0$, and $U:=Y-g_0(W)$.
Linearity of $m$ gives the exact expansion
\begin{equation}
h_u^{(k)}=m(O,\delta g)-\alpha_0\,\delta g+\delta\alpha\,U-\delta\alpha\,\delta g.
\label{eq:score-diff-expansion}
\end{equation}
Let $C_1:=\sqrt{C_m}+B_Y+5B$ and $C_2:=4(B_m+BB_Y+3B^2)$.
Mean-square continuity, $|\alpha_0|\le B$, $|U|\le B_Y+B$, and $\norm{\delta\alpha}_\infty\le2B$ give
$\norm{h_u^{(k)}}_{P,2}\le C_1\bigl(\norm{\delta g}_{P,2}+\norm{\delta\alpha}_{P,2}\bigr)\le C_1(R_g+R_\alpha)$ on $\mathcal U_k$.
For $u,u'$ in a common piece, write $\Delta g:=g_u^{(-k)}-g_{u'}^{(-k)}$ and $\Delta\alpha:=\alpha_u^{(-k)}-\alpha_{u'}^{(-k)}$.
Then
\[
h_u^{(k)}-h_{u'}^{(k)}
=m(O,\Delta g)-\alpha_0\,\Delta g+\Delta\alpha\,U-\Delta\alpha\,\delta g-\delta\alpha'\,\Delta g,
\]
where $\delta\alpha'$ is the representer error at $u'$, so $\norm{h_u^{(k)}-h_{u'}^{(k)}}_{P,2}\le C_1\bigl(\norm{\Delta g}_{P,2}+\norm{\Delta\alpha}_{P,2}\bigr)$.
On the segment $(a,b)$, $\Delta g=(\lambda-\lambda')(\hat f_a^{(-k)}-\hat f_b^{(-k)})$.
The endpoints of $J_{ab}$ both lie within $R_g$ of $g_0$, so $\ell_{ab}\norm{\hat f_a^{(-k)}-\hat f_b^{(-k)}}_{P,2}\le2R_g$, where $\ell_{ab}$ is the length of $J_{ab}$.
Hence $\norm{\Delta g}_{P,2}\le2R_g|\lambda-\lambda'|/\ell_{ab}$, and likewise $\norm{\Delta\alpha}_{P,2}\le2R_\alpha|\mu-\mu'|/\ell'_{a'b'}$.
Therefore \eqref{eq:pc-size} and \eqref{eq:pc-l2} hold with $\rho:=2C_1(R_g+R_\alpha)$.
For the sup norms, $|m(O,\delta g)|\le2B_m$, $\norm{\delta g}_\infty\le2B$, $\norm{\Delta g}_\infty\le2B|\lambda-\lambda'|$, and $|m(O,\Delta g)|\le2B_m|\lambda-\lambda'|$; inserting these bounds in \eqref{eq:score-diff-expansion} and in the increment identity gives \eqref{eq:pc-size} and \eqref{eq:pc-lip} with $b=L=C_2$.
Lemma~\ref{lem:pair-chaining} with $\log(M^4+1)\le5\log M$ now gives
\[
\E\Bigl[\sup_{u\in\mathcal U_k}\bigl|\Gn^{(k)}h_u^{(k)}\bigr|\Bigm| D_k^c\Bigr]
\le C\Bigl\{C_1C_n\,(r_{g,n}+r_{\alpha,n})\sqrt{\log M}+C_2\,\frac{\log M}{\sqrt{n_k}}\Bigr\}.
\]
The first term vanishes by the choice of $C_n$.
The second vanishes because $\log M/\sqrt n\le\sqrt n\,r_{g,n}r_{\alpha,n}\to0$.

\emph{Step 4: condition (c).}
On $\mathcal U_k$, $P(h_u^{(k)})^2\le\rho^2\to0$.
The squared family $\{(h_u^{(k)})^2\}$ satisfies \eqref{eq:pc-size}--\eqref{eq:pc-lip} with constants $(2C_2\rho,\,C_2^2,\,2C_2^2)$ in place of $(\rho,b,L)$, because $|x^2-y^2|\le2C_2|x-y|$ when $|x|,|y|\le C_2$.
Lemma~\ref{lem:pair-chaining} therefore gives $\sup_{u\in\mathcal U_k}|\Gn^{(k)}(h_u^{(k)})^2|=O_\Pp(1)$, and
\[
\sup_{u\in\mathcal U_k}\Pn^{(k)}(h_u^{(k)})^2
\le\rho^2+n_k^{-1/2}\sup_{u\in\mathcal U_k}\bigl|\Gn^{(k)}(h_u^{(k)})^2\bigr|
=o_\Pp(1).
\]

Assumption~\ref{ass:pooled-star}(i) and~(iv) give $0<\sigma_0^2<\infty$.
Lemma~\ref{lem:selection-outside} now yields the expansion, the normal limit, and consistency of $\hat\sigma_0^2$.
The statement about $\mathrm{CI}_0(1-\gamma)$ follows as after Theorem~\ref{thm:pipeline-dml}.
\end{proof}

\subsection{Convex weights}

The convex case replaces the star margin by the first-order condition of a convex program, and the chaining over segments by a covering of the simplex.
We use the $\ell_1$ norm $|w|_1:=\sum_j|w_j|$ on weights.
Because every candidate is bounded by $B$ and $m(O,\cdot)$ is linear with $|m(O,\hat f_j^{(-k)})|\le B_m$,
\begin{equation}
\bigl|f_w^{(-k)}-f_v^{(-k)}\bigr|\le B\,|w-v|_1,
\qquad
\bigl|m\bigl(O,f_w^{(-k)}-f_v^{(-k)}\bigr)\bigr|\le B_m\,|w-v|_1,
\label{eq:simplex-lipschitz}
\end{equation}
and the same bounds hold for the representer menu.
A maximal $1/n_k$-separated subset of $\Delta_M$ in $|\cdot|_1$ is a $1/n_k$-net of $\Delta_M$ lying in $\Delta_M$; the $\ell_1$ balls of radius $1/(2n_k)$ about its points are disjoint and lie in the $\ell_1$ ball of radius $1+1/(2n_k)$, so a volume comparison bounds its cardinality by $(2n_k+1)^M\le(3n_k)^M$.

\begin{lemma}[Offset bound over pairs of convex weights]
\label{lem:simplex-offset}
Fix $k$ and condition on $D_k^c$.
For $\theta=(w,w')\in\Delta_M\times\Delta_M$, let $Q_\theta$ and $\Delta_\theta$ be fixed functions, and let $v_0,b,L,B$ be constants such that, for all $\theta$ and all $\theta'=(v,v')$,
\begin{enumerate}[label=(\alph*)]
\item $|Q_\theta|\le b$, \ $PQ_\theta^2\le v_0\norm{\Delta_\theta}_{P,2}^2$, \ and $\norm{\Delta_\theta}_\infty\le2B$;
\item $|Q_\theta-Q_{\theta'}|\le L\,(|w-v|_1+|w'-v'|_1)$ and $|\Delta_\theta-\Delta_{\theta'}|\le B\,(|w-v|_1+|w'-v'|_1)$.
\end{enumerate}
Then, for every $\eta\in(0,1)$, with conditional probability at least $1-\eta$,
\[
\sup_{\theta\in\Delta_M\times\Delta_M}\Bigl\{2(P-\Pn^{(k)})Q_\theta-\norm{\Delta_\theta}_{P,2}^2\Bigr\}
\le C\,(v_0+b+L+B^2)\,\frac{M\log(3n_k)+\log(1/\eta)}{n_k}
\]
for a universal constant $C$.
\end{lemma}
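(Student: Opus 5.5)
The plan is a discretize-then-Bernstein argument with a union bound over a finite net, in place of chaining: the index set $\Delta_M\times\Delta_M$ has metric entropy of order $M\log(3n_k)$, and the target remainder already carries the factor $M\log(3n_k)$, so one net at scale $1/n_k$ suffices. Condition on $D_k^c$ throughout, so that all $Q_\theta$ and $\Delta_\theta$ are fixed and the fold-$k$ observations are i.i.d.

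\emph{Step 1 (net and discretization error).} Take a $1/n_k$-net $N$ of $\Delta_M$ in $|\cdot|_1$ of cardinality at most $(3n_k)^M$, as constructed just before the lemma. Then $N\times N$ has at most $(3n_k)^{2M}$ points. For any $\theta$, pick $\theta'\in N\times N$ with $|w-v|_1+|w'-v'|_1\le 2/n_k$. By (b), $|Q_\theta-Q_{\theta'}|\le 2L/n_k$ pointwise, so $|(P-\Pn^{(k)})(Q_\theta-Q_{\theta'})|\le 4L/n_k$. Also $|\Delta_\theta-\Delta_{\theta'}|\le 2B/n_k$ and $\norm{\Delta}_\infty\le 2B$, which give
\[
\bigl|\norm{\Delta_\theta}_{P,2}^2-\norm{\Delta_{\theta'}}_{P,2}^2\bigr|\le 8B^2/n_k .
\]
The supremum over $\theta$ is therefore at most the maximum over the net plus $C(L+B^2)/n_k$.

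\emph{Step 2 (Bernstein at each net point, with the offset absorbed).} For a fixed $\theta$, Bernstein's inequality gives, with probability at least $1-\eta'$,
\[
(P-\Pn^{(k)})Q_\theta\le\sqrt{2PQ_\theta^2\log(1/\eta')/n_k}+\tfrac{2b}{3}\log(1/\eta')/n_k .
\]
Use (a) to bound $PQ_\theta^2\le v_0\norm{\Delta_\theta}_{P,2}^2$, and apply AM--GM in the form
\[
2\sqrt{2v_0 x\,\ell/n_k}\le x+2v_0\ell/n_k .
\]
This gives $2(P-\Pn^{(k)})Q_\theta-\norm{\Delta_\theta}_{P,2}^2\le C(v_0+b)\ell/n_k$, where $\ell=\log(1/\eta')$.

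\emph{Step 3 (union bound).} Set $\eta'=\eta/(3n_k)^{2M}$, so that $\ell\le 2M\log(3n_k)+\log(1/\eta)$. Taking a union bound over the net and combining with Step 1 yields the claim with constant $C(v_0+b+L+B^2)$.

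\emph{Main obstacle.} The delicate point is Step 2. The variance must be tied to $\norm{\Delta_\theta}_{P,2}^2$ at the same net point, so that the offset absorbs the square-root term without peeling. This works because the offset in the statement is indexed by $\Delta_\theta$, whose discretization error in Step 1 is only $O(B^2/n_k)$. One should also check that measurability of the supremum follows from the sup-norm continuity in (b).
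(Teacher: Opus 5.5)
Your proposal is correct and matches the paper's proof essentially step for step. It uses the same $1/n_k$-net of $\Delta_M$ (at most $(3n_k)^{2M}$ pairs), the same deterministic discretization error $(8L+8B^2)/n_k$, Bernstein at each net point with the square-root term absorbed into $\norm{\Delta_{\theta'}}_{P,2}^2$ via $2\sqrt{2as}\le a+2s$, and a union bound with $\log(1/\eta')=2M\log(3n_k)+\log(1/\eta)$. When combining Steps 1 and 3, state explicitly that $M\log(3n_k)+\log(1/\eta)\ge 1$; this is what lets the $O((L+B^2)/n_k)$ discretization error be absorbed into the universal constant.
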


\begin{proof}
Let $N$ be the $1/n_k$-net of $\Delta_M$ described above, and for $\theta=(w,w')$ choose $\theta'=(v,v')\in N\times N$ with $|w-v|_1,|w'-v'|_1\le1/n_k$.
By (b), $|(P-\Pn^{(k)})(Q_\theta-Q_{\theta'})|\le4L/n_k$, and by (a) and (b),
\[
\bigl|\norm{\Delta_\theta}_{P,2}^2-\norm{\Delta_{\theta'}}_{P,2}^2\bigr|
\le\norm{\Delta_\theta-\Delta_{\theta'}}_{P,2}\bigl(\norm{\Delta_\theta}_{P,2}+\norm{\Delta_{\theta'}}_{P,2}\bigr)
\le\frac{8B^2}{n_k}.
\]
Hence the supremum is at most $\max_{\theta'\in N\times N}\{2(P-\Pn^{(k)})Q_{\theta'}-\norm{\Delta_{\theta'}}_{P,2}^2\}+(8L+8B^2)/n_k$.
Fix $\theta'$.
The centered variables $PQ_{\theta'}-Q_{\theta'}(O_i)$, $i\in I_k$, are bounded by $2b$ and have variance at most $v_0\norm{\Delta_{\theta'}}_{P,2}^2$, so Bernstein's inequality gives, with probability at least $1-e^{-x}$,
\[
2(P-\Pn^{(k)})Q_{\theta'}-\norm{\Delta_{\theta'}}_{P,2}^2
\le2\sqrt{\frac{2v_0\norm{\Delta_{\theta'}}_{P,2}^2\,x}{n_k}}+\frac{4bx}{3n_k}-\norm{\Delta_{\theta'}}_{P,2}^2
\le\frac{2v_0x}{n_k}+\frac{4bx}{3n_k},
\]
where the last step uses $2\sqrt{2as}\le a+2s$.
Take $x:=2M\log(3n_k)+\log(1/\eta)$ and apply a union bound over the at most $(3n_k)^{2M}$ points of $N\times N$.
Since $x\ge1$, the discretization error is absorbed into the constant.
\end{proof}

\subsubsection*{Proof of Proposition~\ref{prop:pooled-convex-oracle}}

\begin{proof}
\emph{Regression.}
Write $\hat w:=\hat w_g$, and let $w^\ast\in\Delta_M$ attain the minimum in \eqref{eq:oracle-errors-convex}; it exists by continuity and compactness, and it depends on every training sample.
Set $\Delta^{(k)}:=f_{\hat w}^{(-k)}-f_{w^\ast}^{(-k)}\in\cH$.
Because $\hat w$ minimizes the convex quadratic $\bar R_n$ over the convex set $\Delta_M$, the derivative of $t\mapsto\bar R_n(\hat w+t(w^\ast-\hat w))$ at $t=0$ is nonnegative:
\begin{equation}
\sum_{k=1}^Kp_k\,\Pn^{(k)}\Bigl[\bigl(f_{\hat w}^{(-k)}-Y\bigr)\Delta^{(k)}\Bigr]\le0.
\label{eq:convex-foc}
\end{equation}
For each $k$, expanding the square gives the identity
\[
\norm{f_{\hat w}^{(-k)}-g_0}_{P,2}^2
=\norm{f_{w^\ast}^{(-k)}-g_0}_{P,2}^2-\norm{\Delta^{(k)}}_{P,2}^2+2P\bigl[(f_{\hat w}^{(-k)}-g_0)\Delta^{(k)}\bigr].
\]
Since $g_0$ is the projection of $Y$ onto $\cH$ and $\Delta^{(k)}\in\cH$, $P[(f_{\hat w}^{(-k)}-g_0)\Delta^{(k)}]=P[(f_{\hat w}^{(-k)}-Y)\Delta^{(k)}]$.
Average the identities with weights $p_k$, and subtract twice the left side of \eqref{eq:convex-foc}, which is nonpositive:
\[
\sum_{k=1}^Kp_k\norm{\hat g^{(-k)}-g_0}_{P,2}^2
\le\bar e_g^2+\sum_{k=1}^Kp_k\Bigl\{2(P-\Pn^{(k)})Q^{(k)}_{(\hat w,w^\ast)}-\norm{\Delta^{(k)}}_{P,2}^2\Bigr\},
\]
where $Q^{(k)}_{(w,w')}:=(f_w^{(-k)}-Y)(f_w^{(-k)}-f_{w'}^{(-k)})$.
The pair $(\hat w,w^\ast)$ depends on $D_k$, so we bound the $k$th term by its supremum over all pairs, which is the quantity in Lemma~\ref{lem:simplex-offset} with $\Delta_{(w,w')}:=f_w^{(-k)}-f_{w'}^{(-k)}$.
Its conditions hold with $v_0=(B_Y+B)^2$, $b=2B(B_Y+B)$, and $L=B(B_Y+3B)$, because $|f_w^{(-k)}-Y|\le B_Y+B$, by \eqref{eq:simplex-lipschitz}, and by the decomposition $Q_\theta-Q_{\theta'}=(f_w-f_v)\Delta_\theta+(f_v-Y)(\Delta_\theta-\Delta_{\theta'})$.
Lemma~\ref{lem:simplex-offset} with $\eta=\delta/K$, a union bound over folds, and $n_k\ge n/(2K)$ give \eqref{eq:pooled-convex-reg}.

\emph{Riesz representer.}
The argument is the same with three substitutions.
The first-order condition becomes $\sum_kp_k\Pn^{(k)}[a_{\hat w}^{(-k)}\Delta^{(k)}-m(O,\Delta^{(k)})]\le0$, since the derivative of $a^2-2m(O,a)$ in the direction $-\Delta$ is $-2\{a\Delta-m(O,\Delta)\}$.
The restricted Riesz representation of Proposition~\ref{prop:pipeline-mixed-bias}(i) gives $P[(a_{\hat w}^{(-k)}-\alpha_0)\Delta^{(k)}]=P[a_{\hat w}^{(-k)}\Delta^{(k)}-m(O,\Delta^{(k)})]$.
And $Q^{(k)}_{(w,w')}:=a_w^{(-k)}\Delta_{(w,w')}-m(O,\Delta_{(w,w')})$ satisfies the conditions of Lemma~\ref{lem:simplex-offset} with $v_0=2(B^2+C_m)$, $b=2B^2+2B_m$, and $L=3B^2+B_m$.
\end{proof}

\subsubsection*{Proof of Theorem~\ref{thm:pooled-convex}}

\begin{proof}
The proof follows that of Theorem~\ref{thm:pooled-star}, with Proposition~\ref{prop:pooled-convex-oracle} in place of Proposition~\ref{prop:pooled-oracle} and a covering of the simplex in place of Lemma~\ref{lem:pair-chaining}.

\emph{Localization.}
Choose $C_n\to\infty$ such that \eqref{eq:pooled-convex-rates} still holds when $\bar r_{g,n}$ and $\bar r_{\alpha,n}$ are multiplied by $C_n$, and set $R_g:=C_n\bar r_{g,n}$ and $R_\alpha:=C_n\bar r_{\alpha,n}$.
Proposition~\ref{prop:pooled-convex-oracle} with $\delta=1/C_n$ and $\max_kp_k^{-1}\le2K$ gives, with probability at least $1-2/C_n$,
\[
\max_{1\le k\le K}\norm{\hat g^{(-k)}-g_0}_{P,2}^2
\le2K\,\bar e_g^2+2C\,K^2(B_Y+B)^2\,\frac{M\log(3n)+\log(KC_n)}{n},
\]
and the right side is at most $R_g^2$ with probability tending to one, as in Step~1 of the proof of Theorem~\ref{thm:pooled-star}.
The same holds for $\hat\alpha^{(-k)}$.
Define
\[
\mathcal U_k:=\Bigl\{(w,v)\in\Delta_M\times\Delta_M:\ \norm{f_w^{(-k)}-g_0}_{P,2}\le R_g,\ \norm{a_v^{(-k)}-\alpha_0}_{P,2}\le R_\alpha\Bigr\},
\]
so that $\hat u:=(\hat w_g,\hat w_\alpha)\in\mathcal U_k$ for every $k$ with probability tending to one.
We verify the conditions of Lemma~\ref{lem:selection-outside} with $g_u^{(-k)}:=f_w^{(-k)}$ and $\alpha_u^{(-k)}:=a_v^{(-k)}$.
Condition~(a) holds because $\sqrt n\,R_gR_\alpha\to0$.

\emph{Condition (b).}
Fix $k$ and condition on $D_k^c$.
By \eqref{eq:score-diff-expansion} and \eqref{eq:simplex-lipschitz}, and with the constants $C_1$ and $C_2$ of the proof of Theorem~\ref{thm:pooled-star},
\[
\norm{h_u^{(k)}}_{P,2}\le C_1\bigl(\norm{f_w^{(-k)}-g_0}_{P,2}+\norm{a_v^{(-k)}-\alpha_0}_{P,2}\bigr),
\qquad
\norm{h_u^{(k)}}_\infty\le C_2,
\]
\[
\bigl|h_u^{(k)}-h_{u'}^{(k)}\bigr|\le C_2\bigl(|w-w'|_1+|v-v'|_1\bigr)
\qquad\text{for }u=(w,v),\ u'=(w',v').
\]
Let $N$ be the $1/n_k$-net of $\Delta_M$ above and $N_k:=\{u'\in N\times N:\ |u'-u|_1\le2/n_k\text{ for some }u\in\mathcal U_k\}$.
Every $u'\in N_k$ has errors at most $R_g+2B/n_k$ and $R_\alpha+2B/n_k$, so $\norm{h_{u'}^{(k)}}_{P,2}\le\sigma_n:=C_1(R_g+R_\alpha+4B/n_k)$.
Hence
\[
\sup_{u\in\mathcal U_k}\bigl|\Gn^{(k)}h_u^{(k)}\bigr|
\le\max_{u'\in N_k}\bigl|\Gn^{(k)}h_{u'}^{(k)}\bigr|+\frac{8C_2}{\sqrt{n_k}},
\]
and \eqref{eq:bernstein-max} with $\log(1+|N_k|)\le2M\log(3n_k)+1$ gives
\[
\E\Bigl[\sup_{u\in\mathcal U_k}\bigl|\Gn^{(k)}h_u^{(k)}\bigr|\Bigm| D_k^c\Bigr]
\le C\Bigl\{\sigma_n\sqrt{M\log n}+C_2\,\frac{M\log n}{\sqrt n}\Bigr\}+\frac{8C_2}{\sqrt{n_k}}.
\]
The first term vanishes by the choice of $C_n$, and the second because $M\log n/\sqrt n\le\sqrt n\,\bar r_{g,n}\bar r_{\alpha,n}\to0$.

\emph{Condition (c).}
On $\mathcal U_k$, $P(h_u^{(k)})^2\le\sigma_n^2\to0$.
The squared family has $L^2(P)$ norms at most $C_2\sigma_n$, sup norms at most $C_2^2$, and Lipschitz constant $2C_2^2$ in $|\cdot|_1$, so the same covering argument gives $\sup_{u\in\mathcal U_k}|\Gn^{(k)}(h_u^{(k)})^2|=O_\Pp(\sqrt{M\log n}+M\log n/\sqrt n)$.
Dividing by $\sqrt{n_k}$ and using $M\log n/n\to0$ gives $\sup_{u\in\mathcal U_k}\Pn^{(k)}(h_u^{(k)})^2=o_\Pp(1)$.

Lemma~\ref{lem:selection-outside} completes the proof.
\end{proof}

\section{Proofs for Section 4}
\label{app:proofs-sec4}

Section~\ref{sec:bounds} builds sensitivity bounds by combining the omitted-information identity with DML inference for a small collection of estimable components.
The proofs follow the same logic as in Section~\ref{app:proofs-sec2}: cross-fitting lets us treat fold-specific nuisance estimates as fixed when analyzing the hold-out folds, and Neyman orthogonality reduces the impact of first-stage estimation to second order.
For completeness, we record the key calculations.

\subsection{Proof of Lemma~\ref{lem:dml-pipeline-bound-components}}

The lemma is the technical core of Section~\ref{sec:bounds}: it shows that the estimable components entering the sensitivity interval admit Neyman-orthogonal, cross-fitted expansions.
Once we have these linear representations, the endpoint theory follows by a routine delta-method step.

\begin{proof}
Throughout this proof we work in the fine-tuned setup of Section~\ref{sec:bounds}, so that the ``short'' objects are $(\theta_s,g_s,\alpha_s)=(\theta_0,g_0,\alpha_0)$.
Let $n_k:=|I_k|$ and define the foldwise nuisance errors
\[
\delta g^{(-k)}:=\hat g^{(-k)}-g_0,
\qquad
\delta\alpha^{(-k)}:=\hat\alpha^{(-k)}-\alpha_0.
\]

\noindent\emph{Step 1: Expansion for $\hat\theta_0$.}
Assumption~\ref{ass:pipeline-bounds-dml} implies the product-rate condition in Assumption~\ref{ass:pipeline-dml}, since
\[
\sqrt{n}\max_k\norm{\delta g^{(-k)}}_{P,2}\,\max_k\norm{\delta\alpha^{(-k)}}_{P,2}
\le
\sqrt{n}\,\max_k\norm{\delta g^{(-k)}}_{P,2}\,\max_k\norm{\delta\alpha^{(-k)}}_{P,2}
=o_\Pp(1)
\]
by \eqref{eq:pipeline-bounds-rate}.
Therefore the expansion argument in the proof of Theorem~\ref{thm:pipeline-dml} yields \eqref{eq:asylin-theta-pipeline}; that argument does not require a positive variance for this component.

\noindent\emph{Step 2: Expansion for $\hat\sigma_0^2$.}
For each fold $k$ define
\[
\Delta_k(O):=\{Y-\hat g^{(-k)}(W)\}^2-\{Y-g_0(W)\}^2.
\]
Using the fold decomposition \eqref{eq:sigmahat-pipeline} and adding and subtracting $P\{(Y-g_0(W))^2\}=\sigma_0^2$ gives
\[
\hat\sigma_0^2-\sigma_0^2
=
(\Pn-P)\{(Y-g_0(W))^2\}+R^{\sigma}_{1n}+R^{\sigma}_{2n},
\]
where
\begin{align*}
R^{\sigma}_{1n}
&:=
\sum_{k=1}^K\frac{n_k}{n}\,(\Pn^{(k)}-P)\Delta_k,
\\
R^{\sigma}_{2n}
&:=
\sum_{k=1}^K\frac{n_k}{n}\,P\Delta_k.
\end{align*}

We now bound $R^{\sigma}_{2n}$ and $R^{\sigma}_{1n}$.
Expanding $\Delta_k$ gives
\[
\Delta_k
=
\{(Y-g_0)-\delta g^{(-k)}\}^2-(Y-g_0)^2
=
\{\delta g^{(-k)}\}^2-2\,\delta g^{(-k)}(Y-g_0).
\]

Because $g_0$ is the $L^2(P)$ projection of $Y$ onto $\mathcal H$ (Definition~\ref{def:pipeline-target}), the residual $Y-g_0(W)$ is orthogonal to every element of $\mathcal H$.
Conditional on the training sample, both $\hat g^{(-k)}$ and $g_0$ lie in $\mathcal H$ (or its $L^2(P)$ closure), hence $\delta g^{(-k)}\in\mathcal H$ and
\[
P\bigl[\delta g^{(-k)}(Y-g_0)\bigr]=0.
\]
Therefore $P\Delta_k=P\bigl[\{\delta g^{(-k)}\}^2\bigr]=\norm{\delta g^{(-k)}}_{P,2}^2$, so
\[
|R^{\sigma}_{2n}|
\le
\max_{1\le k\le K}\norm{\delta g^{(-k)}}_{P,2}^2,
\qquad
\sqrt{n}|R^{\sigma}_{2n}|
\le
\sqrt{n}\,\max_k\norm{\delta g^{(-k)}}_{P,2}^2=o_\Pp(1)
\]
by \eqref{eq:pipeline-bounds-rate}.

For the empirical process term, conditional on the training sample $\{O_i:i\in I_k^c\}$ the function $\Delta_k$ is fixed and the observations in $I_k$ are i.i.d.\ draws from $P$.
The same conditional Chebyshev argument as in Step~3 of the proof of Theorem~\ref{thm:dml-expansion} yields
\[
\sqrt{n}\,|R^{\sigma}_{1n}|=O_\Pp\!\Bigl(\max_{1\le k\le K}\norm{\Delta_k}_{P,2}\Bigr).
\]
Using H\"older,
\[
\norm{\Delta_k}_{P,2}
\le
\norm{\{\delta g^{(-k)}\}^2}_{P,2}
+2\norm{\delta g^{(-k)}(Y-g_0)}_{P,2}
\le
\norm{\delta g^{(-k)}}_{P,4}^2
+2\norm{\delta g^{(-k)}}_{P,4}\,\norm{Y-g_0}_{P,4}.
\]
Assumption~\ref{ass:pipeline-bounds-dml} gives $Y,g_0\in L^q(P)$ for $q>4$, so $\norm{Y-g_0}_{P,4}<\infty$.
The same assumption and \eqref{eq:pipeline-bounds-rate} give
\[
\max_k\norm{\delta g^{(-k)}}_{P,2}=o_\Pp(1),
\qquad
\max_k\norm{\hat g^{(-k)}}_{P,q}=O_\Pp(1).
\]
Interpolation as in Appendix~\ref{app:proofs-sec2} gives $L^4(P)$ consistency of $\delta g^{(-k)}$ uniformly over folds. The displayed bound on $\norm{\Delta_k}_{P,2}$ therefore tends to zero, implying $\sqrt n R^{\sigma}_{1n}=o_\Pp(1)$.

Combining the pieces gives
\[
\sqrt{n}(\hat\sigma_0^2-\sigma_0^2)
=
\Gn\bigl(\{Y-g_0(W)\}^2\bigr)+o_\Pp(1)
=
\frac{1}{\sqrt{n}}\sum_{i=1}^n\psi_{\sigma_0^2}(O_i;\sigma_0^2,g_0)+o_\Pp(1),
\]
which is \eqref{eq:asylin-sigma-pipeline}.

\noindent\emph{Step 3: Expansion for $\hat\nu_0^2$.}
For each fold $k$ define
\[
\widetilde\Delta_k(O)
:=
\Bigl\{2m(O,\hat\alpha^{(-k)})-\hat\alpha^{(-k)}(W)^2\Bigr\}
-
\Bigl\{2m(O,\alpha_0)-\alpha_0(W)^2\Bigr\}.
\]
Using \eqref{eq:nuhat-pipeline} and adding and subtracting $P\{2m(O,\alpha_0)-\alpha_0(W)^2\}=\nu_0^2$ yields
\[
\hat\nu_0^2-\nu_0^2
=
(\Pn-P)\Bigl\{2m(O,\alpha_0)-\alpha_0(W)^2\Bigr\}+R^{\nu}_{1n}+R^{\nu}_{2n},
\]
where
\begin{align*}
R^{\nu}_{1n}
&:=
\sum_{k=1}^K\frac{n_k}{n}\,(\Pn^{(k)}-P)\widetilde\Delta_k,
\\
R^{\nu}_{2n}
&:=
\sum_{k=1}^K\frac{n_k}{n}\,P\widetilde\Delta_k.
\end{align*}

By linearity of $m$ and the definition of $\widetilde\Delta_k$,
\[
\widetilde\Delta_k
=
2m(O,\delta\alpha^{(-k)})-2\alpha_0(W)\,\delta\alpha^{(-k)}(W)-\{\delta\alpha^{(-k)}(W)\}^2.
\]
Conditional on the training sample, $\delta\alpha^{(-k)}\in\mathcal H$.
Proposition~\ref{prop:pipeline-mixed-bias}(i) therefore implies
\[
P[m(O,\delta\alpha^{(-k)})]
=
P[\delta\alpha^{(-k)}(W)\alpha_0(W)],
\]
so the first two terms cancel in expectation and
\[
P\widetilde\Delta_k=-P\bigl[\{\delta\alpha^{(-k)}\}^2\bigr]=-\norm{\delta\alpha^{(-k)}}_{P,2}^2.
\]
Thus
\[
|R^{\nu}_{2n}|
\le
\max_{1\le k\le K}\norm{\delta\alpha^{(-k)}}_{P,2}^2,
\qquad
\sqrt{n}|R^{\nu}_{2n}|=o_\Pp(1)
\]
by \eqref{eq:pipeline-bounds-rate}.

For $R^{\nu}_{1n}$, the same conditional Chebyshev argument yields
\[
\sqrt{n}|R^{\nu}_{1n}|=O_\Pp\!\Bigl(\max_{1\le k\le K}\norm{\widetilde\Delta_k}_{P,2}\Bigr).
\]
Using the $L^2$-boundedness of $m$ (Lemma~\ref{lem:score-conv}) and H\"older,
\[
\begin{aligned}
\norm{\widetilde\Delta_k}_{P,2}
&\le
2\norm{m(O,\delta\alpha^{(-k)})}_{P,2}+2\norm{\alpha_0\,\delta\alpha^{(-k)}}_{P,2}+\norm{\{\delta\alpha^{(-k)}\}^2}_{P,2}
\\
&\le
2C_m\norm{\delta\alpha^{(-k)}}_{P,2}+2\norm{\alpha_0}_{P,4}\,\norm{\delta\alpha^{(-k)}}_{P,4}+\norm{\delta\alpha^{(-k)}}_{P,4}^2.
\end{aligned}
\]
Assumption~\ref{ass:pipeline-bounds-dml} gives $\alpha_0\in L^q(P)$ with $q>4$, hence $\norm{\alpha_0}_{P,4}<\infty$.
The same interpolation argument as in Step~2 yields $\max_k\norm{\delta\alpha^{(-k)}}_{P,4}=o_\Pp(1)$.
Therefore $\max_k\norm{\widetilde\Delta_k}_{P,2}=o_\Pp(1)$ and $\sqrt{n}R^{\nu}_{1n}=o_\Pp(1)$.

Combining the pieces gives
\[
\sqrt{n}(\hat\nu_0^2-\nu_0^2)
=
\Gn\Bigl(2m(O,\alpha_0)-\alpha_0(W)^2\Bigr)+o_\Pp(1)
=
\frac{1}{\sqrt{n}}\sum_{i=1}^n\psi_{\nu_0^2}(O_i;\nu_0^2,\alpha_0)+o_\Pp(1),
\]
which is \eqref{eq:asylin-nu-pipeline}.

\noindent\emph{Step 4: Joint asymptotic normality and covariance estimation.}
Stacking \eqref{eq:asylin-theta-pipeline}--\eqref{eq:asylin-nu-pipeline} yields
\[
\sqrt{n}
\begin{pmatrix}
\hat\theta_0-\theta_0\\
\hat\sigma_0^2-\sigma_0^2\\
\hat\nu_0^2-\nu_0^2
\end{pmatrix}
=
\frac{1}{\sqrt{n}}\sum_{i=1}^n
\begin{pmatrix}
\psi_{\theta_0}(O_i;\theta_0,g_0,\alpha_0)\\
\psi_{\sigma_0^2}(O_i;\sigma_0^2,g_0)\\
\psi_{\nu_0^2}(O_i;\nu_0^2,\alpha_0)
\end{pmatrix}
+o_\Pp(1).
\]
The summands are i.i.d.\ with finite second moments, so the multivariate CLT gives joint asymptotic normality.

For covariance estimation, define the (centered) score vector
\[
\Psi(O)
:=
\begin{pmatrix}
\psi_{\theta_0}(O;\theta_0,g_0,\alpha_0)\\
\psi_{\sigma_0^2}(O;\sigma_0^2,g_0)\\
\psi_{\nu_0^2}(O;\nu_0^2,\alpha_0)
\end{pmatrix},
\qquad
\Sigma:=\E\bigl[\Psi(O)\Psi(O)^{\top}\bigr].
\]
For each observation $i\in I_k$, define the cross-fitted plug-in version
\[
\hat\Psi_i
:=
\begin{pmatrix}
\psi_{\theta_0}(O_i;\hat\theta_0,\hat g^{(-k)},\hat\alpha^{(-k)})\\
\psi_{\sigma_0^2}(O_i;\hat\sigma_0^2,\hat g^{(-k)})\\
\psi_{\nu_0^2}(O_i;\hat\nu_0^2,\hat\alpha^{(-k)})
\end{pmatrix}.
\]
By the definitions of $(\hat\theta_0,\hat\sigma_0^2,\hat\nu_0^2)$, each component is empirically centered, so $n^{-1}\sum_{i=1}^n\hat\Psi_i=0$.
We therefore take as covariance estimator
\[
\hat\Sigma:=\frac{1}{n}\sum_{i=1}^n \hat\Psi_i\hat\Psi_i^{\top}.
\]

We show that $\hat\Sigma\overset{p}{\to}\Sigma$.
Decompose
\[
\hat\Sigma-\Sigma
=
\underbrace{\frac{1}{n}\sum_{i=1}^n\Bigl(\hat\Psi_i\hat\Psi_i^{\top}-\Psi_i\Psi_i^{\top}\Bigr)}_{A_n}
+
\underbrace{\Bigl\{\frac{1}{n}\sum_{i=1}^n\Psi_i\Psi_i^{\top}-\Sigma\Bigr\}}_{B_n},
\qquad
\Psi_i:=\Psi(O_i).
\]
Since $\E\norm{\Psi(O)}^2<\infty$ under Assumption~\ref{ass:pipeline-bounds-dml}, the (matrix) law of large numbers gives $B_n\overset{p}{\to}0$.

For $A_n$, note the identity
\[
\hat\Psi_i\hat\Psi_i^{\top}-\Psi_i\Psi_i^{\top}
=
(\hat\Psi_i-\Psi_i)\hat\Psi_i^{\top}
+
\Psi_i(\hat\Psi_i-\Psi_i)^{\top},
\]
hence, for the Frobenius norm $\norm{\cdot}_F$,
\[
\norm{\hat\Psi_i\hat\Psi_i^{\top}-\Psi_i\Psi_i^{\top}}_F
\le
\norm{\hat\Psi_i-\Psi_i}\,\bigl(\norm{\hat\Psi_i}+\norm{\Psi_i}\bigr).
\]
By Cauchy--Schwarz,
\begin{equation}
\norm{A_n}_F
\le
\Bigl(\frac{1}{n}\sum_{i=1}^n\norm{\hat\Psi_i-\Psi_i}^2\Bigr)^{1/2}
\Bigl(\frac{1}{n}\sum_{i=1}^n(\norm{\hat\Psi_i}+\norm{\Psi_i})^2\Bigr)^{1/2}.
\label{eq:cov-bound}
\end{equation}

We now control the two factors in \eqref{eq:cov-bound}. Define the training-measurable, true-centered score
\[
\hat\Psi^{(-k)}(O):=
\begin{pmatrix}
\psi_{\theta_0}(O;\theta_0,\hat g^{(-k)},\hat\alpha^{(-k)})\\
\psi_{\sigma_0^2}(O;\sigma_0^2,\hat g^{(-k)})\\
\psi_{\nu_0^2}(O;\nu_0^2,\hat\alpha^{(-k)})
\end{pmatrix}.
\]
We first show $\max_k\norm{\hat\Psi^{(-k)}-\Psi}_{P,2}=o_\Pp(1)$.
For the first component, bounded linearity and H\"older give
\begin{align*}
&\norm{\psi_{\theta_0}(\cdot;\theta_0,\hat g^{(-k)},\hat\alpha^{(-k)})
-\psi_{\theta_0}(\cdot;\theta_0,g_0,\alpha_0)}_{P,2}\\
&\quad\le C_m\norm{\delta g^{(-k)}}_{P,2}
+\norm{\delta\alpha^{(-k)}}_{P,4}\norm{Y-g_0}_{P,4}\\
&\qquad+\norm{\alpha_0}_{P,4}\norm{\delta g^{(-k)}}_{P,4}
+\norm{\delta\alpha^{(-k)}}_{P,4}\norm{\delta g^{(-k)}}_{P,4}
=o_\Pp(1).
\end{align*}
Here interpolation between $L^2(P)$ consistency and the $L^q(P)$ bounds gives $L^4(P)$ consistency uniformly over folds. The second and third component differences are $\Delta_k$ and $\widetilde\Delta_k$ from Steps~2--3, whose $L^2(P)$ norms also tend to zero.
Since these true-centered differences are fixed conditional on the training sample, conditional Markov's inequality gives
\[
\max_k\Pn^{(k)}\norm{\hat\Psi^{(-k)}-\Psi}^2=o_\Pp(1).
\]
For $i\in I_k$, the vector $\hat\Psi_i$ differs from $\hat\Psi^{(-k)}(O_i)$ by the constant centering vector
\[
-\begin{pmatrix}
\hat\theta_0-\theta_0\\
\hat\sigma_0^2-\sigma_0^2\\
\hat\nu_0^2-\nu_0^2
\end{pmatrix},
\]
which is $O_\Pp(n^{-1/2})$ by Steps~1--3. Thus averaging across folds yields
\[
\frac1n\sum_{i=1}^n\norm{\hat\Psi_i-\Psi_i}^2=o_\Pp(1).
\]
Separating this final centering step avoids conditioning on full-sample estimates as though they were training-measurable.

Second, since
\[
\frac{1}{n}\sum_{i=1}^n\norm{\hat\Psi_i}^2
\le
2\frac{1}{n}\sum_{i=1}^n\norm{\Psi_i}^2
+2\frac{1}{n}\sum_{i=1}^n\norm{\hat\Psi_i-\Psi_i}^2,
\]
the second factor in \eqref{eq:cov-bound} is $O_\Pp(1)$.
Hence \eqref{eq:cov-bound} implies $\norm{A_n}_F=o_\Pp(1)$.
We conclude that $\hat\Sigma\overset{p}{\to}\Sigma$.
\end{proof}

\subsection{Proof of Theorem~\ref{thm:bounds-pipeline}}\vspace{-.2in}\begin{proof}
Fix $c\ge 0$.
Write $S_0=(\sigma_0^2\nu_0^2)^{1/2}$ and recall $\hat\theta_0^{\pm}(c)=\hat\theta_0\pm c\hat S_0$ with $\hat S_0=(\hat\sigma_0^2\hat\nu_0^2)^{1/2}$.

\noindent\emph{Step 1: A first-order expansion of $\hat S_0$.}
Because $S_0>0$, the map $(u,v)\mapsto (uv)^{1/2}$ is continuously differentiable in a neighborhood of $(\sigma_0^2,\nu_0^2)$.
A Taylor expansion yields
\begin{equation}
\hat S_0-S_0
=
\frac{1}{2S_0}\Bigl\{\nu_0^2(\hat\sigma_0^2-\sigma_0^2)+\sigma_0^2(\hat\nu_0^2-\nu_0^2)\Bigr\}
+r_n,
\label{eq:delta-S0}
\end{equation}
where $r_n=o_\Pp(n^{-1/2})$ by the delta method, since Lemma~\ref{lem:dml-pipeline-bound-components} gives root-$n$ consistency of $(\hat\sigma_0^2,\hat\nu_0^2)$.

\noindent\emph{Step 2: Combine with the linear representations for the components.}
Using \eqref{eq:delta-S0} and Lemma~\ref{lem:dml-pipeline-bound-components},
\begin{align*}
\sqrt{n}\bigl(\hat\theta_0^{\pm}(c)-\theta_0^{\pm}(c)\bigr)
&=
\sqrt{n}(\hat\theta_0-\theta_0)\pm c\sqrt{n}(\hat S_0-S_0)
\\
&=
\frac{1}{\sqrt{n}}\sum_{i=1}^n\psi_{\theta_0}(O_i;\theta_0,g_0,\alpha_0)
\\
&\quad\pm
\frac{c}{2S_0}\frac{1}{\sqrt{n}}\sum_{i=1}^n
\Bigl\{\nu_0^2\,\psi_{\sigma_0^2}(O_i;\sigma_0^2,g_0)+\sigma_0^2\,\psi_{\nu_0^2}(O_i;\nu_0^2,\alpha_0)\Bigr\}
+o_\Pp(1).
\end{align*}
This is exactly \eqref{eq:asylin-bounds-pipeline} with influence function \eqref{eq:if-bounds-pipeline}.
The asymptotic normality statement follows from the CLT applied to the i.i.d.\ summands $\varphi_{0,\pm}(O_i)$.

\noindent\emph{Step 3: Variance estimation.}
For each $i\in I_k$, define the plug-in influence value
\begin{align*}
\hat\varphi_{0,\pm,i}
:=
&\;\psi_{\theta_0}\!\bigl(O_i;\hat\theta_0,\hat g^{(-k)},\hat\alpha^{(-k)}\bigr)
\\
&\;\pm
\frac{c}{2\hat S_0}
\Bigl\{
\hat\nu_0^2\,\psi_{\sigma_0^2}\!\bigl(O_i;\hat\sigma_0^2,\hat g^{(-k)}\bigr)
+
\hat\sigma_0^2\,\psi_{\nu_0^2}\!\bigl(O_i;\hat\nu_0^2,\hat\alpha^{(-k)}\bigr)
\Bigr\},
\end{align*}
where $\hat S_0=(\hat\sigma_0^2\hat\nu_0^2)^{1/2}$.
Let
\[
\hat V_{0,\pm}
:=
\frac{1}{n}\sum_{i=1}^n
\bigl(\hat\varphi_{0,\pm,i}-\bar{\hat\varphi}_{0,\pm}\bigr)^2,
\qquad
\bar{\hat\varphi}_{0,\pm}:=\frac{1}{n}\sum_{i=1}^n\hat\varphi_{0,\pm,i}.
\]
Because each of the component scores is empirically centered and the multipliers are constant in $i$, we have $\bar{\hat\varphi}_{0,\pm}=0$ exactly; we keep the centering in the definition to match standard software output.

Since $\E[\varphi_{0,\pm}(O)]=0$, the asymptotic variance in \eqref{eq:asylin-bounds-pipeline} equals $\E[\varphi_{0,\pm}(O)^2]$.
We show that $\hat V_{0,\pm}\overset{p}{\to}\E[\varphi_{0,\pm}(O)^2]$.

Write $\varphi_i:=\varphi_{0,\pm}(O_i)$ and $\hat\varphi_i:=\hat\varphi_{0,\pm,i}$ for brevity.
Using the triangle inequality,
\[
\bigl|\hat V_{0,\pm}-\E[\varphi_{0,\pm}(O)^2]\bigr|
\le
\biggl|\frac{1}{n}\sum_{i=1}^n\bigl(\hat\varphi_i^2-\varphi_i^2\bigr)\biggr|
+
\biggl|\frac{1}{n}\sum_{i=1}^n\varphi_i^2-\E[\varphi_{0,\pm}(O)^2]\biggr|.
\]
The second term converges to zero by the law of large numbers, since $\varphi_{0,\pm}(O)\in L^2(P)$ under Assumption~\ref{ass:pipeline-bounds-dml}.
For the first term, use $a^2-b^2=(a-b)(a+b)$ and Cauchy--Schwarz to obtain
\begin{equation}
\frac{1}{n}\sum_{i=1}^n\bigl|\hat\varphi_i^2-\varphi_i^2\bigr|
\le
\Bigl(\frac{1}{n}\sum_{i=1}^n(\hat\varphi_i-\varphi_i)^2\Bigr)^{1/2}
\Bigl(\frac{1}{n}\sum_{i=1}^n(\hat\varphi_i+\varphi_i)^2\Bigr)^{1/2}.
\label{eq:var-diff-bound}
\end{equation}
The second factor in \eqref{eq:var-diff-bound} is $O_\Pp(1)$ because
\[
\frac{1}{n}\sum_{i=1}^n\hat\varphi_i^2
\le
2\frac{1}{n}\sum_{i=1}^n\varphi_i^2
+
2\frac{1}{n}\sum_{i=1}^n(\hat\varphi_i-\varphi_i)^2,
\]
and $n^{-1}\sum\varphi_i^2\overset{p}{\to}\E[\varphi_{0,\pm}(O)^2]<\infty$.
Thus it suffices to prove
\begin{equation}
\frac{1}{n}\sum_{i=1}^n(\hat\varphi_i-\varphi_i)^2=o_\Pp(1).
\label{eq:varphi-L2}
\end{equation}

To verify \eqref{eq:varphi-L2}, observe that $\hat\varphi_i$ is obtained from the three component scores in Lemma~\ref{lem:dml-pipeline-bound-components} and from the scalars $(\hat\sigma_0^2,\hat\nu_0^2,\hat S_0)$ through smooth algebraic operations.
Because $S_0>0$ and $\hat S_0\overset{p}{\to}S_0$, we have $\hat S_0^{-1}=O_\Pp(1)$ and $\hat S_0$ is bounded away from zero with probability approaching one.

Write $\Psi(O)$ and $\hat\Psi_i$ as in Step~4 of the proof of Lemma~\ref{lem:dml-pipeline-bound-components}, and let $\Psi_{i,1},\Psi_{i,2},\Psi_{i,3}$ denote its components.
Then $\varphi_i$ is a fixed linear combination of $(\Psi_{i,1},\Psi_{i,2},\Psi_{i,3})$ with coefficients depending on $(\sigma_0^2,\nu_0^2,S_0)$, and $\hat\varphi_i$ is the same combination with $(\Psi_{i,j},\sigma_0^2,\nu_0^2,S_0)$ replaced by $(\hat\Psi_{i,j},\hat\sigma_0^2,\hat\nu_0^2,\hat S_0)$.
Using the triangle inequality and the bound $\hat S_0^{-1}=O_\Pp(1)$, we can write
\[
|\hat\varphi_i-\varphi_i|
\le
|\hat\Psi_{i,1}-\Psi_{i,1}|
+
C\,\sum_{j=2}^3|\hat\Psi_{i,j}-\Psi_{i,j}|
+
C\bigl(|\hat\sigma_0^2-\sigma_0^2|+|\hat\nu_0^2-\nu_0^2|+|\hat S_0-S_0|\bigr)
\bigl(|\Psi_{i,2}|+|\Psi_{i,3}|\bigr)
\]
for a constant $C$ depending only on $(c,\sigma_0^2,\nu_0^2,S_0)$.
Taking squares, averaging, and applying Cauchy--Schwarz shows that the mean square in \eqref{eq:varphi-L2} is controlled by
(i) $n^{-1}\sum_i\norm{\hat\Psi_i-\Psi_i}^2$, which is $o_\Pp(1)$ by Step~4 of Lemma~\ref{lem:dml-pipeline-bound-components}, and
(ii) the root-$n$ consistency of $(\hat\sigma_0^2,\hat\nu_0^2)$ (hence also $\hat S_0$) together with $\E[\Psi_{2}^2+\Psi_{3}^2]<\infty$.
Therefore \eqref{eq:varphi-L2} holds, which in turn implies $\hat V_{0,\pm}\overset{p}{\to}\E[\varphi_{0,\pm}(O)^2]$.
This yields consistent standard errors.
\end{proof}

\end{document}